\newif\ifarxiv 
    \newwrite\bibnotes
    \def\bibnotesext{Notes.bib}
\write\bibnotes{@CONTROL{REVTEX41Control}}
\write\bibnotes{@CONTROL{%
    apsrev41Control,author="08",editor="1",pages="1",title="0",year="0"}}
\write\@auxout{\string\citation{apsrev41Control}}%
\definecolor{mylinkcolor}{rgb}{0,0,0.8} 
\newtheorem{theorem}{Theorem}
\newtheorem{lemma}{Lemma}
\theoremstyle{definition}
\newtheorem{proposition}{Proposition} 
\newtheorem{definition}{Definition}
\newcommand{\ket}[1]{| #1 \rangle}
\newcommand{\bra}[1]{\langle #1 |}
\newcommand{\braket}[2]{\langle #1|#2\rangle}
\newcommand{\ketbra}[2]{|#1\rangle\!\langle#2|}
\newcommand{\id}{\openone}
\begin{document}
\title{Device-independent quantum key distribution with arbitrarily small nonlocality}
\author{Lewis Wooltorton}
    \email{lewis.wooltorton@york.ac.uk}
    \affiliation{Department of Mathematics, University of York, Heslington, York, YO10 5DD, United Kingdom}
    \affiliation{Quantum Engineering Centre for Doctoral Training, H. H. Wills Physics Laboratory and Department of Electrical \& Electronic Engineering, University of Bristol, Bristol BS8 1FD, United Kingdom}
\author{Peter Brown}
    \email{peter.brown@telecom-paris.fr}
    \affiliation{Télécom Paris - LTCI, Inria, Institut Polytechnique de Paris,
19 Place Marguerite Perey, 91120 Palaiseau, France}
\author{Roger Colbeck}
    \email{roger.colbeck@york.ac.uk}
    \affiliation{Department of Mathematics, University of York, Heslington, York, YO10 5DD, United Kingdom}

\date{$24^{\text{th}}$ May 2024}

\begin{abstract}
Device-independent quantum key distribution (DIQKD) allows two users to set up shared cryptographic key without the need to trust the quantum devices used. Doing so requires nonlocal correlations between the users. However, in [Phys.\ Rev.\ Lett.\ {\bf 127}, 050503 (2021)] it was shown that for known protocols nonlocality is not always sufficient, leading to the question of whether there is a fundamental lower bound on the minimum amount of nonlocality needed for any DIQKD implementation. Here we show that no such bound exists, giving schemes that achieve key with correlations arbitrarily close to the local set. Furthermore, some of our constructions achieve the maximum of 1 bit of key per pair of entangled qubits. We achieve this by studying a family of Bell inequalities that constitute all self-tests of the maximally entangled state with a single linear Bell expression. Within this family there exist non-local correlations with the property that one pair of inputs yield outputs arbitrarily close to perfect key. Such correlations exist for a range of Clauser-Horne-Shimony-Holt (CHSH) values, including those arbitrarily close to the classical bound. Finally, we show the existence of quantum correlations that can generate both perfect key and perfect randomness simultaneously, whilst also displaying arbitrarily small CHSH violation. This opens up the possibility of a new class of cryptographic protocol.
\end{abstract}
\maketitle

\ifarxiv\section{Introduction}\label{sec:intro}\else\noindent{\it Introduction.|}\fi
Establishing shared or global randomness between two isolated parties is a task achievable using quantum theory~\cite{BB84,Ekert,Scarani09}, but inaccessible to classical physics without additional assumptions. Quantum key distribution (QKD), for example, can be performed by making measurements on a shared entangled state, and security is derived assuming the devices behave according to a physical model~\cite{pirandola2020advances}. Meeting the practical requirements of a model can be challenging, and mismatches between the model and reality can lead to security problems (see e.g., \cite{GLLSKM}). However, quantum theory allows us to bypass the majority of such mismatch issues: entangled quantum systems can exhibit input-output behaviours that are nonlocal~\cite{Bell_book,BarrettNonlocalResource}, giving rise to device-independent approaches to QKD~\cite{Ekert,BHK,ABGMPS,PABGMS,bcktwo,VV2,ADFRV}. The same can be said about the related task of randomness expansion~\cite{ColbeckThesis,PAMBMMOHLMM,CK2,MS1,MS2,CR_free}.

Given quantum correlations that exhibit \textit{some} nonlocality, how much secure key can be extracted device-independently? To achieve the highest security, we want to find a lower bound on the amount of key conditioned on the observed nonlocality. Such lower bounds have been found in a variety of scenarios, both analytically~\cite{PABGMS,ho2020noisy,Woodhead2021deviceindependent} and numerically~\cite{BrownDeviceIndependent,TanDI,MasiniDI,BrownDeviceIndependent2}.

A related question that has achieved less attention is: for what range of nonlocality is DIQKD possible? It has recently been shown that nonlocality is not a sufficient condition for DIQKD using standard protocols~\cite{Farkas21}. More precisely, \cite{Farkas21} showed that there exist quantum correlations, arising from Werner states~\cite{Werner}, with some nonlocality, for which an upper bound on the secret key rate vanishes. Whilst the result of~\cite{Farkas21} does not encompass all possible protocols, it raises the question of whether there exists a minimum amount of nonlocality needed for any DIQKD implementation. A conclusive proof of existence, or contradiction, for such a bound is currently missing from the literature. 

In this paper, we show that no such bound exists. Contrasting the work of~\cite{Farkas21}, we show one can find quantum correlations with arbitrarily small nonlocality that can be used for DIQKD with a key rate arbitrarily close to 1 bit per pair of shared entangled qubits (we refer to this as near-perfect DIQKD). This complements existing work showing the same holds for global randomness expansion (DIRE)~\cite{AcinRandomnessNonlocality,WBC}, which we expand upon here. We also go one step further: there exist quantum correlations with arbitrarily small nonlocality that can be used for both near-perfect DIQKD and maximum DIRE. To our knowledge this is the first example of such correlations to appear in the literature, and could open up the possibility for a new class of cryptographic protocols. 

Our results are obtained by self-testing~\cite{mayers2004self,McKagueSinglet,YangSelfTest,KaniewskiSelfTest,SupicSelfTest} quantum correlations close to the local boundary. We study a versatile family of bipartite Bell expressions that first appeared in~\cite{Le2023quantumcorrelations}, and encompass those used in the literature to certify secret key~\cite{Woodhead2021deviceindependent} and randomness~\cite{WBC,WBC2}. These expressions are tangent hyperplanes to the boundary of the set of quantum correlators, and constitute all self-tests of the singlet with a single linear Bell expression, when considering two observers with binary inputs and outputs~\cite{Le2023quantumcorrelations,barizien2023}. Moreover, to prove self-testing we reduce the problem to qubits via Jordan's lemma~\cite{Jordan}; a self-contained reduction can be found in the \ifarxiv Appendices, \else Supplemental Material~\cite{supp}, \fi which may be of independent interest.

\ifarxiv\section{Background}\label{sec:background}\else\medskip\noindent{\it Background.|}\fi
We consider the minimal Bell scenario for DIQKD. Let two space-like separated parties, Alice and Bob, each hold a device with inputs $x,y \in \{0,1\}$ and outputs $a,b \in \{0,1\}$. The devices are characterized by the joint distribution $p(ab|xy)$, which must be no signalling.

A quantum strategy refers to a joint state, $\rho_{\tilde{Q}_{A}\tilde{Q}_{B}}$, and sets of observables $\tilde{A}_{x} = \tilde{M}_{0|x} - \tilde{M}_{1|x}$, $\tilde{B}_{y} = \tilde{N}_{0|y} - \tilde{N}_{1|y}$, where $\{\tilde{M}_{a|x}\}_{a}$, $\{ \tilde{N}_{b|y}\}_{b}$ are projective measurements (which can be assumed without loss of generality according to Naimark's dilation theorem~\cite{paulsen_2003}) on the physical Hilbert spaces $\mathcal{H}_{\tilde{Q}_{A}}$ or $\mathcal{H}_{\tilde{Q}_{B}}$ held by Alice and Bob. We consider an adversarial scenario in which Eve holds a purification $\ket{\Psi}_{\tilde{Q}_{A}\tilde{Q}_{B}E}$ of $\rho_{\tilde{Q}_{A}\tilde{Q}_{B}}$ and can set the quantum behaviour of each device (i.e., which measurements each input corresponds to). Eve's aim is to establish nontrivial correlations between the classical register $A$ holding Alice's outcomes and $E$, while remaining undetected. Such correlations will allow Eve to learn information about Alice's raw key when Alice measures e.g., $X=x$; this is described by the post-measurement classical-quantum state $\rho_{AE|X=x} = \sum_{a} \ketbra{a}{a}_{A} \otimes \rho_{E}^{a|x}$, where $\rho_{E}^{a|x}=\mathrm{Tr}_{\tilde{Q}_{A}\tilde{Q}_{B}}[(\tilde{M}_{a|x}\otimes \mathbb{I}_{\tilde{Q}_{B}E})\ketbra{\Psi}{\Psi}]$ is the subnormalized state held by Eve conditioned on Alice getting $a$ when $x$ is measured. The global post-measurement classical-classical-quantum state $\rho_{ABE|X=x,Y=y}$ is defined analogously, and the behaviour is recovered via the Born rule $p(ab|xy) = \mathrm{Tr}_{\tilde{Q}_{A}\tilde{Q}_{B}E}[(\tilde{M}_{a|x}\otimes \tilde{N}_{b|y} \otimes \mathbb{I}_{E})\ketbra{\Psi}{\Psi}]$.

As we are restricting ourselves to binary inputs and outputs, the nonlocality of the resulting joint behaviour can be quantified in terms of its CHSH value\footnote{There are eight CHSH-type inequalities, equivalent to $I_{\mathrm{CHSH}}$ up to relabellings. These are the only facet inequalities separating classical and quantum correlations in this scenario.}, $I_{\mathrm{CHSH}} = \langle A_{0}(B_{0} + B_{1})\rangle + \langle A_{1}(B_{0} - B_{1})\rangle $, where $\langle A_{x}B_{y} \rangle$ are the correlators, $\langle A_{x}B_{y} \rangle = \sum_{ab} (-1)^{a+b}p(ab|xy)$, which equal $\mathrm{Tr}[(\tilde{A}_{x} \otimes \tilde{B}_{y})\rho_{\tilde{Q}_{A}\tilde{Q}_{B}}]$ when the behaviour is quantum. The local and quantum bounds are given by $2$ and $2\sqrt{2}$ respectively, and it is well known that there is a unique quantum state and sets of measurements that achieve the quantum bound, up to local isometries. It is in this sense that the CHSH inequality \textit{self-tests} the corresponding state (which is maximally entangled) and measurements.

We consider DIQKD protocols based on spot-checking with two measurements per party and a single Bell inequality (see, e.g.,~\cite[Section 4.4]{pirandola2020advances} for an example using the CHSH inequality\footnote{Other than the choice of Bell inequality, a key difference in the present work is that the settings used for raw key are among those used for testing; in the protocol of~\cite{pirandola2020advances} Bob needs an additional setting. We elaborate further on this in the \ifarxiv Appendices\else Supplemental Material~\cite{supp}\fi.}). In order to compute the secret key rate the relevant quantities are the conditional von Neumann entropies $H(A|X=x,E)$ and $H(A|X=x,Y=y,B)$, where $X=x$ and $Y=y$ are the inputs used for key generation. The latter entropy, which is independent of Eve's system, captures the cost for Alice and Bob to reconcile their raw keys and can be estimated directly from the statistics. The former captures the randomness in Alice's raw key conditioned on Eve, and must be lower bounded in terms of the observed behaviour $P_{\mathrm{obs}}$, or some functions $f_{i}$ of $P_{\mathrm{obs}}$ (for instance, $f_i$ might be a Bell expression). The asymptotic secret key rate is then bounded by the Devetak-Winter formula~\cite{DW}
\begin{multline}
    r^{\mathrm{key}} \geq \max_{x,y} \Big( \inf \big[ H(A|X=x,E)_{\rho_{AE|X=x}} \big] \\ - H(A|X=x,Y=y,B)_{\rho_{AB|X=x,Y=y}}\Big), \label{eq:rate}
\end{multline}
where the infimum is taken over states and measurements compatible with $f_{i}(P_{\mathrm{obs}})$. Analogously, the global randomness rate is defined by the quantity $r^{\mathrm{global}} = \max_{x,y}\inf H(AB|X=x,Y=y,E)_{\rho_{ABE|X=x,Y=y}}$. The asymptotic rates can serve as a basis for rates with finite statistics using tools such as the entropy accumulation theorem~\cite{DFR, ADFRV,EAT2}.

To achieve a key rate arbitrarily close to $1$ bit per entangled state shared, we consider the family of three-parameter Bell inequalities from~\cite{Le2023quantumcorrelations} whose maximal quantum violation self-tests a unique state and measurements (up to local isometries). In this case we consider a single functional $f = \langle B_{\theta,\phi,\omega} \rangle$ with observed value $\eta$, and denote the rate $R^{\mathrm{key}}_{\theta,\phi,\omega}(\eta)$. Achieving the quantum bound self-tests a pure (in fact, maximally entangled) state, which therefore must be uncorrelated with Eve, allowing us to directly compute the entropy from the observed behaviour. We find $H(A|X=0,E)=1$, and $H(A|X=Y=0,B)=\epsilon$ for any epsilon $\epsilon\in(0,2-(3/4)\log(3)]$, giving a key rate $1-\epsilon$ that tends to 1 as $\epsilon\to0$, while at the same time having a CHSH value arbitrarily close to classical bound. We also use $R^{\mathrm{global}}_{\theta,\phi,\omega}(\eta)$ to denote the randomness rate based on the same functional.

\ifarxiv\section{Methods}\label{sec:methods}\else\medskip\noindent{\it Methods.|}\fi
Our main results are derived from studying the family of self-testing Bell expressions in~\cite{Le2023quantumcorrelations}, also recently reported in~\cite{barizien2023}. We provide a new self-testing proof, and all claims are proven in \ifarxiv Appendices~\ref{app:selfProof} and~\ref{app:3to6}\else the Supplemental Material~\cite{supp}\fi.  
\begin{proposition} \label{prop:bellExp}
Let $\theta,\phi,\omega \in \mathbb{R}$.
Define the family of Bell expressions, labelled $\langle B_{\theta,\phi,\omega} \rangle$,
\begin{multline}\label{eq:fam}
    \cos(\theta + \phi)\cos(\theta+\omega)\, \big \langle A_{0}\big( \cos \omega \, B_{0} - \cos \phi \, B_{1} \big) \big \rangle + \\ \cos \phi \cos \omega \big \langle  A_{1}\big( -\cos(\theta + \omega) \, B_{0} + \cos (\theta + \phi) \, B_{1} \big) \big \rangle. 
\end{multline}
Then the following hold: (i) The local bounds are given by $\pm \eta^{\mathrm{L}}_{\theta,\phi,\omega}$, where $\eta^{\mathrm{L}}_{\theta,\phi,\omega}$ is defined as
\begin{multline}
    \max_{\pm} \Big\{\left| \cos(\theta+\omega)\cos(\omega)\big( \cos(\theta+\phi)\pm\cos(\phi) \big)\right|+\\ \left| \cos(\theta+\phi)\cos(\phi)\big( \cos(\theta+\omega)\pm\cos(\omega) \big)\right| \Big\}.
\end{multline}
(ii)  If
\begin{equation}\label{cond}
\cos(\theta+\phi)\cos(\phi)\cos(\theta+\omega)\cos(\omega)<0,
\end{equation}
then the quantum bounds are given by $\pm \eta^{\mathrm{Q}}_{\theta,\phi,\omega}$, where \begin{equation}
\eta^{\mathrm{Q}}_{\theta,\phi,\omega} = \sin(\theta)\sin(\omega-\phi)\sin(\theta+\omega+\phi)\,.
\end{equation}

\vspace{0.2cm}

\noindent (iii) $\left|\eta^{\mathrm{Q}}_{\theta,\phi,\omega}\right|> \eta^{\mathrm{L}}_{\theta,\phi,\omega}\iff$ \eqref{cond} holds.

\vspace{0.2cm}

\noindent (iv)
If~\eqref{cond} holds then up to local isometries there is a unique strategy that achieves $\langle B_{\theta,\phi,\omega} \rangle = \eta_{\theta,\phi,\omega}^{\mathrm{Q}}$: 
    \begin{gather}
        \rho_{Q_{A}Q_{B}} = \ketbra{\psi}{\psi}, \ \mathrm{where} \ \ket{\psi} = \frac{\ket{00} + i\ket{11}}{\sqrt{2}}, \nonumber \\
        A_{0} = \sigma_{X}, \
        A_{1} = \cos \theta \, \sigma_{X} + \sin \theta \, \sigma_{Y}, \label{eq:strat} \\
        B_{0} = \cos \phi \, \sigma_{X} + \sin \phi \, \sigma_{Y}, \
        B_{1} = \cos \omega \, \sigma_{X} + \sin \omega \, \sigma_{Y}. \nonumber
    \end{gather}

\end{proposition}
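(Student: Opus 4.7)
My plan is to tackle the four parts in turn.

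For (i), \eqref{eq:fam} is multilinear in the outputs so its extremal local value is attained at a deterministic strategy with $A_x,B_y\in\{\pm1\}$. I group the distributed form of \eqref{eq:fam} as $B_0\,L_0(A_0,A_1)+B_1\,L_1(A_0,A_1)$, where $L_0,L_1$ are real-coefficient combinations of $A_0$ and $A_1$ obtained by collecting terms. Using $|B_y|=1$ and the triangle inequality reduces the task to maximising $|L_0|+|L_1|$, which depends only on the relative sign $s=A_0A_1\in\{\pm1\}$; the two choices of $s$ produce exactly the two terms inside the $\max_\pm$ defining $\eta^{\mathrm{L}}_{\theta,\phi,\omega}$, and the sign-reversal symmetry gives the matching lower bound.

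For (ii) and (iv), I reduce to qubits via Jordan's lemma. By Naimark I may assume projective measurements, so $A_x,B_y$ are dichotomic $\pm1$ observables; Jordan's lemma then simultaneously block-diagonalises $\{A_0,A_1\}$ into blocks of dimension at most $2$, and independently $\{B_0,B_1\}$. Since the Bell operator is block-diagonal in the induced joint decomposition, $\langle B_{\theta,\phi,\omega}\rangle$ is a convex combination of block expectations. On each $\mathbb{C}^2\otimes\mathbb{C}^2$ block I choose bases so that $A_0=\sigma_X$, $A_1=\cos\alpha\,\sigma_X+\sin\alpha\,\sigma_Y$ and $B_y=\cos\beta_y\,\sigma_X+\sin\beta_y\,\sigma_Y$, and Schmidt-decompose the pure block state as $\cos\chi\,|00\rangle+e^{i\delta}\sin\chi\,|11\rangle$, absorbing $\delta$ into Alice's frame by a rotation about $\hat z$. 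A direct computation then yields $\langle A_x B_y\rangle=\sin(2\chi)\cos(a_x+\beta_y)$ with $a_0=0,\,a_1=\alpha$, so the Bell value factorises as $\sin(2\chi)\,g(\alpha,\beta_0,\beta_1)$. Stationary analysis of $g$ combined with the product-to-sum identity $\cos\phi\cos\omega-\cos(\theta+\phi)\cos(\theta+\omega)=\sin\theta\sin(\theta+\phi+\omega)$ pins the maximum to $\eta^{\mathrm{Q}}_{\theta,\phi,\omega}$, attained at $\chi=\pi/4$ together with the angles of \eqref{eq:strat}. This gives (ii). For (iv), the maximiser is unique modulo the residual basis freedom, which is exactly the local-isometry ambiguity, and $\sin(2\chi)<1$ strictly reduces the Bell value, so achieving $\eta^{\mathrm{Q}}_{\theta,\phi,\omega}$ forces the state to be supported on a single $2\otimes 2$ block realising the claimed strategy.

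For (iii), I work purely trigonometrically. Using the identity above, write $\eta^{\mathrm{Q}}_{\theta,\phi,\omega}=\sin(\omega-\phi)\bigl[\cos\phi\cos\omega-\cos(\theta+\phi)\cos(\theta+\omega)\bigr]$, and factor each candidate inside the $\max_\pm$ defining $\eta^{\mathrm{L}}_{\theta,\phi,\omega}$ in analogous form. A sign analysis on $\cos\phi,\cos\omega,\cos(\theta+\phi),\cos(\theta+\omega)$ then shows that $|\eta^{\mathrm{Q}}_{\theta,\phi,\omega}|>\eta^{\mathrm{L}}_{\theta,\phi,\omega}$ iff an odd number of these four cosines are negative, which is exactly condition \eqref{cond}.

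The main obstacle will be the qubit optimisation underpinning (ii) and (iv): verifying that the stationary point is a genuine global maximum rather than a mere critical value, and extracting enough rigidity from the equality conditions to conclude uniqueness up to local isometries. The product-to-sum identity linking the quantum bound to the measurement angles is the algebraic enabler throughout, and a careful accounting of which basis choices on each side belong to the local-isometry freedom is what ultimately converts the qubit uniqueness into the self-testing statement.
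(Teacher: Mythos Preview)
Your approach to (i) matches the paper's. For (ii)--(iv), however, there are two genuine gaps.

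\textbf{State parameterisation within a qubit block.} After using local unitaries to place $A_0,A_1$ (and independently $B_0,B_1$) in the $X$--$Y$ plane, the block state is a \emph{general} two-qubit state; you cannot additionally assume its Schmidt vectors are $|00\rangle,|11\rangle$, because the Schmidt basis is fixed by the state and need not coincide with the basis already pinned down by the measurements. Your formula $\langle A_xB_y\rangle=\sin(2\chi)\cos(a_x+\beta_y)$ thus covers only a two-parameter slice of the four-parameter family of correlators $\langle\sigma_i\otimes\sigma_j\rangle_{i,j\in\{X,Y\}}$, so neither the bound in (ii) nor the rigidity in (iv) follows from optimising over it. The paper avoids this by working in the $X$--$Z$ plane (real observables), symmetrising the state under $\sigma_Y\otimes\sigma_Y$ and complex conjugation to reduce to Bell-diagonal states, and then proving the quantum bound via an explicit sum-of-squares decomposition $\bar B_{\theta,\phi,\omega}=c_0R_0^\dagger R_0+c_1R_1^\dagger R_1$ (positivity of the $c_i$ is exactly where \eqref{cond} enters). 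The SOS gives (ii) without any optimisation and simultaneously supplies the operator relations $R_\mu|\psi\rangle=0$ that drive uniqueness.

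\textbf{Lifting qubit uniqueness to self-testing.} Your conclusion that saturation ``forces the state to be supported on a single $2\otimes2$ block'' is incorrect: every diagonal block can achieve the maximum, and the state can have full support across Jordan blocks. What must be shown is that the block-wise local unitaries witnessing qubit uniqueness can be assembled into a single pair $V_A,V_B$ (this requires arguing that the unitary for Alice's $i$th block does not depend on Bob's block index $j$, and vice versa), and that the off-diagonal blocks of $\rho$ are then forced---via a positivity argument---to take the form $\sigma\otimes|\psi\rangle\langle\psi|$. The paper devotes several lemmas to this, and without something equivalent your (iv) does not close.
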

\noindent Special cases of the above family have already found applications in DI cryptography. For example, it contains all bipartite expressions found in~\cite{WBC,WBC2}, which certify maximum global DI randomness. One can also recover the marginal-free subfamily of the tilted CHSH inequalities~\cite{AcinRandomnessNonlocality,BampsPironio} which have found use in DIQKD~\cite{Woodhead2021deviceindependent,Sekatski2021} and robust self-testing of the singlet~\cite{ValcarceSelfTest}. Moreover, it has been shown~\cite{Le2023quantumcorrelations,barizien2023} that the family~\eqref{eq:fam} constitute an infinite family of hyperplanes tangent to the boundary of the quantum set of correlations with uniform marginals, $\mathcal{Q}_{\mathrm{corr}}$, and constitute every self-test of the singlet with a single linear function of the correlators in this scenario. We discuss these connections in more detail in \ifarxiv Appendices~\ref{app:C}--\ref{app:E}\else the Supplemental Material~\cite{supp}\fi. 

\ifarxiv\section{Perfect randomness from arbitrarily small nonlocality}\label{sec:res1}\else\medskip\noindent{\it Perfect randomness from arbitrarily small nonlocality.|}\fi First we consider all strategies that certify maximum global randomness in this scenario using a maximally entangled state, certified by a single Bell expression. This contains the subfamily in~\cite{WBC}, where the randomness versus nonlocality relationship was studied using a one parameter family of Bell expressions; here we review and extend this to a two parameter sub-family of~\eqref{eq:fam}.

\begin{proposition}[\cite{WBC}]
    For any $s\in(2,3\sqrt{3}/2]$, there exists a tuple $(\theta,\phi,\omega)$ satisfying~\eqref{cond}, along with a set of quantum correlations achieving $R^{\mathrm{global}}_{\theta,\phi,\omega}(\eta_{\theta,\phi,\omega}^{\mathrm{Q}}) = 2$ and $I_{\mathrm{CHSH}}=s$. \label{prop:rand}
\end{proposition}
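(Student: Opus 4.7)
The plan is to exploit the self-testing statement of Proposition~\ref{prop:bellExp}(iv): whenever $(\theta,\phi,\omega)$ satisfies~\eqref{cond}, the only strategy attaining $\langle B_{\theta,\phi,\omega}\rangle = \eta^{\mathrm{Q}}_{\theta,\phi,\omega}$ is the pure one in~\eqref{eq:strat}. Since the self-tested state is pure, Eve's register factors out and $H(AB|X{=}x,Y{=}y,E)=H(AB|X{=}x,Y{=}y)$, so $R^{\mathrm{global}}_{\theta,\phi,\omega}(\eta^{\mathrm{Q}})=2$ exactly when $p(ab|xy)$ is uniform for some input pair $(x,y)$. The reduced states on Alice and Bob are maximally mixed, so uniformity collapses to the single scalar requirement $\langle A_{x}B_{y}\rangle = 0$.

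First I would compute the four correlators on $\ket{\psi}=(\ket{00}+i\ket{11})/\sqrt{2}$. Writing $\ket{\psi}=(\id\otimes U)\ket{\Phi^{+}}$ with $U=\mathrm{diag}(1,i)$, one checks $U^{\dagger}\sigma_{X}U=\sigma_{Y}$ and $U^{\dagger}\sigma_{Y}U=\sigma_{X}$, which immediately yields $\langle A_{0}B_{0}\rangle=\sin\phi$, $\langle A_{0}B_{1}\rangle=\sin\omega$, $\langle A_{1}B_{0}\rangle=\sin(\phi-\theta)$, and $\langle A_{1}B_{1}\rangle=\sin(\omega-\theta)$. Picking $(x,y)=(1,1)$ as the key-generating input enforces $\omega=\theta$ and cuts out a two-parameter subfamily of~\eqref{eq:fam} (the extension of \cite{WBC} mentioned in the text). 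On this subfamily $\langle A_{1}B_{1}\rangle$ vanishes automatically, the CHSH value simplifies via sum-to-product to $I_{\mathrm{CHSH}}(\theta,\phi)=4\cos(\theta/2)\sin(\phi/2)\cos((\phi-\theta)/2)$, and condition~\eqref{cond} reduces to $\cos(\theta+\phi)\cos\phi\cos(2\theta)\cos\theta<0$.

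A short critical-point analysis then shows that $|I_{\mathrm{CHSH}}|$ attains its maximum $3\sqrt{3}/2$ on this subfamily at $(\theta,\phi)=(\pi/3,2\pi/3)$, and at this point the product in~\eqref{cond} equals $-1/8$, so the maximiser lies strictly inside the self-testing region. Fixing $\theta=\pi/3$ reduces~\eqref{cond} to $\cos(\pi/3+\phi)\cos\phi>0$, which holds on the open interval $\phi\in(\pi/2,7\pi/6)$; on this interval $I_{\mathrm{CHSH}}(\phi)=2\sqrt{3}\sin(\phi/2)\cos((\phi-\pi/3)/2)$ is continuous, peaks at $3\sqrt{3}/2$ when $\phi=2\pi/3$, and decreases strictly to $\sqrt{3}/2$ as $\phi\to 7\pi/6$. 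The intermediate value theorem then furnishes, for every $s\in(2,3\sqrt{3}/2]$, a $\phi\in[2\pi/3,7\pi/6)$ with $I_{\mathrm{CHSH}}=s$, and the tuple $(\pi/3,\phi,\pi/3)$ satisfies~\eqref{cond} throughout, delivering the claim. The main subtlety, and the reason the closed endpoint $s=3\sqrt{3}/2$ can be included, lies in the choice of input pair: had we instead set $(x,y)=(0,0)$ and therefore $\phi=0$, every maximiser of $|I_{\mathrm{CHSH}}|$ over $(\theta,\omega)$ would sit at a point where~\eqref{cond} degenerates, costing us exactly the endpoint.
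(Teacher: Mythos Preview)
Your high-level strategy is sound and is exactly the paper's: once Proposition~\ref{prop:bellExp}(iv) self-tests the pure strategy~\eqref{eq:strat}, Eve factors out, the marginals are uniform, and $R^{\mathrm{global}}=2$ reduces to a single vanishing correlator. But the correlator computation has a sign error that invalidates everything downstream. With $U=\mathrm{diag}(1,i)$ one finds $U^{\dagger}\sigma_{X}U=-\sigma_{Y}$, not $+\sigma_{Y}$ (your $U^{\dagger}\sigma_{Y}U=\sigma_{X}$ is correct). The actual correlators on~\eqref{eq:strat} are
\[
\langle A_{0}B_{0}\rangle=\sin\phi,\quad
\langle A_{0}B_{1}\rangle=\sin\omega,\quad
\langle A_{1}B_{0}\rangle=\sin(\theta+\phi),\quad
\langle A_{1}B_{1}\rangle=\sin(\theta+\omega),
\]
as you can cross-check against the values used in the proof of Proposition~\ref{prop:key_corr}. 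Consequently your choice $\omega=\theta$ gives $\langle A_{1}B_{1}\rangle=\sin(2\theta)$, not $0$, so $(x,y)=(1,1)$ does \emph{not} certify two bits. At your claimed optimum $(\theta,\phi,\omega)=(\pi/3,2\pi/3,\pi/3)$ the true CHSH value is $\sin(2\pi/3)+\sin(\pi/3)+\sin(\pi)-\sin(2\pi/3)=\sqrt{3}/2<2$, i.e.\ local. Your CHSH formula, the critical-point analysis, and the closing remark about $(x,y)=(0,0)$ all inherit this error and do not hold as written.

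The paper's slice is different and simpler: set $\omega=\pi$ and extract randomness from $(x,y)=(0,1)$, so $\langle A_{0}B_{1}\rangle=\sin\pi=0$ automatically. Then $I_{\mathrm{CHSH}}=\sin\phi+\sin(\theta+\phi)+\sin\theta$, which attains its maximum $3\sqrt{3}/2$ at $\theta=\phi=\pi/3$ (where~\eqref{cond} evaluates to $-1/8<0$), and continuity within the region where~\eqref{cond} holds covers all of $(2,3\sqrt{3}/2]$; cf.\ Fig.~\ref{fig:rand}. If you prefer to keep $(x,y)=(1,1)$, you need $\theta+\omega\in\pi\mathbb{Z}$ (e.g.\ $\omega=\pi-\theta$), after which the CHSH and~\eqref{cond} analyses must be redone from scratch.
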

The subfamily self-tests $\sigma_{X}$ for both Alice and Bob, along with $\ket{\psi} = (\ket{+,y_{+}} + \ket{-,y_{-}})/\sqrt{2}$, where $\ket{\pm} \ (\ket{y_{\pm}})$ are the eigenstates of $\sigma_X$ ($\sigma_Y$). This is achieved by setting $\omega = \pi$, resulting in a two parameter family of Bell expressions certifying uniform randomness when $X=0,Y=1$ and~\eqref{cond} holds: $\cos(\theta + \phi)\cos(\theta)\, \big \langle A_{0}\big( B_{0} + \cos (\phi) \, B_{1} \big) \big \rangle - \cos (\phi) \big \langle  A_{1}\big( \cos(\theta) \, B_{0} + \cos (\theta + \phi) \, B_{1} \big) \big \rangle$. \cref{fig:rand} shows the valid regions of $(\theta,\phi)$ space for maximum randomness certification.

\begin{figure}[h]
\includegraphics[width=8.4cm]{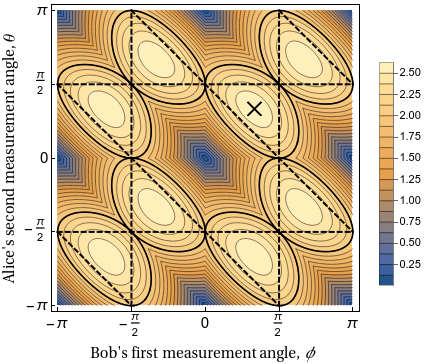}
\centering
\caption{Contour plot of nonlocality, measured using the maximum of the eight CHSH-type inequalities, for the strategies in \cref{eq:strat} with $\omega = \pi$. The points inside the dashed triangles, excluding the boundary, can be used for perfect DIRE with a single linear Bell inequality: they satisfy~\eqref{cond} and have a value in $(2,3\sqrt{3}/2]$ for one of the CHSH-type inequalities, with the maximum of $I_{\mathrm{CHSH}}$ indicated with the black cross at $\theta = \phi = \pi/3$. Approaching $\phi=-\pi/2$ or $\phi=\pi/2$ inside the corresponding region also allows arbitrarily good DIQKD. The black contours indicate $I_{\mathrm{CHSH}} = 2$ for at least one CHSH-type inequality.}
\label{fig:rand}
\end{figure}

\ifarxiv\section{Near perfect key from arbitrarily small nonlocality}\label{sec:res2}\else\medskip\noindent{\it Near perfect key from arbitrarily small nonlocality.|}\fi Next we turn our attention to DIQKD. We consider the key rate achievable using the strategies in \cref{prop:bellExp}, and which CHSH values are compatible. 

\begin{proposition}
    For any $s\in(2,5/2]$, and any $\epsilon\in(0,2-(3/4)\log(3)]$, there exists a tuple $(\theta,\phi,\omega)$ satisfying~\eqref{cond}, along with a set of quantum correlations achieving $R^{\mathrm{key}}_{\theta,\phi,\omega}(\eta_{\theta,\phi,\omega}^{\mathrm{Q}}) = 1- \epsilon$ and $I_{\mathrm{CHSH}}=s$. \label{prop:key}
\end{proposition}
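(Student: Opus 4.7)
The plan is to use the canonical realization of Proposition~\ref{prop:bellExp}(iv), compute the two entropies in the Devetak--Winter bound exactly on it, and then tune $(\theta,\phi,\omega)$ to realize any pair $(\epsilon,s)$ in the stated range. For the first entropy $H(A|X=0,E)$, self-testing does the work: since the observed value is $\eta^{\mathrm{Q}}_{\theta,\phi,\omega}$, Proposition~\ref{prop:bellExp}(iv) forces the strategy, up to local isometries, to consist of $(\ket{00}+i\ket{11})/\sqrt{2}$ in tensor product with Eve. Alice's reduced state is therefore $\id/2$, her $A_0=\sigma_X$ outcome is uniform and independent of $E$, and $H(A|X=0,E)=1$.

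For the error-correction term, I would compute $\langle A_0B_0\rangle$ directly on the canonical strategy. A short calculation on $(\ket{00}+i\ket{11})/\sqrt{2}$ for $XY$-plane observables $A=\cos\alpha\,\sigma_X+\sin\alpha\,\sigma_Y$, $B=\cos\beta\,\sigma_X+\sin\beta\,\sigma_Y$ yields $\langle AB\rangle=\sin(\alpha+\beta)$. In particular $\langle A_0B_0\rangle=\sin\phi$ and both marginals vanish, so $p(a|x{=}0,y{=}0,b)=\tfrac12\bigl(1+(-1)^{a+b}\sin\phi\bigr)$ and $H(A|X{=}0,Y{=}0,B)=h\bigl(\tfrac12+\tfrac12\sin\phi\bigr)$, with $h$ the binary entropy. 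The Devetak--Winter rate is thus $1-h\bigl(\tfrac12+\tfrac12\sin\phi\bigr)$. For any $\epsilon\in(0,h(1/4)]$ I invert $h$ to pick $\sin\phi\in[\tfrac12,1)$ satisfying $h\bigl(\tfrac12+\tfrac12\sin\phi\bigr)=\epsilon$; the identity $h(1/4)=2-\tfrac34\log 3$ shows this is precisely the inversion range, and $\epsilon\to0$ corresponds to $\phi\to\pi/2$ from below.

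With $\phi$ now pinned, the remaining task---and the main obstacle---is to choose $(\theta,\omega)$ so that \eqref{cond} holds strictly and at least one of the eight CHSH-type expressions equals $s$. Each such expression is a smooth function of $(\theta,\omega)$ built from $\sin(\alpha_x+\beta_y)$ with $(\alpha_0,\alpha_1,\beta_0,\beta_1)=(0,\theta,\phi,\omega)$. My approach is to exhibit an explicit one-parameter curve $(\theta(\lambda),\omega(\lambda))$ contained in the open region \eqref{cond}, verify continuity in $\lambda$ of the chosen CHSH expression, and apply the intermediate value theorem to reach any $s\in(2,5/2]$. I expect the endpoint $s=5/2$ to be tight precisely at the boundary case $\sin\phi=\tfrac12$ (i.e.\ $\epsilon=h(1/4)$): as $\phi$ is pushed closer to $\pi/2$ the admissible region collapses near the hyperplane $\cos\phi=0$, so the maximum CHSH attainable by our family at the required $\phi$ shrinks continuously down toward $2$, which is what forces the upper bound on $\epsilon$ in the hypothesis. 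A continuity argument in $\phi$ (or explicit parametric estimates along the chosen curve) then delivers every $s\in(2,5/2]$ throughout the full range $\epsilon\in(0,h(1/4)]$.
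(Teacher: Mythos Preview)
Your computation of the two entropies is correct and matches the paper: self-testing forces $H(A|X{=}0,E)=1$, and the correlator identity $\langle A_x B_y\rangle=\sin(\alpha_x+\beta_y)$ on the canonical state gives $H(A|X{=}0,Y{=}0,B)=H_{\mathrm{bin}}\bigl(\tfrac{1+\sin\phi}{2}\bigr)$. Up to this point you and the paper agree.

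The gap is in the final step, where your intuition about the trade-off is backwards. You write that ``as $\phi$ is pushed closer to $\pi/2$ the admissible region collapses\dots so the maximum CHSH attainable\dots shrinks continuously down toward $2$.'' This is false, and if it were true it would contradict the proposition (which asserts every pair $(s,\epsilon)$ in the full rectangle is achievable). In fact, for \emph{every} $\phi\in[\pi/6,\pi/2)$---equivalently every $\epsilon\in(0,h(1/4)]$---the entire CHSH range $(2,5/2]$ is reachable while~\eqref{cond} holds. For instance, at $\phi$ arbitrarily close to $\pi/2$ the choice $(\theta,\omega)=(\pi/3,5\pi/6)$ gives $I_{\mathrm{CHSH}}\to5/2$ with~\eqref{cond} strictly satisfied (all four cosines are nonzero and their product is negative). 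The upper bound on $\epsilon$ bites at the \emph{opposite} end: with $\omega=5\pi/6$ fixed,~\eqref{cond} forces $\theta>\pi/2-\phi$, and as $\phi$ decreases below $\pi/6$ this lower bound on $\theta$ pushes past the value at which CHSH $=5/2$ is attained.

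Concretely, the paper writes $\phi=\pi/2-\epsilon'$, fixes $\omega=5\pi/6$, restricts $\theta\in(\epsilon',\pi/2)$ to guarantee~\eqref{cond}, and computes $I_{\mathrm{CHSH}}(\theta{=}\pi/3)=\tfrac12\bigl(2+3\cos\epsilon'+\sqrt{3}\sin\epsilon'\bigr)$, which is $\ge 5/2$ exactly for $\epsilon'\le\pi/3$; together with $I_{\mathrm{CHSH}}(\theta{=}2\epsilon')<2+\sqrt{3}\,\epsilon'$, the intermediate value theorem in $\theta$ delivers the whole interval $(2,5/2]$ for each fixed $\epsilon'\in(0,\pi/3]$. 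Your sketch, expecting the obstruction at small $\epsilon$ rather than large $\epsilon$, would send you looking for the wrong curve and the wrong limiting behaviour, and the concluding ``continuity argument in $\phi$'' is not how the two quantifiers actually decouple.
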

This statement is achieved by self-testing $\sigma_{X}$ for Alice and $\cos(\phi)\,\sigma_{X} + \sin(\phi)\,\sigma_{Y}$ for Bob, along with $\ket{\psi}$. One can then take $\phi$ arbitrarily close to $\pi/2$; we find $H(A|X=0,E) = 1$ from self-testing, and $H(A|X=0,Y=0,B) = H_{\mathrm{bin}}[(1+\sin(\phi))/2]:= \epsilon$, where $H_{\mathrm{bin}}$ is the binary entropy function. Hence in the limit $\phi \rightarrow \pi/2$, $\epsilon$ tends to $0$ and we achieve perfect key. Moreover, at this limit point, we can choose $(\theta,\omega)$ such that the CHSH interval $(2,5/2]$ is achieved---see \cref{fig:key} for an illustration.

Interestingly, the limit point violates~\eqref{cond}: at $\phi = \pi/2$ the Bell expression~\eqref{eq:fam} becomes trivial (the local and quantum bounds coincide), and we cannot find a single Bell expression that can certify perfect key. The correlations achieved in this limit (those resulting from the construction~\eqref{eq:strat}) are non-local and were recently studied in~\cite{barizien2023}, where it was shown such points correspond to non-exposed regions of $\mathcal{Q}_{\mathrm{corr}}$; our result shows the implications of this for DIQKD. At least two linear functionals are then required to uniquely identify the correlations, and, indeed, using all four correlators one can verify for various values of $(\theta,\omega)$, the limit point satisfies the self-testing criteria of the singlet given by Wang \textit{et al.}~\cite{Wang_2016}, and a one-parameter subfamily containing the point with $I_{\mathrm{CHSH}}=5/2$ was studied in~\cite[Section~3.4.1]{Chen2023}, including the non-exposed nature of the case with $I_{\mathrm{CHSH}}=5/2$. We can therefore recover another statement similar to \cref{prop:key}. We express this using $r^{\mathrm{key}}_{\theta,\phi,\omega}$, which is the quantity defined by \cref{eq:rate} evaluated with functionals $f_{i}$ constraining all four correlators to the values generated by the strategy in \cref{eq:strat}.
\begin{proposition}
    For any $s\in(2,5/2]$, there exists a tuple $(\theta,\phi,\omega)$, along with a set of quantum correlations achieving $r^{\mathrm{key}}_{\theta,\phi,\omega} = 1$ and $I_{\mathrm{CHSH}}=s$. \label{prop:key_corr}
\end{proposition}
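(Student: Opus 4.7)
The plan is to specialise to the degenerate limit $\phi=\pi/2$ of the strategy in Eq.~\eqref{eq:strat}. Although the single Bell functional $\langle B_{\theta,\pi/2,\omega}\rangle$ becomes trivial there, the quantity $r^{\mathrm{key}}_{\theta,\phi,\omega}$ is defined with the full four-correlator tuple as its constraint, and that is enough to pin everything down. First I would evaluate the reference correlators on $\ket{\psi}$ at $\phi=\pi/2$, where $A_{0}=\sigma_{X}$, $B_{0}=\sigma_{Y}$, $A_{1}=\cos\theta\,\sigma_{X}+\sin\theta\,\sigma_{Y}$, $B_{1}=\cos\omega\,\sigma_{X}+\sin\omega\,\sigma_{Y}$. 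Using the identities $\langle\sigma_{X}\otimes\sigma_{Y}\rangle_{\psi}=\langle\sigma_{Y}\otimes\sigma_{X}\rangle_{\psi}=1$ and $\langle\sigma_{X}\otimes\sigma_{X}\rangle_{\psi}=\langle\sigma_{Y}\otimes\sigma_{Y}\rangle_{\psi}=0$, a short calculation gives $\langle A_{0}B_{0}\rangle=1$, $\langle A_{0}B_{1}\rangle=\sin\omega$, $\langle A_{1}B_{0}\rangle=\cos\theta$, $\langle A_{1}B_{1}\rangle=\sin(\theta+\omega)$, and therefore $I_{\mathrm{CHSH}}=1+\sin\omega+\cos\theta-\sin(\theta+\omega)$. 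Solving $\nabla I_{\mathrm{CHSH}}=0$ identifies the maximum $5/2$ at $(\theta,\omega)=(-\pi/3,\pi/6)$, while the curve $\theta=0$ yields $I_{\mathrm{CHSH}}=2$. Along the explicit path $(\theta,\omega)=(-t\pi/3,t\pi/6)$ one checks $I_{\mathrm{CHSH}}(t)=1+2\sin(t\pi/6)+\cos(t\pi/3)$ increases monotonically on $t\in[0,1]$ from $2$ to $5/2$, so the intermediate value theorem realises every prescribed $s\in(2,5/2]$.

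The core step is self-testing: any quantum strategy on $\tilde{Q}_{A}\tilde{Q}_{B}E$ reproducing this quadruple of correlators must, up to local isometries on Alice and Bob, equal the target strategy tensored with a decoupled state on Eve. This is the principal obstacle, since \cref{prop:bellExp} does not apply at $\phi=\pi/2$ (the Bell bound coincides with the local bound and the limit point is non-exposed). I would appeal to the singlet self-tests of Wang \textit{et al.}~\cite{Wang_2016}, verifying their sufficient conditions at each $(\theta,\omega)$ along the path constructed above, and using the explicit one-parameter subfamily in~\cite[Sec.~3.4.1]{Chen2023} as the template for the endpoint $I_{\mathrm{CHSH}}=5/2$. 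Purity of $\ket{\psi}$ then forces the purification $\ket{\Psi}_{\tilde{Q}_{A}\tilde{Q}_{B}E}$ to factor across the cut between the self-tested qubit pair and Eve, so that Eve's marginal is independent of Alice's outcome.

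With self-testing in place, the entropies are immediate. Alice's marginal on $\tilde{Q}_{A}$ for $X=0$ is maximally mixed and decoupled from $E$, giving $H(A|X=0,E)=1$. The correlator $\langle A_{0}B_{0}\rangle=1$ together with uniform marginals forces Alice's and Bob's outputs on input $(0,0)$ to coincide with probability one, giving $H(A|X=0,Y=0,B)=0$. Substituting into \eqref{eq:rate} yields $r^{\mathrm{key}}_{\theta,\pi/2,\omega}\geq 1$, which is tight because Alice's alphabet is binary. Combining this with the range of CHSH values from the first paragraph completes the proof.
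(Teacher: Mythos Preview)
Your proposal is correct and follows essentially the same approach as the paper: fix $\phi=\pi/2$, invoke the Wang \textit{et al.}\ singlet self-test on the four correlators to decouple Eve, then read off $H(A|X{=}0,E)=1$ and $H(A|X{=}0,Y{=}0,B)=0$ from $\langle A_{0}B_{0}\rangle=1$. The only difference is the specific curve in $(\theta,\omega)$: the paper fixes $\omega=5\pi/6$ and varies $\theta\in(0,\pi/3]$, whereas you use $(\theta,\omega)=(-t\pi/3,t\pi/6)$; both sweep $I_{\mathrm{CHSH}}$ over $(2,5/2]$ and both satisfy the Wang \textit{et al.}\ arcsin condition with $(i,j)=(1,1)$ (on your path, $\arcsin(1)+\arcsin(\sin(t\pi/6))+\arcsin(\cos(t\pi/3))-\arcsin(-\sin(t\pi/6))=\pi$ identically), so there is no substantive divergence.
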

In other words, if we constrain all four correlators rather than the Bell inequality of~\eqref{eq:fam}, then we achieve perfect key directly, rather than in a limit. However, the use of four correlators can have disadvantages in the case of finite statistics because for a fixed number of shared states larger error bars are present when estimating four quantities rather than one, see e.g.,~\cite{BRC,DFR}. 

\begin{figure}[h]
\includegraphics[width=8.4cm]{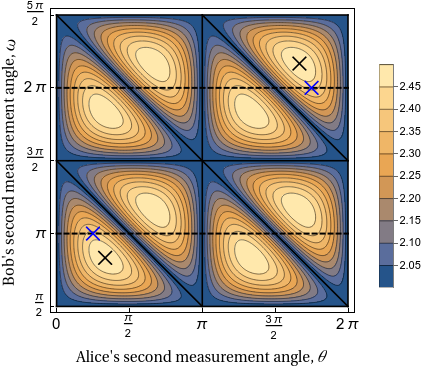}
\centering
\caption{Contour plot of nonlocality, measured using the maximum of the eight CHSH-type inequalities of the strategies in \cref{eq:strat}, at the limit $\phi=\pi/2$.  All points on the graph are limit points of correlations that achieve arbitrarily perfect DIQKD with a single linear Bell inequality, including the contours with CHSH values equal to 2, which are shown as black triangles. The black dashed lines show where perfect DIRE can also be achieved, with the blue crosses denoting the maximum value of $I_{\mathrm{CHSH}}=1+\sqrt{2}$ at $\theta=\pi/4,\omega=\pi$ and $\theta = 7\pi/4, \omega = 2\pi$. The black crosses denote the global maximum of $I_{\mathrm{CHSH}}=5/2$ at $\theta = \pi/3, \omega = 5\pi/6$ and $\theta = 5\pi/3,\omega = 13\pi/6$.}
\label{fig:key}
\end{figure}

\ifarxiv\section{Near perfect key and randomness from arbitrarily small nonlocality}\label{sec:res3}\else\medskip\noindent{\it Near perfect key and randomness from arbitrarily small nonlocality.|}\fi Finally, we consider the possibility of using the same set of quantum correlations to generate perfect key from one input combination, and perfect randomness from another. 
\begin{proposition}
    For any $s\in(2,1+\sqrt{2}]$ and any $\epsilon\in\left(0,H_{\mathrm{bin}}((2+\sqrt{2})/4)\right)$, there exists a tuple $(\theta,\phi,\omega)$ satisfying~\eqref{cond}, along with a set of quantum correlations achieving $R^{\mathrm{global}}_{\theta,\phi,\omega}(\eta_{\theta,\phi,\omega}^{\mathrm{Q}})=2$, $R^{\mathrm{key}}_{\theta,\phi,\omega}(\eta_{\theta,\phi,\omega}^{\mathrm{Q}})=1-\epsilon$ and $I_{\mathrm{CHSH}}=s$.
\end{proposition}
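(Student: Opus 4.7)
The plan is to combine the randomness construction of \cref{prop:rand} with the near-perfect-key construction of \cref{prop:key} within a single self-tested strategy, exploiting the three free parameters $(\theta,\phi,\omega)$ to simultaneously set one of the four correlators $\sin\phi$, $\sin\omega$, $\sin(\theta+\phi)$, $\sin(\theta+\omega)$ to zero (for perfect randomness from the corresponding input pair), another to have absolute value $1-2p$ with $H_{\mathrm{bin}}(p)=\epsilon$ (for key rate $1-\epsilon$), and the full expression $I_{\mathrm{CHSH}}$ to equal $s$. On the self-tested state $\ket{\psi} = (\ket{00}+i\ket{11})/\sqrt{2}$ all four correlators take precisely these sine values and all marginals are uniform, so both conditions follow immediately from \cref{prop:bellExp}(iv) together with the purity of $\ket{\psi}$.

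Concretely, I would first try placing the key on $(X,Y)=(0,0)$ and the randomness on $(X,Y)=(1,0)$: set $\phi=\phi(\epsilon)\in(\pi/4,\pi/2)$ as the unique solution of $H_{\mathrm{bin}}((1+\sin\phi)/2)=\epsilon$ --- well-defined exactly under the hypothesis $\epsilon<H_{\mathrm{bin}}((2+\sqrt{2})/4)$ --- and $\theta=\pi-\phi$ to enforce $\sin(\theta+\phi)=0$. Self-testing and purity then give $H(A|X=0,E)=1$, $H(A|X=0,Y=0,B)=\epsilon$, and $H(AB|X=1,Y=0,E)=2$, so $R^{\mathrm{key}}_{\theta,\phi,\omega}(\eta^{\mathrm{Q}}_{\theta,\phi,\omega}) = 1-\epsilon$ and $R^{\mathrm{global}}_{\theta,\phi,\omega}(\eta^{\mathrm{Q}}_{\theta,\phi,\omega}) = 2$. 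Only $\omega$ remains free, and substituting $\theta=\pi-\phi$ into $I_{\mathrm{CHSH}} = \sin\phi+\sin\omega+\sin(\theta+\phi)-\sin(\theta+\omega)$ and applying sum-to-product identities gives the closed form
\begin{equation*}
    I_{\mathrm{CHSH}}(\omega) = 4\sin(\omega/2)\cos(\phi/2)\cos((\phi-\omega)/2),
\end{equation*}
while \eqref{cond} reduces to $\cos\phi\,\cos(\omega-\phi)\,\cos\omega<0$.

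In the limit $\phi\to\pi/2$ the CHSH formula further simplifies to $1+\sqrt{2}\,\sin(\omega-\pi/4)$, which maps the \eqref{cond}-admissible subinterval of $(\pi/2,\pi)$ continuously onto $(2,1+\sqrt{2}]$ (with maximum at $\omega=3\pi/4$). By joint continuity of $I_{\mathrm{CHSH}}$ and of the \eqref{cond}-constraint in $(\phi,\omega)$, for $\phi$ sufficiently close to $\pi/2$ the image of $\omega\mapsto I_{\mathrm{CHSH}}(\omega)$ on the admissible subinterval still covers any prescribed $s\in(2,1+\sqrt{2}]$, yielding a tuple that realises all three conditions.

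The main obstacle is extending this uniformly across the full $\epsilon$-range: as $\phi$ decreases toward $\pi/4$, the endpoints of the \eqref{cond}-admissible $\omega$-interval deform and the image of $\omega\mapsto I_{\mathrm{CHSH}}(\omega)$ may no longer exhaust $(2,1+\sqrt{2}]$. In those cases one switches to an alternative assignment of key/randomness pairs --- for instance, $\phi=0$ with the randomness on $(X,Y)=(0,0)$ and the key on $(X,Y)=(0,1)$ via the correlator $\sin\omega$ --- and patches together the $I_{\mathrm{CHSH}}$-intervals attained on the resulting one-parameter branches. Verifying that the union of these intervals covers $(2,1+\sqrt{2}]$ for every admissible $\epsilon$, while keeping \eqref{cond} strict throughout, is the key technical step; the specific bounds $1+\sqrt{2}$ on $s$ and $H_{\mathrm{bin}}((2+\sqrt{2})/4)$ on $\epsilon$ are precisely those consistent with the geometry around the limit point $(\theta,\phi,\omega)=(\pi/4,\pi/2,\pi)$, where all three conditions meet tangentially and the strategy simultaneously achieves $I_{\mathrm{CHSH}}=1+\sqrt{2}$, $R^{\mathrm{global}}=2$, and $R^{\mathrm{key}}=1$.
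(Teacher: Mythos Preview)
Your overall strategy is correct and matches the paper's in spirit: fix one correlator to zero for perfect global randomness, push another correlator towards $\pm1$ for near-perfect key, and use the remaining freedom to tune $I_{\mathrm{CHSH}}$. The paper, however, makes a simpler parameter choice. Rather than placing the randomness on $(X,Y)=(1,0)$ by imposing $\theta=\pi-\phi$, it fixes $\omega=\pi$ so that $B_1=-\sigma_X$ and the randomness sits on $(X,Y)=(0,1)$. This leaves $\phi=\pi/2-\epsilon'$ to control the key rate exactly as you do, but now the remaining free parameter is $\theta$, and the \eqref{cond}-admissible range $\theta\in(\epsilon',\pi/2)$ has a clean, explicit form. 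The CHSH value becomes $I_{\mathrm{CHSH}}=\cos(\epsilon')+\cos(\theta-\epsilon')+\sin\theta$, and one checks directly that it is below $1+\sqrt{2}$ near $\theta=\epsilon'$ and above $1+\sqrt{2}$ at $\theta=\pi/4$ for every $\epsilon'\in(0,\pi/4)$, so the value $1+\sqrt{2}$ is always hit, while values near $2$ are reached for small $\epsilon'$.

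The practical payoff of the paper's choice is that $\omega$ is fixed \emph{independently} of $\epsilon$, so the admissible region and the CHSH expression do not deform as $\epsilon$ varies. In your parametrization $\theta=\pi-\phi$ couples the ``randomness'' constraint to the ``key'' constraint, which is precisely what forces the admissible $\omega$-interval to drift as $\phi$ moves away from $\pi/2$ and creates the extension problem you flag. The patching over alternative input pairs that you propose is therefore unnecessary: simply switching the randomness pair from $(1,0)$ to $(0,1)$ and fixing $\omega=\pi$ removes the obstacle. Your limit analysis around $(\theta,\phi,\omega)=(\pi/4,\pi/2,\pi)$ is correct and indeed identifies the same extremal point as the paper (the paper's $\theta=\pi/4$, $\omega=\pi$ at $\epsilon'\to0$), just arrived at along a different one-parameter curve.
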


This is obtained by simultaneously self-testing $\sigma_{X}$ for Alice, and both $\cos(\phi)\,\sigma_{X} + \sin(\phi)\,\sigma_{Y}$ and $\sigma_{X}$ for Bob, along with $\ket{\psi}$. Following the same arguments as before, we can fix $\omega = \pi$, and take $\phi$ arbitrarily close to $\pi/2$, resulting in a key rate of $1 - \epsilon$ and global randomness $2$ for input choices $X=0,Y=0$ and $X=0,Y=1$ respectively. By varying $\theta$, we can achieve the range of CHSH values $(2,1+\sqrt{2}]$, as shown in \cref{fig:key}. For the same reasons discussed in the previous section, there also exists a non-limiting statement when the full correlators are considered.
\begin{proposition}
    For any $s \in (2,1+\sqrt{2}]$, there exists a tuple $(\theta,\phi,\omega)$, along with a set of quantum correlations achieving $r^{\mathrm{global}}_{\theta,\phi,\omega} = 2$, $r^{\mathrm{key}}_{\theta,\phi,\omega} = 1$, and $I_{\mathrm{CHSH}} = s$.
\end{proposition}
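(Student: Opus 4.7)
The plan is to combine an explicit evaluation of the four correlators at the limit point with a four-correlator self-testing argument in the spirit of the discussion preceding \cref{prop:key_corr}. I fix $\phi=\pi/2$ and $\omega=\pi$ so that Bob's measurements in the canonical strategy of~\eqref{eq:strat} become $B_0=\sigma_Y$ and $B_1=-\sigma_X$, and keep $\theta$ as the free parameter.

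Evaluating each correlator on $\ket{\psi}=(\ket{00}+i\ket{11})/\sqrt{2}$ I expect to find $\langle A_0 B_0\rangle=1$, $\langle A_0 B_1\rangle=0$, $\langle A_1 B_0\rangle=\cos\theta$, $\langle A_1 B_1\rangle=-\sin\theta$, together with vanishing single-party marginals. The corresponding CHSH value is then $1+\cos\theta+\sin\theta$, which continuously covers the interval $(2,1+\sqrt{2}]$ as $\theta$ ranges over $(0,\pi/2]$ and attains its maximum at $\theta=\pi/4$, so for any target $s$ in the stated interval a value of $\theta$ can be chosen to hit $I_{\mathrm{CHSH}}=s$.

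Next I invoke the four-correlator self-testing of the singlet. Since~\eqref{cond} fails at $\phi=\pi/2$ no single linear Bell functional certifies the state, but the criteria of Wang \textit{et al.}~\cite{Wang_2016}, together with the one-parameter analysis of Chen \textit{et al.}~\cite{Chen2023}, show that the four-correlator constraint fixes the strategy to that of~\eqref{eq:strat} up to a local isometry. Purity of $\ket{\psi}$ then forces Eve's extension to factorise, so the conditional von Neumann entropies entering $r^{\mathrm{key}}_{\theta,\phi,\omega}$ and $r^{\mathrm{global}}_{\theta,\phi,\omega}$ collapse to Shannon entropies of the classical output distribution produced by the qubit strategy. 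Reading these off from the correlators: for $X=Y=0$ the perfect correlation $\langle A_0 B_0\rangle=1$ gives $H(A|X=0,Y=0,B)=0$ while $H(A|X=0,E)=H(A|X=0)=1$ from Alice's maximally mixed marginal, so $r^{\mathrm{key}}_{\theta,\phi,\omega}=1$; for $X=0,Y=1$ the vanishing marginals and vanishing correlator together give a uniform joint distribution on $(a,b)$, hence $H(AB|X=0,Y=1,E)=2$ and $r^{\mathrm{global}}_{\theta,\phi,\omega}=2$.

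The main obstacle is the self-testing step at this non-exposed point of $\mathcal{Q}_{\mathrm{corr}}$: one has to verify that the criteria of~\cite{Wang_2016,Chen2023} actually apply throughout the one-parameter family $\theta\in(0,\pi/2]$ with $\phi=\pi/2$, $\omega=\pi$, and not only at isolated values such as $\theta=\pi/4$ where $I_{\mathrm{CHSH}}=1+\sqrt{2}$. Once that is in place the remainder is a direct substitution into the Devetak--Winter style expressions, and no new technical machinery beyond what already underlies \cref{prop:key_corr} is required.
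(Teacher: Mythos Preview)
Your proposal is correct and follows essentially the same route as the paper: fix $\phi=\pi/2$, $\omega=\pi$, compute the four correlators $\langle A_0B_0\rangle=1$, $\langle A_0B_1\rangle=0$, $\langle A_1B_0\rangle=\cos\theta$, $\langle A_1B_1\rangle=-\sin\theta$, use continuity of $I_{\mathrm{CHSH}}=1+\cos\theta+\sin\theta$ to hit any $s\in(2,1+\sqrt{2}]$, and invoke the Wang \textit{et al.}\ criteria to self-test the singlet, from which the entropies follow. The paper restricts to $\theta\in(0,\pi/4]$ rather than $(0,\pi/2]$; this is cleaner since $I_{\mathrm{CHSH}}$ is monotone there and you avoid the endpoint $\theta=\pi/2$ where a second correlator equals $\pm1$ and the $\arccos$ side-condition of~\cite{Wang_2016} fails, and the ``main obstacle'' you flag is resolved by a one-line check of the $\arcsin$ identity with $(i,j)=(1,1)$, exactly as in the paper's proof of Proposition~4 (Chen \textit{et al.}\ is not needed).
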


\ifarxiv\section{Discussion}\label{sec:disc}\else\medskip\noindent{\it Discussion.|}\fi We have shown that quantum theory allows perfect DI key to be shared between two users using correlations that are arbitrarily close to being local. However, we do not know that any correlation exhibiting nonlocality, can be used for DIQKD, for instance, as shown in~\cite{Farkas21}, those that lie in the interior of $\mathcal{Q}_{\mathrm{corr}}$ and arise from measuring experimentally relevant states cannot generate key using standard protocols. The behaviours we use for our results are generated by the singlet, and lie on the self-testable boundary of $\mathcal{Q}_{\mathrm{corr}}$. 

Similar statements hold for DI randomness generation, and we also showed the existence of quantum correlations that can simultaneously be used either to share key or generate maximum randomness, while being arbitrarily close to the local set. This is not only an intriguing feature of quantum theory, but opens up the possibility for new protocols exploiting this feature. For example, certifying global randomness implies 1 bit of blind randomness for Alice, in which she does not need to trust Bob~\cite{MillerBlind,Fu18,MetgerGEAT}. This prompts an application to QKD post-processing in which certified randomness from some outcomes could help replenish some of the private randomness consumed in others. We leave the study of such protocols, and other applications of this construction, to future investigation. 

Although we achieve arbitrarily good key using a single Bell expression, getting perfect key is excluded. On the other hand, perfect key is possible by testing all four correlators. This raises the question of the minimum number of linear quantities required. It would be of further interest to find the robustness of the present constructions to noise. We leave these as problems for future investigation.

Finally, our work also highlights how, when given access to the full set of single Bell functionals that self-test the singlet in this scenario, one can find interesting relationships between cryptographic tasks and nonlocality. It would be interesting to find further applications. For example, it has been shown how use of the various subfamilies of \cref{prop:bellExp} can boost practical DIQKD, DIRE and robust self-testing~\cite{Woodhead2021deviceindependent,WBC,Yang2014,Bancal2015}; given access to their generalizations, further improvements may be found by optimizing over the entire family of Bell expressions~\eqref{eq:fam}.

\medskip{\noindent}{\it Note added:} During the writing up of this work we became aware of a related work~\cite{Farkas23} that also shows the possibility of key distribution with arbitrarily small nonlocality using an alternative approach.

\ifarxiv\acknowledgements\else\medskip\noindent{\it Acknowledgements}|\fi This work was supported by EPSRC via the Quantum Communications Hub (Grant No.\ EP/T001011/1) and Grant No.\ EP/SO23607/1 and by the European Union's Horizon Europe research and innovation programme under the project ``Quantum Secure Networks Partnership'' (QSNP, grant agreement No.\ 101114043).



%

\onecolumngrid 

\appendix

\section{Spot checking protocols for DIQKD} \label{app:spot}

In this section, we give a brief outline of a standard spot checking DIQKD protocol using the Bell inequalities in this work (those of Proposition~1), and compare to protocols in which Bob has an additional measurement setting.

The set up is such that Alice and Bob each have an untrusted device (these are intended to be used for quantum measurements), as well as a trusted way to process classical information, and their own trusted source of private randomness.  They are connected to one another with an insecure but authenticated channel (so an adversary can read any messages sent on this channel, but cannot alter them without being detected due to the authentication).

In each round, Alice uses a private random number generator to assign that round as either a test round or generation round, where typically the probability of a test round is low. She communicates this to Bob, with each party ensuring that their untrusted device does not learn whether the round is a test or generation round. On test rounds, Alice and Bob each sample an input $X=x\in\{0,1\}$, and $Y=y\in\{0,1\}$, according to a uniform distribution. On generation rounds, Alice and Bob deterministically set $X=0$ and $Y=0$. They record their outputs $A=a\in\{0,1\}$, $B=b\in\{0,1\}$.

After many rounds have taken place, Alice sends Bob the settings and outcomes of her test rounds and Bob uses the joint statistics to estimate the Bell value $\langle B_{\theta,\phi,\omega} \rangle$ for some choice of $(\theta,\phi,\omega)$ satisfying the conditions required for Proposition~3. If the value is too low, they abort the protocol. Otherwise the outputs from the generation rounds form the raw key. Alice and Bob proceed with error correction and privacy amplification, distilling a secret key. Note that when using one of the Bell inequalities in this work for which we get arbitrarily close to perfect key (i.e., where $\phi$ is arbitrarily close to $\pi/2$), the amount of information leaked during error correction and the amount of compression required in privacy amplification become negligible as the number of rounds become asymptotically large.

In the above protocol, Bob's key generation setting $Y=0$ is also used in the Bell test. An alternative approach is for Bob to use an extra setting $Y=2$ for key generation, when, e.g., protocols based on the CHSH inequality are used (see Section~4.4 of~\cite{pirandola2020advances} for an example). The reason for this is that the optimal strategy for violating the CHSH inequality yields relatively poor correlation in the case $X=0$, $Y=0$, while Bob's extra measurement for $Y=2$ can be chosen to yield perfect correlation with Alice's $X=0$ measurement in the ideal case, leading to higher key rates. However, in this setting the protocol must include an additional test that checks the correlation when $X=0$ and $Y=2$. Because we use a Bell inequality that is tailored for key generation, the optimal measurements for Alice and Bob already include a pair that give arbitrarily good correlation, and there is no need to add a third measurement.

It might also be possible to certify perfect key with arbitrarily small nonlocality using existing constructions tailored for randomness~\cite{AcinRandomnessNonlocality,WBC}, plus a third measurement for Bob. Instead, our protocol remains in the minimal Bell scenario. By doing so, because the CHSH inequality is the the only non-trivial facet of the local polytope up to symmetry, the CHSH value gives an immediate measure of distance from the local set, and hence can be used to measure the nonlocality. This simplifies the analysis, and gives a protocol that requires fewer steps (see below). We also remark that, whilst not discussed by the authors, one of the constructions of~\cite{AcinRandomnessNonlocality} can be used to certify $\epsilon$-perfect key as the CHSH violation tends to zero with two inputs per party (see equations~(7) and~(13) of~\cite{AcinRandomnessNonlocality}, with $\beta=0$, $\alpha\rightarrow\infty$, $\theta=\pi/4$ and $\varphi=0$). These constructions can be recovered as a subfamily of ours, and we improve upon this understanding by showing $\epsilon$-perfect key is compatible with CHSH values in the whole range $(2,5/2]$, rather than only in the limit as the CHSH value tends to $2$, which is not robust.

On a more practical note, an extra measurement for Bob implies the need to release raw key during the classical communication phase; this data is used to test alignment and estimate the amount of error correction needed. In our protocol, because the settings $X=0$ and $Y=0$ (those used in the generation rounds) also appear in the Bell test, no additional raw key needs releasing for an alignment test. Though a negligible improvement in the asymptotic regime, this may lead to an overall efficiency boost for the protocol in the finite setting. However, we expect noise robustness will play a significant role when testing non-facet defining Bell inequalities, and a finite key rate analysis is needed for a fair comparison, which is beyond the scope of this work.

\section{Self-testing using Jordan's lemma} \label{sec:jlem}

In this section, we show how self-testing statements can be obtained in the bipartite scenario with two inputs and outputs per party using a reduction to qubit systems via Jordan's lemma. We begin by precisely defining self-testing, and give some background on sum-of-squares (SOS) decompositions. We consider only the exact self-testing statement, and leave proof of the robust statement for future work. Then, we apply Jordan's lemma to show a quantum strategy is unique if there is a unique two-qubit strategy. Towards finding this unique strategy, we provide further reductions for Bell inequalities without marginal terms.

\subsection{Definition of self-testing and sum-of-squares decompositions}

\begin{definition}[Self-test]\label{def:selfTest}
Let the observables $\{A_x\}_x$, $\{B_y\}_y$ and pure state $\ket{\psi}_{Q_{A}Q_{B}}$ be the target strategy, and let $B$ be a Bell operator. The inequality $\langle B \rangle \leq \eta^{\mathrm{Q}}$ \emph{self-tests} the target state and measurements if for all physical quantum strategies $(\tilde{\rho}_{\tilde{Q}_{A}\tilde{Q}_{B}},\{\tilde{A}_{x}\}_x,\{\tilde{B}_{y}\}_y)$ that satisfy $\langle B \rangle = \eta^{\mathrm{Q}}$, there exist local isometries $V_A$ and $V_B$ and an ancillary state $\ket{\xi}_{\text{Junk}}$ such that defining $V:\mathcal{H}_{E}\otimes\mathcal{H}_{\tilde{Q}_{A}}\otimes \mathcal{H}_{\tilde{Q}_{B}}\rightarrow\mathcal{H}_{Q_{A}}\otimes\mathcal{H}_{Q_{B}}\otimes\mathcal{H}_{\mathrm{Junk}}$, $V=\mathbb{I}_{E}\otimes V_{A}\otimes V_{B}$ and letting $\ket{\Psi}_{E\tilde{Q}_{A}\tilde{Q}_{B}}$ be a purification of $\tilde{\rho}_{\tilde{Q}_{A}\tilde{Q}_{B}}$, we have
\begin{equation}
    V\Big[ (\mathbb{I}_{E}\otimes\tilde{A}_{x}\otimes\tilde{B}_{y})\ket{\Psi}_{E\tilde{Q}_{A}\tilde{Q}_{B}} \Big] 
    =   \ket{\xi}_{\text{Junk}}\otimes(A_{x} \otimes B_{y}) \ket{\psi}_{Q_{A}Q_{B}}\ \forall\ x,y. 
\end{equation}
\end{definition}
\noindent Throughout the \ifarxiv appendices\else Supplemental Material\fi, we refer to the physical state and measurements we are trying to self-test as the ``reference'', denoted with a tilde. The constituents of the intended strategy, detailed in \cref{eq:appstrat}, are then referred to as the ``target'' state and measurements, for which the relevant Hilbert space is $\mathcal{H}_{Q_{A}}\otimes \mathcal{H}_{Q_{B}}$.  In this work, there will only be two possible values of $x$ and of $y$, and each of $\mathcal{H}_{Q_{A}}$ and $\mathcal{H}_{Q_{B}}$ will be two dimensional.

For a Bell operator $B$ that defines the quantum Bell inequality $\langle B \rangle \leq \eta^{\mathrm{Q}}$, the operator $\bar{B}:= \eta^{\mathrm{Q}}\mathbb{I} - B$, satisfies $\bra{\phi}\bar{B}\ket{\phi}\geq 0$ for all quantum states $\ket{\phi}$, i.e., $\bar{B} \succeq 0$. If there exists a set of operators $\{P_{\mu}\}$ that are polynomials of $A_{x}$ and $B_{y}$ and satisfy 
\begin{equation}
    \bar{B} = \sum_{\mu} P_{\mu}^{\dagger}P_{\mu}, \label{eq:BellSOS}
\end{equation}
then we have found a sum-of-squares (SOS) decomposition of the operator $\bar{B}$: positivity of $\bar{B}$ follows directly from the fact that $K^{\dagger}K\succeq 0$ for any operator $K$. 

SOS decompositions can be used to enforce algebraic constraints on any state and measurements that saturate the quantum Bell inequality, since $\langle B \rangle = \eta^{\mathrm{Q}}$ implies
\begin{equation}
    \langle \bar{B} \rangle  = \sum_{\mu} \bra{\psi}P_{\mu}^{\dagger}P_{\mu}\ket{\psi} = 0,
\end{equation}
which can only hold if $P_{\mu}\ket{\psi} = 0$ for all $\mu$. Relations of this form can then be used to prove a self-testing statement according to \cref{def:selfTest}.

To formulate an SOS for a given $\bar{B}$, we can follow Ref.~\cite{BampsPironio}: consider a vector $\bm{R}=[R_{0},\ldots,R_{j},\ldots]^{\mathrm{T}}$ whose components are polynomials of $A_x$ and $B_y$. Here, we consider the case where each polynomial $R_j$ is linear, writing $P_{\mu} = \sum_{j} q_{\mu}^{j}R_{j}$ for some coefficients $\{q_\mu^j\}_j$. Then
\begin{align}
    \bar{B} &= \sum_{\mu} P_{\mu}^{\dagger}P_{\mu} \nonumber \\
    &= \sum_{jk}   R_{j}^{\dagger}\left(\sum_{\mu}\big(q_{\mu}^{j}\big)^{*} q_{\mu}^{k}\right)R_{k} \nonumber \\
    &= \sum_{jk}R_{j}^{\dagger}M^{jk}R_{k} = \bm{R}^{\dagger}M\bm{R}\,, \label{eq:SOS}
\end{align}
where $M$ is the Gram matrix of the set of vectors $\{\bm{q}^{j}\}$. Since $M$ is a Gram matrix, it is positive semidefinite by construction. If a matrix $\bm{M}$ is found that satisfies~\eqref{eq:SOS} then polynomials $P_{\mu}$ can be found by the matrix square root:
\begin{align}
    \bar{B} &= \bm{R}^{\dagger}M\bm{R} = \Big( \sqrt{M}\bm{R} \Big)^{\dagger} \Big( \sqrt{M} \bm{R} \Big). \label{eq:sosDecomp}
\end{align}
Since each entry of the vector $\sqrt{M}\bm{R}$ takes the form $[\sqrt{M}\bm{R}]_{\mu} = \sum_{k}[\sqrt{M}]_{\mu k}R_{k}$, we find that $P_{\mu} =[\sqrt{M}\bm{R}]_{\mu}$ provides the set of polynomials that satisfies \cref{eq:SOS}. 

Finding a decomposition for a given Bell operator $\bar{B}$ can be done via semidefinite programming (SDP)~\cite{BampsPironio} or the formal SOS techniques presented in~\cite{barizien2023}. Moreover, given a target quantum strategy tailored to some DI task, one can use the method of~\cite{barizien2023} to derive polynomial constraints satisfied by the target strategy, from which a custom Bell expression can be derived. Alternatively, one can take the target distribution and study numerical bounds on the von Neumann entropy via SDP relaxations~\cite{BrownDeviceIndependent2}. A Bell expression optimal for randomness can be extracted from the dual program, in the sense that its maximum violation certifies the amount of randomness of the target. One can then look for structure in this expression, and use the techniques of~\cite{BampsPironio} to get SOS decompositions. Both approaches can lead to a Bell expression and SOS decomposition from which a self-testing proof of the target strategy can be attempted, leading to expressions such as those in Proposition 1.

\subsection{Sufficiency of a unique qubit strategy using Jordan's lemma}

Informally, Jordan's lemma~\cite{Jordan} states that for two observables $A_{0}$ and $A_{1}$ on a Hilbert space $\mathcal{H}$, each with eigenvalues $\pm 1$, there exists a basis transformation such that both are simultaneously block diagonal with block size no greater than two. The Hilbert space decomposes into this block diagonal structure, and, by dilating where necessary, we can take each block to be a qubit system. This has already been used in the context of self-testing (see, e.g.,~\cite{BardynSelfTest,SekatskiBuilidngBlocks,ValcarceSelfTest}) and here we begin with a few general statements before attacking our particular case.

Jordan's lemma can be stated as follows.
\begin{lemma}[Jordan's lemma]
Let $\tilde{A}_{0}$ and $\tilde{A}_{1}$ be two binary observables on a Hilbert space $\mathcal{H}_{\tilde{Q}_{A}}$. Then there exists a basis in which $\tilde{A}_0$ and $\tilde{A}_1$ are block diagonal with block dimensions at most $2$. \label{lem:Jordan}
\end{lemma}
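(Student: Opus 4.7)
The plan is to reduce the two observables to two projectors and then exploit the well-known structure of the algebra generated by a pair of projections. Since each $\tilde{A}_i$ is Hermitian with eigenvalues in $\{-1,+1\}$, it satisfies $\tilde{A}_i^2 = \mathbb{I}$, so I can write $\tilde{A}_i = 2 P_i - \mathbb{I}$ where $P_i$ projects onto the $+1$-eigenspace of $\tilde{A}_i$. A basis block-diagonalizes both $\tilde{A}_0$ and $\tilde{A}_1$ with block size at most two if and only if it does the same for $P_0$ and $P_1$, so it suffices to construct a joint block-diagonal decomposition of two orthogonal projectors.

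Next I would consider the positive self-adjoint operator $P_0 P_1 P_0$. It maps $\mathrm{ran}(P_0)$ into itself, annihilates $\mathrm{ran}(\mathbb{I}-P_0)$, and is bounded above by $P_0$, so its spectrum lies in $[0,1]$. By the spectral theorem (in finite dimensions simply by diagonalization; more generally via the direct-integral representation) one obtains an orthonormal system of eigenvectors of $P_0 P_1 P_0$ spanning $\mathrm{ran}(P_0)$, and a mirror decomposition on $\mathrm{ran}(\mathbb{I}-P_0)$ using $(\mathbb{I}-P_0)P_1(\mathbb{I}-P_0)$.

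For each unit eigenvector $\ket{v}\in\mathrm{ran}(P_0)$ with eigenvalue $\lambda\in(0,1)$, I would form the vector $\ket{w}:=(\lambda(1-\lambda))^{-1/2}(\mathbb{I}-P_0)P_1\ket{v}$ and verify that $\ket{w}$ is a unit vector orthogonal to $\ket{v}$ with $P_0\ket{w}=0$. A direct computation using $P_i^2=P_i$ and $P_0 P_1 P_0\ket{v}=\lambda\ket{v}$ then yields $P_0\ket{v}=\ket{v}$, $P_1\ket{v}=\lambda\ket{v}+\sqrt{\lambda(1-\lambda)}\,\ket{w}$, and $P_1\ket{w}=\sqrt{\lambda(1-\lambda)}\,\ket{v}+(1-\lambda)\ket{w}$, so $\mathrm{span}\{\ket{v},\ket{w}\}$ is a two-dimensional subspace invariant under both projectors. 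For eigenvalues $\lambda=1$ one has $P_1\ket{v}=\ket{v}$, while for $\lambda=0$ one has $P_1\ket{v}=0$, so these eigenvectors generate one-dimensional common invariant subspaces. The symmetric construction on $\mathrm{ran}(\mathbb{I}-P_0)$ accounts for the remaining one-dimensional pieces where $P_0$ acts as zero.

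The main obstacle will be verifying mutual orthogonality and completeness of the resulting blocks: blocks associated with distinct eigenvalues must be shown pairwise orthogonal (the $\ket{v}$'s are orthogonal by the spectral theorem, and the corresponding $\ket{w}$'s inherit orthogonality via the mirror spectral decomposition on $\mathrm{ran}(\mathbb{I}-P_0)$ together with the identity $(\mathbb{I}-P_0)P_1(\mathbb{I}-P_0)\ket{w}=(1-\lambda)\ket{w}$), and their direct sum must exhaust $\mathcal{H}_{\tilde{Q}_A}$ (a straightforward dimension count in the finite-dimensional case, and a measure-theoretic argument in the general case). Once this is in hand, choosing an orthonormal basis inside each invariant subspace gives simultaneous block-diagonal forms of $P_0$ and $P_1$, and hence of $\tilde{A}_0$ and $\tilde{A}_1$, with every block of dimension at most two.
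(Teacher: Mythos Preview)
The paper does not actually prove Jordan's lemma: it is stated as \cref{lem:Jordan} with a citation to Jordan's original 1875 paper and then immediately applied, so there is no in-paper argument to compare against. Your approach---reducing to a pair of orthogonal projections and building the invariant two-dimensional blocks from the spectral decomposition of $P_0P_1P_0$---is the standard modern proof (essentially Halmos's two-subspaces theorem), and the computations you outline are correct. The only point that deserves a little more care is the infinite-dimensional case: when $P_0P_1P_0$ has continuous spectrum there are no eigenvectors, so the construction of the $\ket{v},\ket{w}$ pairs must be replaced by a direct-integral decomposition rather than a direct sum; you flag this but do not carry it out, so if the ambient Hilbert space is not assumed finite-dimensional you would need to either invoke the general two-projections theorem or restrict to the separable case and spell out the measure-theoretic version.
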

Applying \cref{lem:Jordan} to both Alice's and Bob's observables, we can write 
\begin{equation}
    \tilde{A}_{x} = \sum_{i} \ketbra{i}{i}_{F_{A}}\otimes A_{x}^{i}, \ \tilde{B}_{y} = \sum_{j} \ketbra{j}{j}_{F_{B}}\otimes B_{y}^{j}, \label{eq:genobs}
\end{equation}
where $A_x^i$ and $B_y^j$ are $2\times2$ (any $1\times1$ blocks allowed by \cref{lem:Jordan} can be extended to $2\times2$). We identify $\mathcal{H}_{\tilde{Q}_{A}} = \mathcal{H}_{F_{A}}\otimes\mathcal{H}_{Q_{A}}$ where $F_{A}$ is a system that flags the $2\times 2$ Jordan block, and $Q_{A}$ is a qubit system (similarly for $\mathcal{H}_{\tilde{Q}_{B}}$). A state then takes a generic form
\begin{equation}
    \rho_{\tilde{Q}_{A}\tilde{Q}_{B}} = \sum_{ii'jj'}\ketbra{i}{i'}_{F_{A}} \otimes \ketbra{j}{j'}_{F_{B}}\otimes\rho^{(i,i'),(j,j')}, \label{eq:veryGenState}
\end{equation}
where $\rho^{(i,i'),(j,j')}$ are $4\times 4$ matrices, $\mathrm{Tr}[\rho_{\tilde{Q}_{A}\tilde{Q}_{B}}] = \sum_{ij}\mathrm{Tr}[\rho^{(i,i),(j,j)}] = \sum_{ij}p_{ij}= 1$ and $\rho_{\tilde{Q}_{A}\tilde{Q}_{B}} \succeq 0$, where $\rho^{(i,i),(j,j)} = p_{ij}\rho^{ij}$ for a two-qubit density operator $\rho^{ij}$ and $p_{ij}>0$. Any shifted Bell operator $\bar{B}$ (of the form \cref{eq:BellSOS}) can then be written as
\begin{equation}
    \bar{B} = \sum_{ij}\ketbra{i}{i}_{F_{A}}\otimes\ketbra{j}{j}_{F_{B}}\otimes\bar{B}^{ij}, 
\end{equation}
where $\bar{B}^{ij}$ is constructed according to $\bar{B}$ with the qubit observables $\{A_{x}^{i}\}_x,\{B_{y}^{j}\}_y$. We therefore find $\langle \bar{B} \rangle = 0$ implies 
\begin{equation}
    \mathrm{Tr}[\rho^{ij}\bar{B}^{ij}] = 0 , \quad \forall\ i,j. \label{eq:maxDiag}
\end{equation}
The above equation tells us that, if a general strategy is maximally violating, every diagonal two-qubit block must also be maximally violating. This implies a system of constraints on every two-qubit strategy, which one can try to solve. Below we show solving this qubit problem is sufficient to determine the solution to the general problem. We proceed with a useful lemma.

\begin{lemma}\label{lem:zerocols}
    If $M\succeq0$ and $M_{ii}=0$ for some $i$ (i.e., one of the diagonal elements of $M$ is zero), then $M_{ij}=M_{ji}=0$ for all $j$ (i.e., the $i^{\mathrm{th}}$ row and column of $M$ are zero).
\end{lemma}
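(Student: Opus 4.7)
The plan is to prove this as a standard fact about positive semidefinite matrices, using either a factorisation argument or a $2\times 2$ principal submatrix argument. I would likely present the factorisation approach since it makes the conclusion immediate and handles both row and column in a single step.

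First I would recall that $M \succeq 0$ implies $M$ is Hermitian and admits a square root: there exists a Hermitian $N$ with $N^{2} = M$, so that $M = N^{\dagger}N$. Writing $e_i$ for the $i$-th standard basis vector, we then have
\begin{equation}
    0 = M_{ii} = e_i^{\dagger} M e_i = e_i^{\dagger} N^{\dagger} N e_i = \| N e_i \|^{2},
\end{equation}
which forces $N e_i = 0$. Applying $N$ again gives $M e_i = N(N e_i) = 0$, so the entire $i$-th column of $M$ vanishes. Hermiticity of $M$ then yields $M_{ji} = \overline{M_{ij}} = 0$ for all $j$, giving the stated conclusion.

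As an alternative (and perhaps more elementary) route, one may instead invoke the fact that every principal submatrix of a PSD matrix is PSD. Applying this to the $2\times 2$ submatrix indexed by $\{i,j\}$, which has the form $\bigl(\begin{smallmatrix} 0 & M_{ij} \\ \overline{M_{ij}} & M_{jj} \end{smallmatrix}\bigr)$, one needs the determinant $-|M_{ij}|^{2} \geq 0$, immediately giving $M_{ij} = 0$.

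There is no real obstacle here: both arguments are short and the lemma is a textbook consequence of positive semidefiniteness. The only minor drafting choice is whether to present the factorisation version (which handles rows and columns in one stroke and is closest to how the lemma will likely be applied to Gram-type matrices arising from the SOS decomposition~\eqref{eq:sosDecomp}) or the principal-submatrix version (which is slightly more elementary). I would favour the factorisation proof for brevity and for its stylistic alignment with the SOS machinery used in the preceding subsection.
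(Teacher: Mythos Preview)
Your proposal is correct. Your preferred factorisation argument (write $M=N^\dagger N$, deduce $Ne_i=0$ from $\|Ne_i\|^2=M_{ii}=0$, hence $Me_i=0$) is a genuinely different route from the paper's, which instead extracts the $2\times2$ principal submatrix indexed by $\{i,j\}$ and computes its lower eigenvalue $\tfrac{1}{2}\bigl(M_{jj}-\sqrt{M_{jj}^2+4|M_{ij}|^2}\bigr)$, observing this is nonnegative only if $M_{ij}=0$. Your alternative submatrix-plus-determinant version is the same idea as the paper's but slightly tidier, since checking $\det\geq0$ avoids the eigenvalue computation. The factorisation proof is a touch more elegant and, as you note, dovetails nicely with the Gram/SOS machinery; the paper's version is more elementary in that it avoids invoking existence of a positive square root.
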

\begin{proof}
    If $R$ is a principal submatrix of $M$ (i.e., a submatrix containing the same rows of $M$ as columns), $M\succeq0$ implies $R\succeq0$. If $M_{ii}=0$ for some $i$, consider the principal submatrix $R=\left(\begin{array}{cccc}
 M_{ii} & M_{ij} \\
 M_{ij}^* & M_{jj} 
\end{array}\right)$. One of its eigenvalues is $\frac{1}{2}\left(M_{jj}-\sqrt{M_{jj}^2+4|M_{ij}|^2}\right)$, which cannot be positive, and is zero if and only if $M_{ij}=0$.
\end{proof}

\begin{lemma}\label{lem:blocklem}
    Let $d = 4n$ for some positive integer $n$, $\rho_{\psi} = \ketbra{\psi}{\psi}$ for some two-qubit pure state $\ket{\psi}$ and $\gamma_{\mu} > 0$ for $\mu \in \{0,...,n-1\}$. Let $M \in \mathbb{C}^{d \times d}$ be a Hermitian matrix, which, when written in terms of $4\times4$ blocks has the form 
    \begin{equation}
        M = \sum_{\mu=0}^{n-1} \sum_{\nu=0}^{n-1} \ketbra{\mu}{\nu}\otimes M_{\mu\nu},
    \end{equation}
    where $M_{\mu\nu} \in \mathbb{C}^{4\times 4}$, and the diagonal blocks satisfy $M_{\mu\mu} = \gamma_{\mu}\rho_{\psi}$. Then $M \succeq 0$ implies the off-diagonal blocks satisfy
    \begin{equation}
        M_{\mu\nu} = c_{\mu\nu} \rho_{\psi} \ \forall \mu,\nu,
    \end{equation}
    for some $c_{\mu\nu} \in \mathbb{C}$. Moreover, $M$ must take the form 
    \begin{equation}\label{eq:sig}
        M = \sigma\otimes\rho_{\psi},
    \end{equation}
    for some $\sigma\succeq0$. 
\end{lemma}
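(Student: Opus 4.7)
The plan is to exploit the rank-one nature of $\rho_\psi$ via a convenient basis choice, and then apply \cref{lem:zerocols} to force the off-diagonal blocks into the desired form. First I would pick any unitary $U$ on $\mathbb{C}^4$ with $U\ket{\psi}=\ket{0}$, and conjugate $M$ by $\mathbb{I}_n\otimes U$; this preserves positivity and the block structure, and in the new basis the diagonal blocks become $M_{\mu\mu}=\gamma_\mu\ketbra{0}{0}$, which have entries equal to zero everywhere except at position $(0,0)$.

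Next, working in the product basis labelled by pairs $(\mu,k)$ with $\mu\in\{0,\ldots,n-1\}$ and $k\in\{0,1,2,3\}$, I would look at the diagonal entries of $M$. For every $\mu$ and every $k\in\{1,2,3\}$ we have $M_{(\mu,k),(\mu,k)}=\gamma_\mu[\rho_\psi]_{kk}=0$. Since $M\succeq 0$, \cref{lem:zerocols} forces the entire row and column indexed by $(\mu,k)$ to vanish. Consequently, for every pair of blocks $(\mu,\nu)$, the block $M_{\mu\nu}$ can only be nonzero in the $(0,0)$ entry, so there exist $c_{\mu\nu}\in\mathbb{C}$ with $M_{\mu\nu}=c_{\mu\nu}\ketbra{0}{0}=c_{\mu\nu}\rho_\psi$ (in the rotated basis), and conjugating back by $U^\dagger$ yields the same statement in the original basis. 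This proves the first claim.

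For the tensor-product form, I would define $\sigma\in\mathbb{C}^{n\times n}$ by $\sigma_{\mu\nu}=c_{\mu\nu}$ with the convention $c_{\mu\mu}=\gamma_\mu$, so that
\begin{equation}
M=\sum_{\mu,\nu}\ketbra{\mu}{\nu}\otimes c_{\mu\nu}\rho_\psi=\sigma\otimes\rho_\psi.
\end{equation}
Hermiticity of $M$ gives Hermiticity of $\sigma$. To see $\sigma\succeq 0$, pick any $v\in\mathbb{C}^n$ and consider the vector $v\otimes\ket{\psi}$: positivity of $M$ gives
\begin{equation}
0\le (v\otimes\ket{\psi})^\dagger M (v\otimes\ket{\psi})=(v^\dagger\sigma v)\,\langle\psi|\psi\rangle=v^\dagger\sigma v,
\end{equation}
since $\ket{\psi}$ is normalised. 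Hence $\sigma\succeq 0$, completing the proof.

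The only mildly delicate step is the reduction to the canonical basis, but once $\rho_\psi$ is made diagonal with a single nonzero entry the proof collapses to a direct application of \cref{lem:zerocols}; I do not anticipate any genuine obstacle beyond keeping the indexing of the $4\times 4$ blocks versus the $n\times n$ outer structure straight.
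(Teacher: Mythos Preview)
Your proof is correct and follows essentially the same approach as the paper: both rotate to a basis where $\rho_\psi=\ketbra{0}{0}$, invoke \cref{lem:zerocols} on the vanishing diagonal entries to kill the appropriate rows and columns, and then assemble $M=\sigma\otimes\rho_\psi$. Your argument for $\sigma\succeq0$ via testing on $v\otimes\ket{\psi}$ is slightly more explicit than the paper's one-line justification, but the substance is identical.
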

\begin{proof}
    Choose a basis $\{\ket{\psi_0},\ldots,\ket{\psi_3}\}$ for $M_{\mu\mu}$ such that $\ket{\psi_0}=\ket{\psi}$. Using
    Lemma~\ref{lem:zerocols}, it follows that $M_{\mu\nu}=c_{\mu\nu}\ketbra{\psi}{\psi}$ with $c_{\mu\nu}\in\mathbb{C}$ for all $\mu,\nu$ (and $c_{\mu\mu}=\gamma_{\mu}$).
    
    Finally, we can write 
    \begin{equation}
        M = \sum_{\mu=0}^{n-1}\sum_{\nu=0}^{n-1}\ketbra{\mu}{\nu}\otimes c_{\mu\nu}\rho_{\psi} = \Bigg(  \sum_{\mu=0}^{n-1}\sum_{\nu=0}^{n-1}c_{\mu\nu} \ketbra{\mu}{\nu} \Bigg) \otimes \rho_{\psi},
    \end{equation}
    which is equivalent to the form~\eqref{eq:sig} taking $\sigma = \sum_{\mu=0}^{n-1}\sum_{\nu=0}^{n-1}c_{\mu\nu} \ketbra{\mu}{\nu}$. Note that because $M\succeq0$ and $\rho_\psi\succeq0$, we must have $\sigma\succeq0$.
\end{proof}

Now we present the main claim of this section. Informally, this says that in the case of two inputs and two outputs per party, it suffices to solve the self-testing strategy for systems that comprise 2 qubits, in order to solve the self-testing problem in the general case.

\begin{lemma}\label{lem:blocks}
    Let $B$ be a bipartite Bell operator in the two-input, two-output scenario with a quantum bound $\eta^{\mathrm{Q}}$. Suppose, for every two-qubit strategy $(\tilde{\rho}^{\mathsf{q}}_{Q_{A}Q_{B}},\tilde{A}_{x}^{\mathsf{q}},\tilde{B}_{y}^{\mathsf{q}})$ achieving $\langle B \rangle = \eta^{\mathrm{Q}}$, there exist unitaries $U_{A}$ on $\mathcal{H}_{Q_{A}}$ and $U_{B}$ on $\mathcal{H}_{Q_{B}}$ such that $(U_A\otimes U_B)\tilde{\rho}^{\mathsf{q}}_{Q_{A}Q_{B}}(U_A^\dagger\otimes U_B^\dagger) = \ketbra{\psi}{\psi}$,
    \begin{equation}
        U_{A} \tilde{A}^{\mathsf{q}}_{x}U_{A}^{\dagger} = A_{x},\quad\text{and}\quad U_{B} \tilde{B}^{\mathsf{q}}_{y}U_{B}^{\dagger} = B_{y},
    \end{equation}
    for a target strategy $(\ket{\psi},\{A_x\}_x,\{B_y\}_y)$. Then, for every quantum strategy $(\rho_{\tilde{Q}_{A}\tilde{Q}_{B}},\{\tilde{A}_{x}\}_x,\{\tilde{B}_{y}\}_y)$ achieving $\langle B \rangle = \eta^{\mathrm{Q}}$, there exist unitaries $V_{A}$ on $\mathcal{H}_{\tilde{Q}_{A}}$ and $V_{B}$ on $\mathcal{H}_{\tilde{Q}_{B}}$, and a state $\ket{\xi}_{\mathrm{Junk}} \in \mathcal{H}_{\mathrm{Junk}}$, such that for any purification $\ket{\Psi}_{E\tilde{Q}_{A}\tilde{Q}_{B}}$ of $\rho_{\tilde{Q}_{A}\tilde{Q}_{B}}$
    \begin{equation} 
        (\mathbb{I}_{E}\otimes V_{A} \otimes V_{B}) (\mathbb{I}_{E}\otimes\tilde{A}_{x}\otimes\tilde{B}_{y}) \ket{\Psi}_{E\tilde{Q}_{A}\tilde{Q}_{B}} = \ket{\xi}_{\mathrm{Junk}}\otimes(A_{x} \otimes B_{y})\ket{\psi}_{Q_{A}Q_{B}}\ \forall\ x,y.
      \end{equation}
    \end{lemma}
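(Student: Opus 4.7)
The plan is to leverage Jordan's lemma to reduce the general strategy to a collection of two-qubit sub-strategies, each uniquely determined by hypothesis, and then to glue the resulting qubit unitaries into a global unitary exhibiting the self-testing structure. First I would apply Lemma~\ref{lem:Jordan} to write the observables in the block-diagonal form~\eqref{eq:genobs}, identifying $\mathcal{H}_{\tilde Q_A} \cong \mathcal{H}_{F_A} \otimes \mathcal{H}_{Q_A}$ and $\mathcal{H}_{\tilde Q_B} \cong \mathcal{H}_{F_B} \otimes \mathcal{H}_{Q_B}$ with qubit factors $\mathcal{H}_{Q_A},\mathcal{H}_{Q_B}$, so that the state takes the form~\eqref{eq:veryGenState}. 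Since the shifted Bell operator $\bar B$ is block-diagonal in the flag indices, the argument leading to~\eqref{eq:maxDiag} shows that for every $(i,j)$ with $p_{ij}>0$ the two-qubit strategy $(\rho^{ij},A_x^i,B_y^j)$ saturates the quantum bound, so the hypothesis applies to each such block.

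Invoking the hypothesis yields unitaries $U_A^{i,j},U_B^{i,j}$ transforming each sub-strategy to the target. The main obstacle is to decouple these into unitaries $W_A^i$ depending only on $i$ and $W_B^j$ depending only on $j$. For this, any two unitaries both conjugating $A_x^i$ to $A_x$ ($x=0,1$) must differ by an element of the joint commutant of $\{A_0,A_1\}$; because any nontrivial quantum Bell violation requires non-commuting target observables (as is the case for~\eqref{eq:strat}), this commutant reduces to $\mathbb{C}\mathbb{I}$ at the qubit level. Hence $U_A^{i,j}=e^{i\alpha_{ij}}W_A^i$ and $U_B^{i,j}=e^{i\beta_{ij}}W_B^j$ for unitaries $W_A^i,W_B^j$ and phases $\alpha_{ij},\beta_{ij}$. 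The unimodular phases cancel under state conjugation, giving $(W_A^i\otimes W_B^j)\rho^{ij}(W_A^i\otimes W_B^j)^\dagger=\ketbra{\psi}{\psi}$ for every $(i,j)$ with $p_{ij}>0$. On flags with vanishing marginal I set the corresponding $W_A^i$ or $W_B^j$ to the identity; the joint state has no weight on such flags so the choice is irrelevant.

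I then assemble the block-diagonal unitaries $V_A:=\sum_i\ketbra{i}{i}_{F_A}\otimes W_A^i$ and $V_B:=\sum_j\ketbra{j}{j}_{F_B}\otimes W_B^j$; on the support of $\rho_{\tilde Q_A\tilde Q_B}$ they satisfy $V_A\tilde A_xV_A^\dagger=\mathbb{I}_{F_A}\otimes A_x$ and $V_B\tilde B_yV_B^\dagger=\mathbb{I}_{F_B}\otimes B_y$. Viewing $\rho':=(V_A\otimes V_B)\rho_{\tilde Q_A\tilde Q_B}(V_A\otimes V_B)^\dagger$ as a block matrix of $4\times 4$ blocks on $\mathcal{H}_{Q_A Q_B}$ indexed by the composite flag $(i,j)$, the diagonal blocks are $p_{ij}\ketbra{\psi}{\psi}$ on supported flags and vanish otherwise. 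Lemma~\ref{lem:blocklem}, applied after stripping the zero rows and columns of unsupported flags via Lemma~\ref{lem:zerocols}, then forces $\rho'=\sigma_{F_AF_B}\otimes\ketbra{\psi}{\psi}_{Q_AQ_B}$ for some $\sigma\succeq 0$ of unit trace.

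Finally, take any purification $\ket{\Psi}_{E\tilde Q_A\tilde Q_B}$ of $\rho_{\tilde Q_A\tilde Q_B}$ and set $\ket{\Psi'}:=(\mathbb{I}_E\otimes V_A\otimes V_B)\ket{\Psi}$. Because $\ket{\Psi'}$ is a purification of $\sigma\otimes\ketbra{\psi}{\psi}$ whose reduced state on $Q_AQ_B$ is the pure state $\ketbra{\psi}{\psi}$, a Schmidt-decomposition argument gives $\ket{\Psi'}=\ket{\xi}_{EF_AF_B}\otimes\ket{\psi}_{Q_AQ_B}$ for some $\ket{\xi}$, which serves as the junk state. Commuting $V_A\otimes V_B$ past $\tilde A_x\otimes\tilde B_y$ using the identity for the transformed observables then produces $\ket{\xi}_{\mathrm{Junk}}\otimes(A_x\otimes B_y)\ket{\psi}_{Q_AQ_B}$, the required identity. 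Beyond the decoupling step, the remaining work is essentially bookkeeping built on Lemmas~\ref{lem:Jordan},~\ref{lem:zerocols}, and~\ref{lem:blocklem}.
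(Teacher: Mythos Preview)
Your proposal is correct and follows essentially the same route as the paper: Jordan's lemma to block-diagonalise, invoke the qubit hypothesis on each diagonal block via~\eqref{eq:maxDiag}, decouple the block unitaries into ones depending only on the local flag index, assemble block-diagonal $V_A,V_B$, apply Lemma~\ref{lem:blocklem} to factor the transformed state, and purify. The only notable difference is in how you justify the decoupling: you argue directly that two unitaries conjugating $\{A_x^i\}$ to $\{A_x\}$ differ by an element of the commutant of the target pair, which is $\mathbb{C}\mathbb{I}$ when the target observables do not commute; the paper instead proves two auxiliary lemmas (one showing the intertwiner is diagonal in each eigenbasis, another forcing it to be a scalar when the eigenbases are non-orthogonal) and applies them to the block observables $B_y^j$, invoking the classical--quantum gap in each block to rule out commuting pairs. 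Your commutant phrasing is a compact repackaging of exactly this content, and your explicit handling of zero-weight flags via Lemma~\ref{lem:zerocols} is a minor refinement the paper leaves implicit.
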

\begin{proof}
  We begin by applying Jordan's lemma, and write the observables $\tilde{A}_{x},\tilde{B}_{y}$ in the form of \cref{eq:genobs}. Let $\rho := \mathrm{Tr}_{E}[\ketbra{\Psi}{\Psi}]$ be written in the generic form of \cref{eq:veryGenState}. Then the fact that every diagonal two-qubit block must maximally violate the Bell inequality (see \cref{eq:maxDiag}), combined with the uniqueness assumption of qubit strategies made in the lemma implies the existence of unitaries $U_{A}^{ij}$, $U_{B}^{ij}$ for every diagonal block $(i,j)$ satisfying 
    \begin{equation}
         (U_A^{ij}\otimes U_B^{ij})\rho^{ij}((U_A^{ij})^\dagger\otimes(U_B^{ij})^\dagger) = \ketbra{\psi}{\psi}, \ U_{A}^{ij} A^{i}_{x}(U_{A}^{ij})^\dagger = A_{x}, \ U_{B}^{ij} B^{j}_{y}(U_{B}^{ij})^\dagger = B_{y}\ \forall\ x,y,i,j. \label{eq:qubitRel}
    \end{equation}
    We now need to argue that these unitaries only vary with the local block index of each party, i.e., $U_{A}^{ij} \equiv U_{A}^{i}$ and $U_{B}^{ij} \equiv U_{B}^{j}$. To do so we use the result of the following lemmas.
    \begin{lemma} \label{lem:uptodiag}
Let $E$ and $F$ be two Hermitian operators with non-degenerate eigenspaces such that $E=UFU^\dagger$ for some unitary $U$. Then if $E=VFV^\dagger$ for a unitary $V$, then $V=UD$, where $D$ is diagonal in the eigenbasis of $F$.
  \end{lemma}
\begin{proof}
    $UFU^\dagger=VFV^\dagger$ rearranges to $[V^\dagger U,F]=0$, from which it follows that $V^\dagger U$ is diagonal in the eigenbasis of $F$, i.e., we can write $V=UD$ with $D$ diagonal in the eigenbasis of $F$.
\end{proof}
\begin{lemma}\label{lem:com}  
For $i=0$ and $i=1$, let $E_i$ and $F_i$ be Hermitian operators with non-degenerate eigenspaces such that $E_i=UF_iU^\dagger$ for some unitary $U$, where no eigenstate of $F_0$ is orthogonal to an eigenstate of $F_1$. Then if there exists a unitary $V$ such that $E_i=VF_iV^\dagger$ for $i=0$ and $i=1$, $V$ and $U$ are equal up to a phase.
  \end{lemma}
  \begin{proof}
    Using Lemma~\ref{lem:uptodiag}, for $E_0$ and $F_0$ we can conclude $V=UD_0$, where $D_0$ is diagonal in the eigenbasis of $F_0$.  Similarly, we can conclude that $V=UD_1$, where $D_1$ is diagonal in the eigenbasis of $F_1$. These imply $D_0=D_1$ and hence $D_0$ must be diagonal in the eigenbasis of both $F_0$ and $F_1$. Thus, $D_0=\sum_ie^{\mathrm{i}\alpha_i}\ketbra{i}{i}=\sum_ie^{\mathrm{i}\beta_i}\ketbra{\phi_i}{\phi_i}$, where $\{\ket{i}\}$ and $\{\ket{\phi_i}\}$ are orthonormal eigenbases for $F_0$ and $F_1$. We have
\begin{equation}\label{eq:oneeq}
1=e^{-\mathrm{i}\alpha_i}\bra{i}D_0\ket{i}=\sum_je^{\mathrm{i}(\beta_j-\alpha_i)}|\braket{i}{\phi_j}|^2.
\end{equation}
By assumption, $|\braket{i}{\phi_j}|^2\neq0$ for any $i,j$. Hence~\eqref{eq:oneeq} can only hold if $e^{\mathrm{i}(\beta_j-\alpha_i)}=1$ for all $i,j$, i.e., if $D_1=e^{\mathrm{i}\alpha}\mathbb{I}$ for some $\alpha\in\mathbb{R}$. This implies $V=e^{\mathrm{i}\alpha}U$.
\end{proof}
We can apply \cref{lem:com} to the unitaries $U_{A}^{ij},U_{B}^{ij}$. Let $i \neq i'$, then we know $U_{B}^{ij}B_{y}^{j}(U_{B}^{ij})^\dagger = B_{y}$ and $U_{B}^{i'j}B_{y}^{j}(U_{B}^{i'j})^\dagger = B_{y}$. No eigenstate of $B_{0}^{j}$ can be orthogonal to an eigenstate of $B_{1}^{j}$ (otherwise they would commute and there would be no classical-quantum gap for that block, contradicting \cref{eq:maxDiag}), so applying \cref{lem:com} with $F_{0} = B_{0}^{j}$, $F_{1} = B_{1}^{j}$, $U = U_{B}^{ij}$ and $V=U_{B}^{i'j}$ we find $U_{B}^{ij} = \exp(\mathrm{i}\zeta_{i,i'}^{j})U_{B}^{i'j}$ for some phase $\zeta_{i,i'}^{j}$. By a similar argument $U_{A}^{ij} = \exp(\mathrm{i}\xi_{j,j'}^{i})U_{A}^{ij'}$ for $j \neq j'$. Noting that $U\mapsto e^{\mathrm{i}\phi}U$ preserves $UGU^\dagger$ for any operator $G$, we can drop these phases and define the block diagonal unitaries
    \begin{equation}
        \begin{aligned}
            V_{A} &:= \sum_{i} \ketbra{i}{i} \otimes U_{A}^{i,0} , \\
            V_{B} &:= \sum_{j} \ketbra{j}{j} \otimes U_{B}^{0,j}.
        \end{aligned}
    \end{equation}
    We emphasise that $V_{A},V_{B}$ are local unitaries, since they vary only with the local block index of each party. We have that 
    \begin{equation}
    \begin{aligned}
        \tau&:=(V_A\otimes V_B)\rho_{\tilde{Q}_{A}\tilde{Q}_{B}}(V_A^\dagger\otimes V_B^\dagger) \\
        &= \sum_{ii'jj'} \ketbra{i}{i'} \otimes \ketbra{j}{j'} \otimes (U_{A}^{i,0} \otimes U_{B}^{0,j}) \rho^{(i,i'),(j,j')} (U_{A}^{i',0} \otimes U_{B}^{0,j'})^{\dagger}.
    \end{aligned}
    \end{equation}
     In particular, the diagonal blocks of $\tau$ ($i=i'$ and $j=j'$) are of the form $\gamma_{i,j}\ketbra{\psi}{\psi}$. This is the form required for Lemma~\ref{lem:blocklem} up to a change of notation (each pair of values of $(i,j)$ corresponds to a value of $\mu$).  Lemma~\ref{lem:blocklem} gives that $\tau=\sigma\otimes\ketbra{\psi}{\psi}$ where $\sigma$ is some density operator on $\mathcal{H}_{F_A}\otimes\mathcal{H}_{F_B}$. It follows that any purification of $\rho_{\tilde{Q}_{A}\tilde{Q}_{B}}$ takes the form
    \begin{equation}
        \ket{\Psi}_{E\tilde{Q}_{A}\tilde{Q}_{B}} = V^{\dagger}\Big[\ket{\xi}_{EF_{A}F_{B}}\otimes\ket{\psi}_{Q_{A}Q_{B}}\Big],
    \end{equation}
    where $V=\mathbb{I}_E\otimes V_A\otimes V_B$, and $\ket{\xi}_{EF_AF_B}$ is a purification of $\sigma$. For the observables, we have
    \begin{equation}
        V_{A}\tilde{A}_{x}V_{A}^{\dagger} = \sum_{i} \ketbra{i}{i} \otimes U_{A}^{i,0}A_{x}^{i}(U_{A}^{i,0})^{\dagger} = \sum_{i} \ketbra{i}{i} \otimes A_{x} = \mathbb{I}_{F_{A}}\otimes A_{x},
    \end{equation}
    where we applied \cref{eq:qubitRel}, and similarly for $V_{B}\tilde{B}_{y}V_{B}^{\dagger}$. From this it follows that
    \begin{align}
        V(\mathbb{I}_{E}\otimes\tilde{A}_x\otimes\tilde{B}_y)V^{\dagger}V\ket{\Psi}_{E\tilde{Q}_{A}\tilde{Q}_{B}}=\ket{\xi}_{EF_{A}F_{B}}\otimes(A_x\otimes B_y)\ket{\psi}_{Q_{A}Q_{B}}\ \forall\ x,y,
    \end{align}
    which is the general self-testing statement with the junk space identified with $EF_AF_B$. 
\end{proof}

\subsection{Further two-qubit reductions for Bell inequalities without marginal terms}
In this subsection, we show how the qubit strategy reductions presented in~\cite{PABGMS,Bhavsar2023} can also be applied in the self-testing scenario. Because of Lemma~\ref{lem:blocks}, it suffices to consider the two-qubit case. We define the following Bell basis for later use: $\{\ket{\Phi_{0}},\ket{\Phi_{1}},\ket{\Phi_{2}},\ket{\Phi_{3}}\}$ where 
\begin{align}
    \ket{\Phi_{0}}&= \frac{1}{\sqrt{2}}(\ket{00} + \ket{11}), \nonumber \\
    \ket{\Phi_{1}}&= \frac{1}{\sqrt{2}}(\ket{00} - \ket{11}), \nonumber \\
    \ket{\Phi_{2}}&= \frac{1}{\sqrt{2}}(\ket{01} + \ket{10}), \nonumber \\
    \ket{\Phi_{3}}&= \frac{1}{\sqrt{2}}(\ket{01} - \ket{10}),
\end{align}
and write $\Phi_{\alpha} = \ketbra{\Phi_{\alpha}}{\Phi_{\alpha}}$.

In \cref{lem:reduce1}, we show that for every two-qubit strategy which maximally violates the Bell inequality, we can construct another strategy with reduced parameters, which also maximally violates the Bell inequality. Note, the operations used to construct this reduced strategy are not allowed under the self-testing definition. Instead, we are concerned with uniqueness; in \cref{lem:reduce2} we will prove that a unique form of the reduced strategy implies a unique form of the general strategy from which it was constructed (all up to local unitaries), when the target state is maximally entangled.     

\begin{lemma}
    Let $\rho$ be a two-qubit state, and $A_x$ and $B_y$ be qubit observables with eigenvalues $\pm 1$ for $x,y\in\{0,1\}$. Let $P(A_x,B_y)$ be a linear function of $\{\mathbb{I}\}\cup\{A_xB_y\}_{x,y}$ with real coefficients satisfying $\mathrm{Tr}[\rho P(A_x,B_y)] = 0$. Then there exists another two-qubit state $\rho'$, and observables $A'_x$, $B'_y$ satisfying $\mathrm{Tr}[\rho' P(A'_x,B'_y)] = 0$, where 
    \begin{equation}
    \begin{aligned}
        \rho' &= \sum_{\alpha=0}^{3}\lambda_{\alpha}'\ketbra{\Phi_{\alpha}}{\Phi_{\alpha}}, \\
        A'_x &= \cos(a_x')\,\sigma_{Z} + \sin(a_x')\, \sigma_{X}, \\
        B'_y &= \cos(b_y')\,\sigma_{Z} + \sin(b_y')\, \sigma_{X}, 
    \end{aligned} \label{eq:red}
    \end{equation}
    for some $\lambda_{\alpha}' \geq 0$, $\sum_{\alpha}\lambda_{\alpha}' = 1$ and $a_x',b_y' \in \mathbb{R}$ for all $x,y$. \label{lem:reduce1}
\end{lemma}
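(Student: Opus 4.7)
The plan is to proceed via three local-unitary reductions of the original strategy, each of which preserves the value $\mathrm{Tr}[\rho P(A_x,B_y)]$, and then to replace the remaining state by a Bell-diagonal one with matching $XX$ and $ZZ$ correlators.

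First, write each qubit observable as $\vec n\cdot\vec\sigma$ with $|\vec n|=1$ and apply local unitaries $U_A,U_B$ that rotate the plane spanned by $\{\vec n^A_0,\vec n^A_1\}$ (and similarly for Bob) onto the $XZ$-plane. After the conjugation by $U_A\otimes U_B$ the observables take the form $\cos(a_x)\,\sigma_Z+\sin(a_x)\,\sigma_X$ (and analogously for Bob), the state becomes $\rho_1=(U_A\otimes U_B)\rho(U_A^\dagger\otimes U_B^\dagger)$, and every correlator $\langle A_x\otimes B_y\rangle$ is trivially preserved.

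Next, further local rotations generated by $\sigma_Y$ stabilise this $XZ$-plane structure (only relabelling the angles $a_x,b_y$) while acting as arbitrary $2\times 2$ real orthogonal transformations on the $XZ$-block of the correlation tensor $T_{\mu\nu}:=\mathrm{Tr}[\rho_1\,\sigma_\mu\otimes\sigma_\nu]$. A singular-value decomposition of this $2\times 2$ block, combined with a possible Pauli reflection ($\sigma_X$ or $\sigma_Z$) to absorb any sign freedom from $O(2)\setminus SO(2)$, picks out two rotations that diagonalise it; afterwards the state $\rho_2$ satisfies $T_{XZ}^{(2)}=T_{ZX}^{(2)}=0$ and the observables stay in the $XZ$-plane with some new angles $a_x',b_y'$.

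Finally, set $\rho'=\sum_\alpha\lambda_\alpha'\ketbra{\Phi_\alpha}{\Phi_\alpha}$ with weights $\lambda_\alpha' = (1\pm T_{XX}^{(2)}\pm T_{YY}'\pm T_{ZZ}^{(2)})/4$ (with signs matching the Bell-basis enumeration) so that $T_{XX}'=T_{XX}^{(2)}$ and $T_{ZZ}'=T_{ZZ}^{(2)}$ while $T_{YY}'$ is left free. Because $P$ has no marginal terms and Bell-diagonal states automatically have $T_{\mu\nu}'=0$ for $\mu\neq\nu$, the only surviving dependence of $\mathrm{Tr}[\rho'P(A_x',B_y')]$ is on $T_{XX}$ and $T_{ZZ}$, which agree with those of $\rho_2$; hence $\mathrm{Tr}[\rho'P(A_x',B_y')]=0$. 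The key subtlety, and the main obstacle I anticipate, is verifying that a $T_{YY}'$ exists making all $\lambda_\alpha'\ge 0$: the linear inequalities collapse to $|T_{XX}-T_{ZZ}|+|T_{XX}+T_{ZZ}|=2\max(|T_{XX}|,|T_{ZZ}|)\le 2$, which holds because correlators of any two-qubit state satisfy $|T_{\mu\nu}|\le 1$.
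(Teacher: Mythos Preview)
Your argument is correct and takes a genuinely different route from the paper. The paper first rotates the observables into the $XZ$-plane (as you do), but then proceeds by \emph{symmetrizing the state}: it averages $\rho$ with $(\sigma_Y\otimes\sigma_Y)\rho(\sigma_Y\otimes\sigma_Y)$ (using that $\sigma_Y O\sigma_Y=-O$ for $O$ in the $XZ$-plane, so the bilinear $P$ is invariant), then averages with the complex conjugate in the Bell basis (using that $\sigma_{X/Z}\otimes\sigma_{X/Z}$ are real there), and finally invokes results from \cite{PABGMS,Bhavsar2023} to locally-unitarily diagonalize the resulting state in the Bell basis while keeping the observables in $XZ$-form. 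By contrast, you work directly with the $2\times2$ $XZ$-block of the correlation tensor: you diagonalize it by an SVD implemented via $Y$-rotations (plus a Pauli reflection for the $O(2)\setminus SO(2)$ case), and then \emph{replace} the state by a freshly constructed Bell-diagonal $\rho'$ matching $T_{XX}$ and $T_{ZZ}$, verifying positivity via the elementary bound $|T_{\mu\nu}|\le 1$. Your approach is more self-contained and avoids citing external reductions. The paper's approach, however, has a structural advantage for what comes next: because its $\rho'$ is obtained from $\rho$ through convex averages and local unitaries, one can immediately ``trace back'' in the subsequent lemma (if $\rho'$ is pure then each averaged term must equal it, hence so must $\rho$). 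In your construction $\rho'$ is not derived from $\rho_2$ this way, so recovering purity of the original state would require a separate (though easy) argument from $T_{XX}=T_{ZZ}=1$.
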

\begin{proof}
  Without loss of generality, we can assume each party rotates their local basis such that
\begin{align}
        A_{x} = \cos a_{x} \, \sigma_{Z} + \sin a_{x} \, \sigma_{X}, \nonumber \\
        B_{y} = \cos b_{y} \, \sigma_{Z} + \sin b_{y} \, \sigma_{X}.\label{eq:mmts}
\end{align}
We further suppose Alice rotates such that $A_0=\sigma_Z$, i.e., $a_0=0$. Note that Bob could also rotate such that $B_0=\sigma_Z$ but we do not apply this and keep $b_0$ and $b_1$ unconstrained. The state remains arbitrary. Given the form of the polynomial $P$, we have $(\sigma_{Y} \otimes \sigma_{Y})P(A_{x},B_{y})(\sigma_{Y} \otimes \sigma_{Y}) = P(A_{x},B_{y})$ (note this follows from $\sigma_{Y}O\sigma_{Y} = - O$ for any observable $O = \cos(t) \, \sigma_{Z} + \sin(t) \, \sigma_{X}, \ t \in \mathbb{R}$).

Given a state $\rho$ and defining
    \begin{equation}
        \bar{\rho} := \frac{\rho + (\sigma_{Y} \otimes \sigma_{Y})\rho(\sigma_{Y} \otimes \sigma_{Y})}{2}, \label{eq:sym}
    \end{equation}
    we have $\mathrm{Tr}[\bar{\rho}P(A_{x},B_{y})]=\mathrm{Tr}[\rho P(A_{x},B_{y})]=0$. References~\cite{PABGMS,Bhavsar2023} showed $\bar{\rho}$ has the form
    \begin{equation}
        \bar{\rho} = \begin{bmatrix}
            \lambda_{0} & 0 & 0 & r\\
            0 & \lambda_{1} & s & 0 \\
            0 & s^{*} &\lambda_{2} & 0 \\
            r^{*} & 0 & 0 & \lambda_{3}
        \end{bmatrix} \label{eq:genState}
    \end{equation}
    in the basis $\{\ket{\Phi_{0}},\ket{\Phi_{1}},\ket{\Phi_{2}},\ket{\Phi_{3}}\}$, where $r,s \in \mathbb{C}$, $\lambda_{\alpha} \geq 0$, $\sum_{\alpha}\lambda_{\alpha} = 1$ and positivity implies $|s|^{2} \leq \lambda_{1}\lambda_{2}$, $|r|^{2} \leq \lambda_{0}\lambda_{3}$. Consider the state $\bar{\rho}^*$ formed by taking the conjugate of $\bar{\rho}$ in the given Bell basis, and notice 
    \begin{equation}
        \mathrm{Tr}[\bar{\rho}P(A_{x},B_{y})] = \mathrm{Tr}[\bar{\rho}P(A_{x},B_{y})]^{*} = \mathrm{Tr}[\bar{\rho}^{*}P(A_{x},B_{y})^{*}] = \mathrm{Tr}[\bar{\rho}^{*}P(A_{x},B_{y})] = 0,
    \end{equation}
    where we used the fact that $P(A_{x},B_{y})$ has real coefficients, and that each of the four operators $\sigma_{X/Z}\otimes\sigma_{X/Z}$ is real in the Bell basis, so, when taking the conjugate in the Bell basis $P(A_x,B_y)^*=P(A_x,B_y)$. It follows that if we define
    \begin{equation}
        \bar{\bar{\rho}} := \frac{\bar{\rho} + \bar{\rho}^*}{2}, \label{eq:cc}
    \end{equation}
    then $\mathrm{Tr}[\bar{\bar{\rho}}P(A_{x},B_{y})] = 0$.
    References~\cite{PABGMS,Bhavsar2023} showed for any state of the form $\bar{\bar{\rho}}$, there exist local unitaries $U_A$ and $U_B$ such that $\rho':=(U_A\otimes U_B)\bar{\bar{\rho}}(U_A^\dagger\otimes U_B^\dagger)$ is diagonal in the Bell basis, and with the property that $U_A$ and $U_B$ maintain the form of the observables (i.e., they still take the form~\eqref{eq:mmts}, just with different angles).  In other words, after applying these unitaries, 
    \begin{equation}
        \rho' = \sum_{\alpha=0}^{3}\lambda_{\alpha}'\ketbra{\Phi_\alpha}{\Phi_\alpha}, \label{eq:bellDiag}
    \end{equation}
    and we have a new set of rotated measurements $A_{x}',B_{y}'$ with modified angles $a_{x}',b_{y}'$.
    Finally, note that
    \begin{equation}
        \mathrm{Tr}[\bar{\bar{\rho}}P(A_x,B_y)]  
        = \mathrm{Tr}[\rho'(U_A\otimes U_B)P(A_x,B_y)(U_A^\dagger\otimes U_B^\dagger)] 
        = \mathrm{Tr}[\rho'P(A_x',B_y')] = 0\,.
\end{equation}
This proves the claim.
\end{proof}

To complete this section we show how uniqueness of the reduced two-qubit strategy implies uniqueness of the generic two-qubit strategy when the target strategy involves a maximally entangled state.

\begin{lemma}\label{lem:reduce2}
    Let $\bar{B}$ be any Bell operator in the bipartite two-input two-output scenario without marginal terms. Suppose every reduced strategy of the form \cref{eq:red} that achieves $\langle \bar{B} \rangle = 0$ is given by $\lambda_{0}' = 1$, $\lambda_{\alpha}' = 0$ for $\alpha \neq 0$ and $(a_{x}',b_{y}') = (\alpha_{x},\beta_{y})$ for some fixed angles $\{\alpha_x\}_x$ and $\{\beta_y\}_y$, up to local unitaries that maintain the form~\eqref{eq:red}. Then, up to local unitaries, every two-qubit strategy that achieves $\langle \bar{B} \rangle = 0$ is given by 
        \begin{equation}
        \tilde{\rho}^{\mathsf{q}} = \Phi_{0}, \  \tilde{A}_{x}^{\mathsf{q}} = \cos(\alpha_{x}) \, \sigma_{Z} + \sin(\alpha_{x}) \, \sigma_{X}, \ \tilde{B}_{y}^{\mathsf{q}} = \cos(\beta_{y}) \, \sigma_{Z} + \sin(\beta_{y}) \, \sigma_{X}. \label{eq:red2}
    \end{equation}
\end{lemma}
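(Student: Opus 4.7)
The plan is to unwind the reduction of \cref{lem:reduce1} step by step, exploiting the observation that it is built from two types of operations: local unitaries, which preserve unitary equivalence classes, and two symmetrizing averages (equations~\eqref{eq:sym} and~\eqref{eq:cc}), which are convex mixtures of density operators. Since a convex mixture of density operators equals a pure state only when every component of the mixture equals that same pure state, the purity of the reduced state $\rho'$ (inherited from $\Phi_0$) will force purity of the original two-qubit state $\rho$, and they will have to agree as pure states up to the local unitaries introduced during the reduction.

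Concretely, given a two-qubit strategy $(\tilde{\rho}^{\mathsf{q}},\tilde{A}_x^{\mathsf{q}},\tilde{B}_y^{\mathsf{q}})$ saturating $\langle\bar{B}\rangle=0$, I first apply the local-unitary step from the proof of \cref{lem:reduce1} to bring both parties' observables into the $(\sigma_Z,\sigma_X)$-plane, obtaining $(\rho,A_x,B_y)$; call the corresponding rotations $R_A$, $R_B$. I then run the full reduction: form $\bar{\rho}$ via~\eqref{eq:sym} and $\bar{\bar{\rho}}$ via~\eqref{eq:cc}, and apply the local unitaries $U_A,U_B$ from the last step of the proof of \cref{lem:reduce1} to produce a reduced strategy $(\rho',A'_x,B'_y)$ with $\rho'=(U_A\otimes U_B)\bar{\bar{\rho}}(U_A^\dagger\otimes U_B^\dagger)$, $A'_x = U_A A_x U_A^\dagger$, and $B'_y = U_B B_y U_B^\dagger$. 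By \cref{lem:reduce1} this reduced strategy also saturates $\langle\bar{B}\rangle=0$, so the hypothesis supplies local unitaries $V_A,V_B$ such that $(V_A\otimes V_B)\rho'(V_A^\dagger\otimes V_B^\dagger)=\Phi_0$, $V_A A'_x V_A^\dagger = \cos(\alpha_x)\sigma_Z+\sin(\alpha_x)\sigma_X$, and the analogous identity holds for $V_B B'_y V_B^\dagger$.

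The crucial step is to pull purity back through the two averages. Because $\rho'$ is pure and unitarily equivalent to $\bar{\bar{\rho}}$, the state $\bar{\bar{\rho}}$ is itself pure. But $\bar{\bar{\rho}}=(\bar{\rho}+\bar{\rho}^*)/2$ is a convex combination of density operators, so both $\bar{\rho}$ and $\bar{\rho}^*$ must coincide with $\bar{\bar{\rho}}$. Applying the same convexity argument to $\bar{\rho}=(\rho+(\sigma_Y\otimes\sigma_Y)\rho(\sigma_Y\otimes\sigma_Y))/2$ yields $\rho=(\sigma_Y\otimes\sigma_Y)\rho(\sigma_Y\otimes\sigma_Y)=\bar{\rho}=\bar{\bar{\rho}}$. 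Setting $S_A:=V_A U_A$ and $S_B:=V_B U_B$ (still local), I then obtain $(S_A\otimes S_B)\rho(S_A^\dagger\otimes S_B^\dagger)=\Phi_0$ and $S_A A_x S_A^\dagger=\cos(\alpha_x)\sigma_Z+\sin(\alpha_x)\sigma_X$ (with the analogue for Bob), so $(\rho,A_x,B_y)$ already matches the target form~\eqref{eq:red2} up to local unitaries. Composing $S_A$, $S_B$ with the initial rotations $R_A$, $R_B$ delivers the statement for the original $(\tilde{\rho}^{\mathsf{q}},\tilde{A}_x^{\mathsf{q}},\tilde{B}_y^{\mathsf{q}})$.

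The main subtlety is the bookkeeping to verify that every introduced unitary stays local and that the convex-combination-is-pure argument applies cleanly at both symmetrization layers; no estimate or quantitative bound is required once purity has been propagated backward. The argument is structurally simple precisely because the hypothesis forces the reduced state to be pure, so the averaging trick of taking apart~\eqref{eq:sym} and~\eqref{eq:cc} does all the heavy lifting.
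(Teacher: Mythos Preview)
Your proposal is correct and follows essentially the same approach as the paper: both unwind the reduction of \cref{lem:reduce1} by observing that the two symmetrization steps~\eqref{eq:sym} and~\eqref{eq:cc} are convex mixtures, so purity of $\rho'$ (forced by the hypothesis) propagates back to give $\rho=\bar{\rho}=\bar{\bar{\rho}}$, after which only local unitaries remain. The paper states this more tersely (relegating the convexity argument to a footnote), while you spell out the bookkeeping of the various local unitaries more explicitly, but the underlying argument is identical.
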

\begin{proof}
Consider a strategy of the form~\eqref{eq:red} with $\lambda_{0}' = 1$, $\lambda_{\alpha}' = 0$ for $\alpha \neq 0$ and $(a_{x}',b_{y}') = (\alpha_{x},\beta_{y})$. Tracing back the construction in the proof of Lemma~\ref{lem:reduce1}, we have that $\rho' = \Phi_{0}$ implies $\bar{\bar{\rho}} = \hat{\Phi}_{0}$, $\bar{\rho} = \hat{\Phi}_{0}$ and $\rho = \hat{\Phi}_{0}$, where $\hat{\Phi}_{0} = (U_{A}^{\dagger}\otimes U_{B}^{\dagger})\Phi_{0}(U_{A} \otimes U_{B})$\footnote{This follows from the fact that $\alpha\rho+(1-\alpha)\sigma$ can only be pure if $\rho=\sigma$.}. Therefore, any initial strategy $(\rho,\{A_{x}\}_{x},\{B_{y}\}_{y})$ for which $\langle \bar{B} \rangle = 0$ must be of the form $(\hat{\Phi}_{0},\{A_{x}\}_{x},\{B_{y}\}_{y})$. If Alice and Bob apply the local unitaries $U_{A},U_{B}$, we obtain $(\Phi_{0},\{A_{x}'\}_{x},\{B_{y}'\}_{y})$ where $A_{x}',B_{y}'$ are given by the angles $\alpha_{x},\beta_{y}$.
\end{proof}

\section{Proof of Proposition 1} \label{app:selfProof}

\noindent \textbf{Proposition 1.} \textit{Let $\theta,\phi,\omega \in \mathbb{R}$. Define the family of Bell expressions $\langle B_{\theta,\phi,\omega} \rangle$,
\begin{multline}
    \langle B_{\theta,\phi,\omega} \rangle = \cos(\theta + \phi)\cos(\theta+\omega)\, \big \langle A_{0}\big( \cos \omega \, B_{0} - \cos \phi \, B_{1} \big) \big \rangle    \\ +  \cos \phi \cos \omega \, \big \langle  A_{1}\big( -\cos(\theta + \omega) \, B_{0} + \cos (\theta + \phi) \, B_{1} \big) \big \rangle. 
\end{multline}
Then the following hold:
\begin{enumerate}[(i)]
    \item The local bounds are given by $\pm \eta^{\mathrm{L}}_{\theta,\phi,\omega}$, where $ \eta^{\mathrm{L}}_{\theta,\phi,\omega} = \max_{\pm} \Big\{\left| \cos(\theta+\omega)\cos(\omega)\big( \cos(\theta+\phi)\pm\cos(\phi) \big)\right|+\left| \cos(\theta+\phi)\cos(\phi)\big( \cos(\theta+\omega)\pm\cos(\omega) \big)\right| \Big\}.$
    \item If 
\begin{equation}
\cos(\theta+\phi)\cos(\phi)\cos(\theta+\omega)\cos(\omega)<0, \ifarxiv\tag{\ref{cond}}\else\fi\label{eq:cond1App}
\end{equation}
then the quantum bounds are given by $\pm \eta^{\mathrm{Q}}_{\theta,\phi,\omega}$, where $\eta^{\mathrm{Q}}_{\theta,\phi,\omega} = \sin(\theta)\sin(\omega-\phi)\sin(\theta+\omega+\phi)$.
\item $\left|\eta^{\mathrm{Q}}_{\theta,\phi,\omega}\right|> \eta^{\mathrm{L}}_{\theta,\phi,\omega}\iff\cos(\theta+\phi)\cos(\phi)\cos(\theta+\omega)\cos(\omega)<0.$
    \item 
    If~\eqref{eq:cond1App} holds, up to local isometries there is a unique strategy that achieves $\langle B_{\theta,\phi,\omega} \rangle = B_{\theta,\phi,\omega}^{\mathrm{Q}}$: 
    \begin{gather}
        \rho_{Q_{A}Q_{B}} = \ketbra{\psi}{\psi}, \ \mathrm{where} \ \ket{\psi} = \frac{\ket{00} + i\ket{11}}{\sqrt{2}}, \nonumber \\
        A_{0} = \sigma_{X}, \
        A_{1} = \cos \theta \, \sigma_{X} + \sin \theta \, \sigma_{Y}, \nonumber \\
        B_{0} = \cos \phi \, \sigma_{X} + \sin \phi \, \sigma_{Y}, \
        B_{1} = \cos \omega \, \sigma_{X} + \sin \omega \, \sigma_{Y}. \label{eq:appstrat}
    \end{gather}
\end{enumerate}}

\vspace{0.2cm}

\begin{proof}
{\bf Part (i)}\\
\noindent To prove the the above statement, we follow the same steps taken in~\cite{WBC}. The maximum local value of the Bell expression $\langle B_{\theta,\phi,\omega} \rangle$ can be found by optimizing over local deterministic strategies, obtained by setting $\langle A_{x}B_{y} \rangle=a_{x}b_{y}$ for $a_{x},b_{y}\in \{\pm1\}$,
\begin{multline}  \eta^{\mathrm{L}}_{\theta,\phi,\omega} = \max_{a_{x},b_{y} \in \{\pm 1\}}\Big\{\cos(\theta+\omega)\cos(\omega)\big(a_0\cos(\theta+\phi)-a_1\cos(\phi)\big)b_0\\-\cos(\theta+\phi)\cos(\phi)\big(a_0\cos(\theta+\omega)-a_1\cos(\omega)\big)b_1\Big\}.
\end{multline}
To maximize the above expression, we want
\begin{align}
    b_0&=\mathrm{Sgn}\Big[\cos(\theta+\omega)\cos(\omega)\big(a_0\cos(\theta+\phi)-a_1\cos(\phi)\big)\Big], \\
b_1&=\mathrm{Sgn}\Big[-\cos(\theta+\phi)\cos(\phi)\big(a_0\cos(\theta+\omega)-a_1\cos(\omega)\big)\Big],
\end{align}
where $\mathrm{Sgn}[\cdot]$ is the sign function. We are now left with two choices: either $a_0=a_1$, or $a_0=-a_1$, which leaves the following maximum value
\begin{multline}
\eta^{\mathrm{L}}_{\theta,\phi,\omega} = \max_{\pm} \Big\{\left| \cos(\theta+\omega)\cos(\omega)\big( \cos(\theta+\phi)\pm\cos(\phi) \big)\right|+\left| \cos(\theta+\phi)\cos(\phi)\big( \cos(\theta+\omega)\pm\cos(\omega) \big)\right| \Big\}. 
\end{multline}
By switching the sign of $b_0$ and $b_1$, we also obtain minimum value $-\eta^{\mathrm{L}}_{\theta,\phi,\omega}$. This concludes part $(i)$.
\bigskip

{\bf Part (ii)}\\
\noindent For part $(ii)$ and $(iv)$, we use the following lemma, which provides the SOS decomposition found in~\cite{barizien2023} for the shifted Bell operator $\bar{B}_{\theta,\phi,\omega} = \eta^{\mathrm{Q}}_{\theta,\phi,\omega}\mathbb{I} - B_{\theta,\phi,\omega}$.
\begin{lemma}[\cite{barizien2023}] \label{lem:SOS}
    Let us define the polynomials $\{R_{0},R_{1}\}$, where
    \begin{align}
    R_{0} &= \sin \theta \, B_{0} + \cos(\theta + \phi) A_{0} - \cos \phi\, A_{1}, \\
    R_{1} &= \sin \theta \, B_{1} + \cos (\theta+\omega) A_{0} - \cos \omega\, A_{1}.
\end{align}
Then the shifted Bell operator $\bar{B}_{\theta,\phi,\omega}$ has the decomposition,
\begin{equation}
    \bar{B}_{\theta,\phi,\omega} = c_{0}R_{0}^{\dagger}R_{0} + c_{1}R_{1}^{\dagger}R_{1}, \label{eq:SOS1}
\end{equation}
where
\begin{align}
    c_{0} = \frac{-\cos \omega \cos(\theta+\omega)}{2\sin \theta},  \  
    c_{1} = \frac{ \cos \phi \cos(\theta+\phi)}{2\sin \theta}.
\end{align}
\end{lemma}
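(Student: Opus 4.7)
Since $A_0, A_1, B_0, B_1$ are Hermitian with $A_x^2 = B_y^2 = \mathbb{I}$, and Alice's operators commute with Bob's, the polynomials $R_0$ and $R_1$ are Hermitian, so $R_j^\dagger R_j = R_j^2$. The plan is a direct algebraic verification: expand $R_0^2$ and $R_1^2$, form the combination $c_0 R_0^2 + c_1 R_1^2$, and match it term by term against $\bar{B}_{\theta,\phi,\omega} = \eta^{\mathrm{Q}}_{\theta,\phi,\omega}\mathbb{I} - B_{\theta,\phi,\omega}$.

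Expanding $R_0^2$ produces (i) a multiple of the identity with coefficient $\sin^2\theta + \cos^2(\theta+\phi) + \cos^2\phi$, (ii) the commuting cross-terms $2\sin\theta\cos(\theta+\phi)\,A_0 B_0 - 2\sin\theta\cos\phi\, A_1 B_0$ (no ordering issue since Alice and Bob commute), and (iii) the Hermitian anticommutator $-\cos(\theta+\phi)\cos\phi\,\{A_0,A_1\}$. The expansion of $R_1^2$ is analogous under $\phi \mapsto \omega$ and $B_0 \mapsto B_1$. I would then verify three things. First, the $\{A_0,A_1\}$ contributions cancel: the combined coefficient $-c_0\cos(\theta+\phi)\cos\phi - c_1\cos(\theta+\omega)\cos\omega$ vanishes immediately on substituting the formulas for $c_0$ and $c_1$. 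Second, each correlator $A_x B_y$ appears in only one of $R_0^2, R_1^2$, so multiplying by $c_0$ or $c_1$ produces exactly the coefficient required by $-B_{\theta,\phi,\omega}$ after a single cancellation of a $\sin\theta$ factor.

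I expect the remaining step, namely showing that the identity coefficient
\[
c_0\bigl(\sin^2\theta + \cos^2(\theta+\phi) + \cos^2\phi\bigr) + c_1\bigl(\sin^2\theta + \cos^2(\theta+\omega) + \cos^2\omega\bigr)
\]
equals $\eta^{\mathrm{Q}}_{\theta,\phi,\omega} = \sin\theta\sin(\omega-\phi)\sin(\theta+\omega+\phi)$, to be the main obstacle. Using the shorthand $p = \cos\phi$, $P = \cos(\theta+\phi)$, $q = \cos\omega$, $Q = \cos(\theta+\omega)$, clearing the denominator $2\sin\theta$ reduces the task to the trigonometric identity
\[
\sin^2\theta\,(pP - qQ) + pP(q^2 + Q^2) - qQ(p^2 + P^2) = 2\sin^2\theta\,\sin(\omega-\phi)\sin(\theta+\omega+\phi).
\]
The plan is to apply the product-to-sum formula $2\cos A\cos B = \cos(A-B)+\cos(A+B)$ together with the difference formula $\cos A - \cos B = -2\sin\frac{A+B}{2}\sin\frac{A-B}{2}$ to rewrite $pP-qQ$, $Pq-Qp$ and $pq-PQ$ as the closed-form products $\sin(\omega-\phi)\sin(\theta+\omega+\phi)$, $\sin\theta\sin(\omega-\phi)$ and $\sin\theta\sin(\theta+\omega+\phi)$ respectively. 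The second and third terms of the numerator then factor cleanly as $pP(q^2+Q^2)-qQ(p^2+P^2) = (pq-PQ)(Pq-Qp)$, yielding a second copy of $\sin^2\theta\,\sin(\omega-\phi)\sin(\theta+\omega+\phi)$ that combines with the $\sin^2\theta\,(pP-qQ)$ contribution to give exactly $2\sin^2\theta\,\sin(\omega-\phi)\sin(\theta+\omega+\phi)$; dividing by $2\sin\theta$ recovers $\eta^{\mathrm{Q}}_{\theta,\phi,\omega}$ and completes the proof.
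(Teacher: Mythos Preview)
Your proposal is correct and follows exactly the approach the paper indicates: the paper's proof of this lemma consists of the single sentence ``The equality can be verified by direct calculation, noting that $A_0^2=A_1^2=B_0^2=B_1^2=\mathbb{I}$,'' and you have carried out precisely that direct calculation in full detail, including the trigonometric identity for the $\mathbb{I}$-coefficient that the paper leaves implicit.
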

\begin{proof}
The equality in \cref{eq:SOS1} can be verified by direct calculation, noting that $A_0^2=A_1^2=B_0^2=B_1^2=\id$. To ensure one of $\pm\bar{B}_{\theta,\phi,\omega} \succeq 0$, we need $\mathrm{sgn}(c_{0}) = \mathrm{sgn}(c_{1})$, or equivalently $c_{0}c_{1} > 0$, which is verified by noting, for the range of parameters satisfying \cref{eq:cond1App}, 
\begin{equation*}
    c_{0}c_{1} = \frac{-\cos(\theta+\phi)\cos(\phi)\cos(\theta+\omega)\cos(\omega)}{4 \sin^{2}(\theta)} > 0.\qedhere
\end{equation*}
\end{proof}
Cast into the Gram matrix representation described earlier in this section, we find $\bm{R} = [R_{0},R_{1}]^{\mathrm{T}}$, and
\begin{equation}
    \bm{M}= \frac{1}{2\sin \theta}\begin{bmatrix}
    -\cos \omega \cos(\theta + \omega) & 0 \\
    0 & \cos \phi \cos(\theta+\phi)
    \end{bmatrix}.
\end{equation}
Condition 1 in \cref{eq:cond1App} then translates either to $\bm{M}$ or $-\bm{M}$ being positive definite ($\succ 0$), and the SOS polynomials are given either by the entries of the vector $\sqrt{\bm{M}}\bm{R}$ or $\sqrt{-\bm{M}}\bm{R}$, given by $P_{0} = \sqrt{|c_{0}|}R_{0}$ $P_{1} = \sqrt{|c_{1}|}R_{1}$. \cref{lem:SOS} proves $\pm \bar{B}_{\theta,\phi,\omega} \succeq 0$ when $\pm M \succ 0$ respectively, and since $\langle \bar{B}_{\theta,\phi,\omega} \rangle = 0$ is achievable by the strategy in \cref{eq:appstrat}, this proves Part $(ii)$.
\bigskip

{\bf Part (iii)}\\
\noindent For Part $(iii)$, note that $\max\{f,g\}< h\iff f<h$ and $g<h$, and furthermore that
\begin{equation}\label{eq:mods}
|f|+|g|<|h|\iff2|f||g|<|h|^2-|f|^2-|g|^2\,.
\end{equation}
That the quantum maximum is always greater than the classical one is equivalent to $\max\{|f_1|+|g_1|,|f_2|+|g_2|\}<|h|$, where
\begin{align*}
f_1&=\cos(\theta+\omega)\cos(\omega)\big( \cos(\theta+\phi)+\cos(\phi) \big),\\ g_1&=\cos(\theta+\phi)\cos(\phi)\big( \cos(\theta+\omega)+\cos(\omega) \big),\\ f_2&=\cos(\theta+\omega)\cos(\omega)\big( \cos(\theta+\phi)-\cos(\phi) \big),\\ g_2&=\cos(\theta+\phi)\cos(\phi)\big( \cos(\theta+\omega)-\cos(\omega) \big)\text{ and}\\
h&=\sin(\theta)\sin(\omega-\phi)\sin(\theta+\omega+\phi)\,.
\end{align*}

Using~\eqref{eq:mods}, the expression $|f_1|+|g_1|<|h|$ is equivalent to
\begin{align}\label{eq:mods1}
|f_1g_1|<-\cos(\theta+\phi)\cos(\phi)\cos(\theta+\omega)\cos(\omega)(2+\cos(\theta+2\phi)+\cos(\theta+2\omega))\cos^2(\theta/2)\,.
\end{align}
For this to hold we require
\begin{equation}\label{eq:req}
  \cos(\theta+\phi)\cos(\phi)\cos(\theta+\omega)\cos(\omega)<0\,,
\end{equation}
which is~\eqref{eq:cond1App}.

We have
\begin{align*}  |f_1g_1|&=\left|\cos(\theta+\phi)\cos(\phi)\cos(\theta+\omega)\cos(\omega)(\cos(\theta+\phi)+\cos(\phi))(\cos(\theta+\omega)+\cos(\omega))\right|\\
  &=-\cos(\theta+\phi)\cos(\phi)\cos(\theta+\omega)\cos(\omega)\left|(\cos(\theta+\phi)+\cos(\phi))(\cos(\theta+\omega)+\cos(\omega))\right|,
\end{align*}
and hence, using~\eqref{eq:req}, we can reduce~\eqref{eq:mods1} to
\begin{align} \left|(\cos(\theta+\phi)+\cos(\phi))(\cos(\theta+\omega)+\cos(\omega))\right|<(2+\cos(\theta+2\phi)+\cos(\theta+2\omega))\cos^2(\theta/2)\,.
\end{align}
Squaring both sides and rearranging we find that this is equivalent to
\begin{align}\label{eq:pos1}
\cos^4(\theta/2)\sin^2(\omega-\phi)\sin^2(\theta+\omega+\phi)>0\,,
\end{align}
i.e., none of the terms in the product on the left hand side of~\eqref{eq:pos1} can be zero.

Similarly, replacing $f_1$ and $g_1$ with $f_2$ and $g_2$ gives
\begin{align*}
|f_2g_2|=\left|\cos(\theta+\phi)\cos(\phi)\cos(\theta+\omega)\cos(\omega)(\cos(\theta+\phi)-\cos(\phi))(\cos(\theta+\omega)-\cos(\omega))\right|,
\end{align*}
and hence (again under the condition~\eqref{eq:req}) we can reduce to
\begin{align*}
\left|(\cos(\theta+\phi)-\cos(\phi))(\cos(\theta+\omega)-\cos(\omega))\right|<(2-\cos(\theta+2\phi)-\cos(\theta+2\omega))\sin^2(\theta/2)\,.
\end{align*}
Squaring both sides and rearranging we find that this is equivalent to
\begin{align}\label{eq:pos2}
\sin^4(\theta/2)\sin^2(\omega-\phi)\sin^2(\theta+\omega+\phi)>0\,.
\end{align}

Hence, $\max\{|f_1|+|g_1|,|f_2|+|g_2|\}<|h|$ implies~\eqref{eq:req}, \eqref{eq:pos1} and~\eqref{eq:pos2}. That~\eqref{eq:req}, \eqref{eq:pos1} and~\eqref{eq:pos2} imply $\max\{|f_1|+|g_1|,|f_2|+|g_2|\}<|h|$ can also be seen following the argument above.

Finally, we note that the relations~\eqref{eq:pos1} and~\eqref{eq:pos2} are equivalent to
\begin{equation}
\sin(\theta)\sin(\omega-\phi)\sin(\theta+\omega+\phi)\neq0\,,
\end{equation}
and that, in fact, this is implied by~\eqref{eq:req}.  To establish the latter, note that $\cos(\theta+\phi)\cos(\phi)\cos(\theta+\omega)\cos(\omega)$ can be rewritten as either
\begin{align*}
    &\cos^2(\phi)\cos^2(\theta)\cos^2(\omega)-\cos(\phi)\cos(\omega)\sin(\theta)\left(\sin(\theta+\omega+\phi)-\sin(\theta)\cos(\phi)\cos(\omega)\right)\quad\text{or}\\
    &\cos^2(\omega)\cos^2(\theta+\phi)-\sin(\theta)\cos(\omega)\cos(\theta+\phi)\sin(\omega-\phi).
\end{align*}
Then substituting $\sin(\theta+\omega+\phi)=0$ into the first, $\sin(\omega-\phi)=0$ into the second and $\sin(\theta)=0$ into either gives an expression that cannot be smaller than $0$.  This establishes Part~$(iii)$.
\bigskip

{\bf Part (iv)}\\
We use the results of \cref{sec:jlem}, where we showed in \cref{lem:blocks} that proving there exists a unique two-qubit strategy is sufficient for self-testing in this scenario. To do so, note the family of Bell expressions $\bar{B}_{\theta,\phi,\omega}$ contain no marginal terms, so we can apply \cref{lem:reduce1} to obtain a reduced strategy which also maximally violates the Bell expression. If we can show this reduced strategy is unique, then we can apply \cref{lem:reduce2} to show the general two-qubit strategy is unique. 

\begin{lemma}
    Let $\theta,\phi,\omega\in\mathbb{R}$ be such that~\eqref{eq:cond1App} holds. Every two-qubit strategy $(\tilde{\rho}^{\mathsf{q}},\{\tilde{A}_{x}^{\mathsf{q}}\}_x,\{\tilde{B}_{y}^{\mathsf{q}}\}_y)$ that satisfies $\mathrm{Tr}[\tilde{\rho}^{\mathsf{q}}\bar{B}_{\theta,\phi,\omega}] = 0$ is equivalent to the following strategy up to local unitaries: 
    \begin{equation}
    \begin{aligned}
        \tilde{\rho}^{\mathsf{q}} &= \ketbra{\Phi_{0}}{\Phi_{0}}, \\
        \tilde{A}_{0}^{\mathsf{q}} &= \sigma_{Z}, \\
        \tilde{A}_{1}^{\mathsf{q}} &= \cos(\theta) \, \sigma_{Z} + \sin(\theta) \, \sigma_{X} , \\
        \tilde{B}_{0}^{\mathsf{q}} &=  \sin(\phi) \, \sigma_{Z} + \cos(\phi) \, \sigma_{X} , \\
        \tilde{B}_{1}^{\mathsf{q}} &= \sin(\omega) \, \sigma_{Z} + \cos(\omega) \, \sigma_{X} .
    \end{aligned} \label{eq:equiv}
\end{equation}
\label{lem:solve}
\end{lemma}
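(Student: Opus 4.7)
The plan is to combine the SOS decomposition of \cref{lem:SOS} with the two-qubit reductions of \cref{lem:reduce1,lem:reduce2}. Under \eqref{eq:cond1App} we have $c_0c_1>0$; assume without loss of generality $c_0,c_1>0$ (else swap $\bar B$ for $-\bar B$). Then $\bar B_{\theta,\phi,\omega}=c_0 R_0^\dagger R_0+c_1 R_1^\dagger R_1\succeq 0$, and $\mathrm{Tr}[\tilde\rho^{\mathsf q}\bar B]=0$ forces $R_0\ket\psi=R_1\ket\psi=0$ for every pure $\ket\psi$ in the support of $\tilde\rho^{\mathsf q}$. By \cref{lem:reduce1,lem:reduce2} it is enough to prove uniqueness in the reduced class, where $\tilde\rho^{\mathsf q}=\sum_\alpha\lambda_\alpha\ketbra{\Phi_\alpha}{\Phi_\alpha}$, $A_0=\sigma_Z$, and $A_1,B_0,B_1$ lie in the $(\sigma_Z,\sigma_X)$ plane with unknown angles $a_1,b_0,b_1$. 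In particular $R_0\ket{\Phi_\alpha}=R_1\ket{\Phi_\alpha}=0$ whenever $\lambda_\alpha>0$.

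Direct expansion via the standard Pauli--Bell identities (each single-party Pauli permutes the Bell basis up to signs) shows that $R_0\ket{\Phi_\alpha}$ lands in a two-dimensional Bell subspace, with coefficients drawn from
\[
X_{\pm}:=\pm\sin\theta\cos b_0+\cos(\theta+\phi)-\cos\phi\cos a_1,\qquad Y_{\pm}:=\sin\theta\sin b_0\pm\cos\phi\sin a_1,
\]
together with primed analogues $X'_\pm,Y'_\pm$ obtained from $R_1$ by $\phi\mapsto\omega$, $b_0\mapsto b_1$. The vanishing condition $R_0\ket{\Phi_\alpha}=0$ then reduces to a specific pair: $\{X_+,Y_-\}$ for $\alpha=0$, $\{X_+,Y_+\}$ for $\alpha=1$, $\{X_-,Y_-\}$ for $\alpha=2$, $\{X_-,Y_+\}$ for $\alpha=3$, with analogous pairs from $R_1$. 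The critical step is to rule out multi-Bell-state supports: for each of the six pairs $(\alpha,\alpha')$, unioning the two constraint sets and taking sums and differences collapses the equations into one of three alternatives: (a) $\sin a_1=\sin b_0=\sin b_1=0$, so every observable is $\pm\sigma_Z$ and the Bell value cannot exceed the classical bound; (b) $\cos b_0=\cos b_1=0$ together with $\cos a_1=\cos(\theta+\phi)/\cos\phi=\cos(\theta+\omega)/\cos\omega$, which expands to $\tan\phi=\tan\omega$, so $\sin(\omega-\phi)=0$ and $\eta^{\mathrm Q}_{\theta,\phi,\omega}=0$; or (c) $\sin\theta=0$. Each contradicts the strict gap $|\eta^{\mathrm Q}|>\eta^{\mathrm L}$ granted by part $(iii)$ of Proposition~1 under \eqref{eq:cond1App}, so $\tilde\rho^{\mathsf q}$ must be rank-one on a single Bell state.

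Taking $\ket{\Phi_0}$ as that state (the cases $\alpha\neq 0$ reduce to this one by the local Pauli unitaries $\sigma_Z\otimes I$, $I\otimes\sigma_Z$ and $\sigma_Z\otimes\sigma_Z$, each of which preserves Bell-diagonal form and the $(\sigma_Z,\sigma_X)$ structure of the observables), I would then solve $X_+=Y_-=0$ together with $X'_+=Y'_-=0$: imposing $\sin^2 b_0+\cos^2 b_0=1$ and expanding the relevant cosine identity collapses, after a short trigonometric simplification, to $\cos a_1=\cos\theta$. The same identity drops out of the $b_1$ equations, so the system is consistent and determines $(b_0,b_1)$ uniquely up to a joint sign of the $\sin$ components, a freedom absorbed by conjugation with $\sigma_Z\otimes\sigma_Z$ (which fixes $\ket{\Phi_0}$ and $A_0$). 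This recovers exactly~\eqref{eq:equiv}. The principal obstacle I anticipate is the pairwise elimination of the second paragraph: each of the six Bell-state pairs demands its own recombination of the $X$- and $Y$-constraints, and in each case condition \eqref{eq:cond1App} must be invoked via a distinct trigonometric step, which is where the bulk of the bookkeeping lies.
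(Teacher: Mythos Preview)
Your approach is correct in spirit and uses the same ingredients as the paper (the SOS from \cref{lem:SOS} combined with \cref{lem:reduce1,lem:reduce2}), but it differs in one structural respect and has one small gap.

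\textbf{Order of argument.} The paper reverses your two steps. Rather than first eliminating multi-Bell-state supports via a six-fold pairwise case analysis, it assumes (say) $\lambda_0'>0$, solves $R_0\ket{\Phi_0}=R_1\ket{\Phi_0}=0$ for the angles $a_1,b_0,b_1$ \emph{first}, and only then substitutes these determined angles back into $\|R_\mu\ket{\Phi_\alpha}\|^2$ for $\alpha\neq 0$. After substitution these norms collapse to simple closed forms---$4\lambda_3'\sin^2\theta$, $4\lambda_1'\sin^2\theta\cos^2(\phi+\zeta)$, etc.---each of which is manifestly nonzero under \eqref{eq:cond1App} by a one-line check (e.g.\ $\lambda_1'\neq0$ would force $\phi-\omega\in\pi\mathbb{Z}$, contradicting \eqref{eq:cond1App}). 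This bypasses your six pairwise cases entirely: the bookkeeping you anticipate in your final paragraph simply does not arise.

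\textbf{The $A_0=\sigma_Z$ assumption.} You write that \cref{lem:reduce1} puts you in a class where $A_0=\sigma_Z$ and the state is Bell-diagonal. It does not: the local unitaries $U_A,U_B$ used in \cref{lem:reduce1} to Bell-diagonalize the state generally rotate $A_0$ away from $\sigma_Z$, so in the reduced class $a_0'$ is an arbitrary angle. The paper handles this by keeping $a_0'=-\zeta$ as a free parameter and conjugating by $T=R(\zeta)\otimes R(\zeta)$ purely as a computational device (it fixes $\ket{\Phi_0}$ but not $\ket{\Phi_1},\ket{\Phi_2}$, which is why the residual $\zeta$-dependence shows up in the $\alpha=1,2$ norms). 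Your formulas for $X_\pm,Y_\pm$ implicitly set $\zeta=0$, so as written your pairwise elimination is incomplete; carrying the extra angle through would repair it but adds further to the case-work you already flag as the main obstacle.
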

\begin{proof}
    We start be considering strategies of the form~\eqref{eq:red}, in which we denote the measurement angles by $a_{x}',b_{y}'$, and the eigenvalues of the Bell diagonal state $\lambda_{\alpha}'$. We begin by inserting the SOS decomposition for $\bar{B}_{\theta,\phi,\omega}$ (cf.~\eqref{eq:SOS1}):
    \begin{equation}\label{eq:insertSOS}
        \mathrm{Tr}[\rho\bar{B}_{\theta,\phi,\omega}] = \sum_{\mu=0}^{1}\sum_{\alpha=0}^{3}c_\mu\lambda_{\alpha}'\bra{\Phi_{\alpha}}R_{\mu}^{\dagger}R_{\mu}\ket{\Phi_{\alpha}} = 0,
    \end{equation}
    which implies $\lambda_{\alpha}'\|R_{\mu}\ket{\Phi_{\alpha}}\|^{2} = 0$ for all $\mu,\alpha$. Let us assume $\lambda_{0}'> 0$, so that $\|R_{\mu}\ket{\Phi_{0}}\|^{2} = 0$; the proof follows identically otherwise, up to a local unitary equivalence of the solution. Let $T_{A}$ be the unitary that rotates Alice's first measurement, which is currently defined by an arbitrary angle $a_{0}' = -\zeta$, to $\sigma_{Z}$. That is, $T_{A} = R(\zeta) = \cos(\zeta/2)\mathbb{I}_{2} - \mathrm{i}\sin(\zeta/2)\sigma_{Y}$. Then $T_{B} = R(\zeta)$ satisfies $T\ket{\Phi_{0}} = \ket{\Phi_{0}}$ for all $\zeta$, where $T = T_{A}\otimes T_{B}$. We therefore find $\|R_{\mu}\ket{\Phi_{0}}\|^{2} = 0$ implies
    \begin{equation}
       \|TR_{\mu}T^{\dagger}\ket{\Phi_{0}}\|^{2} = 0.
    \end{equation}
    We label the rotated angles $a_{0}'' = 0$, $a_{1}'' = a_{1}' + \zeta,b_{y}''=b_{y}' + \zeta$. The above constraint with $\mu = 0$ implies
    \begin{equation}
        \begin{aligned}
            \cos(\phi)\sin(a_{1}'') &= \sin(\theta)\sin(b_{0}''),\\
            \cos(\phi)\cos(a_{1}'') &= \sin(\theta)\cos(b_{0}'') + \cos(\theta+\phi).
        \end{aligned} \label{eq:const1}
    \end{equation}
    Summing the squares of both sides, we find $\cos(b_{0}'') = \sin(\phi)$, which implies $\cos(a_{1}'') = \cos(\theta)$. Using $\sin^{2}(a_{1}'') + \cos^2(a_{1}'') = 1$ we find $\sin(a_{1}'') = (-1)^{t}\sin(\theta)$ for some $t \in \{0,1\}$, hence $\sin(b_{0}'') = (-1)^{t}\cos(\phi)$. The constraints for $\mu = 1$ are identical to \cref{eq:const1} with $\phi\mapsto \omega$ and $b_{0}'' \mapsto b_{1}''$, from which we conclude $\sin(b_{1}'') = (-1)^{t}\cos(\omega)$ and $\cos(b_{1}'') = \sin(\omega)$. Note the case $t=1$ is related to the case $t=0$ via the local unitary $\sigma_{Z} \otimes \sigma_{Z}$, which preserves $\ket{\Phi_0}$ and $\ket{\Phi_3}$ and takes $\ket{\Phi_1}$ to $-\ket{\Phi_1}$ and $\ket{\Phi_2}$ to $-\ket{\Phi_2}$.

    Now, for $\alpha \neq 0$, we can explicitly calculate 
    \begin{equation}
    \begin{aligned}
        \lambda_1'\|R_{0}\ket{\Phi_1}\|^{2} &=  \lambda_1'\|TR_{0}T^{\dagger}T\ket{\Phi_1}\|^{2} = 4\lambda_1'\sin^{2}(\theta)\cos^{2}(\phi+(-1)^t\zeta), \\
        \lambda_1'\|R_{1}\ket{\Phi_1}\|^{2} &= \lambda_1'\|TR_{1}T^{\dagger}T\ket{\Phi_1}\|^{2} = 4\lambda_1' \sin^{2}(\theta)\cos^{2}(\omega+(-1)^t\zeta),\\
        \lambda_2'\|R_{0}\ket{\Phi_2}\|^{2} &=  \lambda_2'\|TR_{0}T^{\dagger}T\ket{\Phi_2}\|^{2} = 4\lambda_2'\sin^{2}(\theta)\sin^{2}(\phi+(-1)^t\zeta), \\
        \lambda_2'\|R_{1}\ket{\Phi_2}\|^{2} &= \lambda_2'\|TR_{1}T^{\dagger}T\ket{\Phi_2}\|^{2} = 4\lambda_2' \sin^{2}(\theta)\sin^{2}(\omega+(-1)^t\zeta),\\
        \lambda_3'\|R_{0}\ket{\Phi_3}\|^{2} &=  \lambda_3'\|TR_{0}T^{\dagger}T\ket{\Phi_3}\|^{2} = 4\lambda_3'\sin^{2}(\theta), \\
        \lambda_3'\|R_{1}\ket{\Phi_3}\|^{2} &= \lambda_3'\|TR_{1}T^{\dagger}T\ket{\Phi_3}\|^{2} = 4\lambda_3' \sin^{2}(\theta),
    \end{aligned}
    \end{equation}
    where we inserted the solution for $a_{x}'',b_{y}''$ calculated below \cref{eq:const1}. Due to \cref{eq:insertSOS}, we require each of these expressions to equal $0$. \cref{eq:cond1App} implies $\sin(\theta) \neq 0$, so we must have $\lambda_3'=0$.  Now suppose $\lambda_1'\neq0$. We must have $\phi+(-1)^t\zeta=n\pi/2$ and $\omega+(-1)^t\zeta=m\pi/2$ for odd integers $n,m$, which implies $\phi=\omega+k\pi/2$ for an even integer $k=n-m$. Inserting this into \cref{eq:cond1App} we find the expression on the left hand side is strictly positive, resulting in a contradiction. Therefore we must have $\lambda_1'=0$. For $\lambda_2'\neq0$, a similar contradiction is reached using an analogous argument, allowing us to conclude that $\lambda_\alpha'=0$ for $\alpha\neq0$ and hence that $\lambda_0'=1$.

    There are two strategies (given by $t=0$ and $t=1$), but they are related by $\sigma_Z\otimes\sigma_Z$ which preserves $\ket{\Phi_0}$. Hence, up to local unitaries the solution is unique.

    So far we have assumed a solution of the form~\eqref{eq:red}. However,  Lemma~\ref{lem:reduce2} implies that this assumption can be removed. The form given in~\eqref{eq:equiv} corresponds to one such solution.
\end{proof}
Self-testing is obtained by directly applying \cref{lem:blocks}. The exact state and measurements in \cref{eq:appstrat} can be obtained by rotating into the $X$-$Y$ plane.  
\end{proof}

\section{Proof of Propositions 3--6}\label{app:3to6}
\noindent \textbf{Proposition 3.} For any $s\in(2,5/2]$, and any $\epsilon\in(0,2-(3/4)\log(3)]$, there exists a tuple $(\theta,\phi,\omega)$ satisfying~\eqref{eq:cond1App}, along with a set of quantum correlations achieving $R^{\mathrm{key}}_{\theta,\phi,\omega}(\eta_{\theta,\phi,\omega}^{\mathrm{Q}}) = 1- \epsilon$ and $I_{\mathrm{CHSH}}=s$.

\begin{proof}
    Consider the family of Bell expressions in \ifarxiv\cref{prop:bellExp} \else Proposition~1\fi with $\phi = \pi/2 - \epsilon'$ for some small $\epsilon' > 0$. In this case $H(A|X=0,Y=0,B) = H_{\mathrm{bin}}[(1+\cos \epsilon')/2] := \epsilon$. Now let us restrict $\theta \in (\epsilon',\pi/2)$ and $\omega \in (\pi/2,\pi]$. We first show that~\eqref{eq:cond1App} holds for this range of parameters, which reads
    \begin{equation}
        \sin(\epsilon' - \theta)\sin(\epsilon')\cos(\theta+\omega)\cos(\omega) <0.
    \end{equation}
    We have that $\sin (\epsilon') > 0$, $\sin(\epsilon' - \theta) < 0$, $\cos \omega < 0$, and finally $\cos(\theta + \omega) < 0$, hence satisfying~\eqref{eq:cond1App}. This implies $H(A|X=0,E) = 1$ from the self-test.

    The CHSH value of the underlying strategy in \cref{eq:appstrat} is given by $I_{\mathrm{CHSH}} = \cos (\epsilon') + \sin (\omega) + \cos(\theta - \epsilon') - \sin(\theta+\omega)$. Now we show there exists a tuple $(\theta,\omega,\epsilon')$ in the above range that achieves every $I_{\mathrm{CHSH}} \in (2,5/2)$ when $\epsilon'$ is arbitrarily small. Consider taking $\omega=5\pi/6$, giving $I_{\mathrm{CHSH}}(\theta,\omega = 5\pi/6,\epsilon') = 1/2 + \cos(\epsilon') + \cos(\theta - \epsilon')-\cos(\theta + \pi/3)$.  This is continuous in $\theta$ and $\epsilon'$ when extended to all reals, and for $\epsilon'=0$ it becomes $3/2+\cos(\pi/3-\theta)$, which takes value $2$ for $\theta=0$ and value $5/2$ for $\theta=\pi/3$.  In a bit more detail we have
    \begin{equation}
      I_{\mathrm{CHSH}}(\theta=2\epsilon',\omega = 5\pi/6,\epsilon') = \frac{1}{2}\left(3+\cos(2\epsilon')+\sqrt{3}\sin(2\epsilon')\right)<2+\sqrt{3}\epsilon',
    \end{equation}
    hence we can get arbitrarily close to $I_{\mathrm{CHSH}}=2$ by taking $\epsilon'$ small enough.
    
    In addition,
    \begin{equation}
    I_{\mathrm{CHSH}}(\theta=\pi/3,\omega = 5\pi/6,\epsilon') = \frac{1}{2}\left(2+3\cos(\epsilon')+\sqrt{3}\sin(\epsilon')\right)>\frac{5}{2},
    \end{equation}
    where the inequality holds for $\epsilon'<\pi/3$. Hence we can achieve $I_{\mathrm{CHSH}}=5/2$ for any $\epsilon'\in(0,\pi/3]$.  Note that $\epsilon'=\pi/3$ gives $\epsilon=2-(3/4)\log(3)\approx0.811$.
    
    Therefore by making $\epsilon$, and consequently $\epsilon'$, arbitrarily small we can get arbitrarily close to the local bound $I_{\mathrm{CHSH}} = 2$, or achieve $I_{\mathrm{CHSH}} = 5/2$, while at the same time generating arbitrarily good key. Moreover, by continuity of $I_{\mathrm{CHSH}}(\theta,\omega=5\pi/6,\epsilon')$ over the interval $\theta \in (\epsilon',\pi/3]$, the entire range $(2,5/2]$ can be accessed when $\epsilon'$ is small enough, thus proving the claim. 
\end{proof}

\noindent \textbf{Proposition 4.} For any $s\in(2,5/2]$, there exists a tuple $(\theta,\phi,\omega)$, along with a set of quantum correlations achieving $r^{\mathrm{key}}_{\theta,\phi,\omega} = 1$ and $I_{\mathrm{CHSH}}=s$.

\begin{proof}
We establish this proof using the results of Ref.~\cite{Wang_2016}, who found all correlations that self-test the singlet in this scenario. This set of correlations coincides with the nonlocal boundary of $\mathcal{Q}_{\mathrm{corr}}$. We state this theorem below:
\begin{theorem}[\cite{Wang_2016}, Theorem 1]
    The observed correlations $\langle A_{x}B_{y}\rangle $ self-test the singlet if and only if they satisfy one of the eight conditions:
\begin{equation}
    \sum_{(x,y) \neq (i,j)}\arcsin{\langle A_{x}B_{y} \rangle } - \arcsin{\langle A_{i}B_{j} \rangle} = \xi \pi,
\end{equation}
where $i,j\in\{0,1\}$ and $\xi \in \{+1,-1\}$, and where there is at most one pair of values of $x,y$ for which $\arccos{\langle A_{x}B_{y} \rangle}\in\{0,\pi\}$. \label{thm:wang}
\end{theorem}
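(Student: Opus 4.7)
The plan is to establish both directions of the characterisation by combining the qubit-reduction machinery of the appendix with the Tsirelson-Landau-Masanes (TLM) description of the boundary of $\mathcal{Q}_{\mathrm{corr}}$. For the ($\Leftarrow$) direction, I would apply Jordan's lemma followed by \cref{lem:reduce1} to reduce any strategy realising the target correlators to a Bell-diagonal state $\rho'=\sum_\alpha\lambda_\alpha'\ketbra{\Phi_\alpha}{\Phi_\alpha}$ with measurements $A_x'=\cos(a_x')\sigma_Z+\sin(a_x')\sigma_X$ and $B_y'=\cos(b_y')\sigma_Z+\sin(b_y')\sigma_X$. A direct computation in the Bell basis yields
\begin{equation*}
\langle A_x'B_y'\rangle = (\lambda_0'-\lambda_3')\cos(a_x'-b_y') + (\lambda_1'-\lambda_2')\cos(a_x'+b_y')\,.
\end{equation*}
I would then show that the arcsine identity (fixed by the pair $(i,j)$ and sign $\xi$) forces exactly one $\lambda_\alpha'$ to equal $1$, with the particular $\alpha$ selected by the symmetries of the Bell basis, and pins down the angles $(a_x',b_y')$ up to a global rotation that is an allowed local unitary. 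This gives an essentially unique reduced strategy, and \cref{lem:reduce2} and \cref{lem:blocks} then lift the qubit-level uniqueness to the full self-testing statement of \cref{def:selfTest} in arbitrary dimension.

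For the ($\Rightarrow$) direction I would argue by contrapositive. If none of the eight arcsine identities holds, then by the TLM characterisation the correlators lie strictly in the interior of $\mathcal{Q}_{\mathrm{corr}}$ and therefore admit at least two qubit realisations not related by local isometries---for instance, as a non-trivial convex combination of distinct boundary strategies, or by perturbing the Bell-diagonal spectrum without altering the correlators. Such non-uniqueness is incompatible with \cref{def:selfTest}. The ``at most one pair'' proviso excludes degenerate boundary points at which two or more correlators satisfy $\arccos\langle A_xB_y\rangle\in\{0,\pi\}$: in those cases the associated measurements commute and allow non-singlet realisations, such as shared classical randomness layered on top of a lower-rank entangled state.

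The main obstacle will be the careful inversion of the correlator formula modulo the $[-\pi/2,\pi/2]$ branch of $\arcsin$ and the sign and permutation symmetries of the Bell basis that distinguish the eight identities. A useful shortcut made available by this paper is that Proposition~1 already produces a large explicit family of self-tested correlators parameterised by $(\theta,\phi,\omega)$ satisfying~\eqref{cond}. By the tangent-hyperplane property noted in the main text, this family covers the full self-testable boundary of $\mathcal{Q}_{\mathrm{corr}}$, so existence of a self-testing Bell expression for each target correlation reduces to a direct trigonometric matching between $(\theta,\phi,\omega)$ and the arcsine identity, while \cref{lem:solve} supplies the uniqueness of the maximally violating qubit strategy needed to complete the proof.
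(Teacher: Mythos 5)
Before assessing the argument itself, note that the paper offers no proof of this statement: \cref{thm:wang} is imported verbatim from Wang \emph{et al.}~\cite{Wang_2016} and used as a black box in the proofs of \cref{prop:key_corr} and Proposition~6. So there is no internal proof to compare against; the only question is whether your plan would genuinely establish the theorem, and in its current form it would not.

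The central gap is in the ($\Leftarrow$) direction. \cref{lem:blocks}, \cref{lem:reduce1} and \cref{lem:reduce2} are all formulated for saturation of a \emph{single linear Bell expression}: the step that makes \cref{lem:blocks} work is \cref{eq:maxDiag}, i.e.\ a convex combination of block values can attain the quantum maximum only if every Jordan block attains it. When the certificate is the full set of four correlators, the observed point is a convex mixture of the block correlators, and nothing forces each block to reproduce the same point unless you first prove that a correlation satisfying one of the eight arcsine identities (with at most one correlator of modulus one) is an \emph{extreme} point of $\mathcal{Q}_{\mathrm{corr}}$. That extremality claim---and likewise your sentence ``I would then show that the arcsine identity forces exactly one $\lambda_\alpha'$ to equal $1$''---is essentially the entire content of Wang \emph{et al.}'s theorem, and your plan asserts it rather than proves it. The proposed shortcut does not rescue this: \cref{lem:bound} only guarantees some $(\theta,\phi,\omega)$ with $\langle B_{\theta,\phi,\omega}\rangle=\eta^{\mathrm{Q}}_{\theta,\phi,\omega}$ at each boundary point, whereas the self-testing conclusion of \cref{prop:bellExp} additionally requires~\eqref{cond}, which fails precisely at the non-exposed points (e.g.\ $\phi=\pi/2$, where the paper notes the Bell expression becomes trivial)---and these are exactly the correlations for which \cref{thm:wang} is invoked in \cref{prop:key_corr} and Proposition~6. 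Your ($\Rightarrow$) direction is reasonable in outline (failure of all eight identities places the point outside $\mathcal{Q}_{\mathrm{corr}}$ or in its interior, where mixing yields inequivalent realisations, and two extremal correlators force commuting observables admitting non-singlet models), but the forward direction needs either the extremality/convexity argument made precise or a direct analysis along the lines of the original reference.
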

\noindent Proof of the proposition then relies on checking the correlations of the strategy in \cref{eq:appstrat} satisfy the above criteria, whist simultaneously achieving the range of CHSH scores $(2,5/2]$. Set $\phi = \pi/2$, $\omega = 5\pi/6$ and let $\theta \in (0,\pi/3]$. Then the CHSH value is given by $I_{\mathrm{CHSH}} = 3/2 + \cos(\theta) - \cos(\theta + \pi/3) \in (2,5/2]$. Now the correlators are given by
\begin{equation}
    \langle A_{0} B_{0} \rangle = 1, \ \langle A_{0}B_{1} \rangle = 1/2, \ \langle A_{1}B_{0} \rangle = \cos(\theta), \ \langle A_{1}B_{1} \rangle = \sin(\pi/6 - \theta),
\end{equation}
and their $\arccos$ values are $0,\pi/3,\theta,\theta + \pi/3$ respectively; for $\theta \in (0,\pi/3]$ the $\arccos$ condition of \cref{thm:wang} is satisfied. Consider the $\arcsin$ condition for $i=1,j=1$. This reads:
\begin{equation}
    \arcsin(1) + \arcsin(1/2) + \arcsin(\cos(\theta)) - \arcsin(\sin(\pi/6-\theta)) = \pi/2 + \pi/6 + \pi/2 - \theta - \pi/6 + \theta = \pi,
\end{equation}
for $\theta \in (0,\pi/3]$. Hence \cref{thm:wang} is satisfied, and the correlations $\langle A_{x}B_{y} \rangle$ self-test the singlet. As a result, the global state shared by Alice and Bob cannot be entangled with Eve, and the entropy $H(A|E,X=0)$ is that of their observed statistics, i.e., $H(A|E,X=0) = 1$. Moreover, the reconciliation term when Bob measures $Y=0$ is equal to $H(A|X=0,Y=0,B) = 0$, resulting in a final rate of $r^{\mathrm{key}}_{\theta,\phi,\omega} = 1$ as claimed, for $\phi = \pi/2,\omega = 5\pi/6$ and every $\theta \in (0,\pi/3]$, which covers the range of CHSH values $(2,5/2]$.
\end{proof}

\noindent \textbf{Proposition 5.} For any $s\in(2,1+\sqrt{2}]$ and any $\epsilon\in\left(0,H_{\mathrm{bin}}((2+\sqrt{2})/4)\right)$, there exists a tuple $(\theta,\phi,\omega)$ satisfying~\eqref{eq:cond1App}, along with a set of quantum correlations achieving $R^{\mathrm{global}}_{\theta,\phi,\omega}(\eta_{\theta,\phi,\omega}^{\mathrm{Q}})=2$, $R^{\mathrm{key}}_{\theta,\phi,\omega}(\eta_{\theta,\phi,\omega}^{\mathrm{Q}})=1-\epsilon$ and $I_{\mathrm{CHSH}}=s$.
\begin{proof}
  This claim is obtained in the same way as \ifarxiv\cref{prop:key}\else Proposition~3\fi. As before, we let $\phi = \pi/2 - \epsilon'$ for some small $\epsilon' > 0$, and observe $H(A|X=0,Y=0,B) = \epsilon$. We consider the same range $\theta \in (\epsilon',\pi/2)$ and $\omega \in (\pi/2,\pi]$, for which~\eqref{eq:cond1App} holds, implying $H(A|X=0,E) = 1$ from the self-test. Next we further set $\omega = \pi$, and observe that the self-test also implies $H(AB|X=0,Y=1,E)=2$.

  Finally, note that for this choice we have $I_{\mathrm{CHSH}}(\theta,\omega=\pi,\epsilon')=\cos(\epsilon')+\cos(\theta-\epsilon')+\sin(\theta)$, which extending to all reals and taking $\epsilon'=0$, equals $2$ for $\theta=0$ and $1+\sqrt{2}$ for $\theta=\pi/4$. More precisely,
\begin{equation}
  I_{\mathrm{CHSH}}(\theta = 2\epsilon',\omega=\pi,\epsilon')= 2\cos(\epsilon')+\sin(2\epsilon')<2+2\epsilon',
\end{equation}
so we can get arbitrarily close to $I_{\mathrm{CHSH}}=2$ by taking $\epsilon'$ small enough.

In addition,
    \begin{equation}
I_{\mathrm{CHSH}}(\theta=\pi/4,\omega=\pi,\epsilon')=\frac{1}{\sqrt{2}}+\cos(\epsilon')+\sin(\pi/4+\epsilon')> 1 + \sqrt{2},
    \end{equation}
    where the last line holds for $\epsilon'<\pi/4$. Note that $\epsilon'=\pi/4$ gives $\epsilon=H_{\mathrm{bin}}((2+\sqrt{2})/4)\approx0.601$.
    
    Therefore, using continuity, by taking $\epsilon$, and hence $\epsilon'$ small enough, we can achieve any $I_{\mathrm{CHSH}} \in (2,1+\sqrt{2}]$ with $R_{\theta,\phi,\omega}^{\mathrm{key}}=1-\epsilon$ and $R_{\theta,\phi,\omega}^{\mathrm{global}}=2$.
\end{proof}

\noindent \textbf{Proposition 6.} For any $s \in (2,1+\sqrt{2}]$, there exists a tuple $(\theta,\phi,\omega)$, along with a set of quantum correlations achieving $r^{\mathrm{global}}_{\theta,\phi,\omega} = 2$, $r^{\mathrm{key}}_{\theta,\phi,\omega} = 1$, and $I_{\mathrm{CHSH}} = s$.

\begin{proof}
    We follow the approach of \ifarxiv\cref{prop:key_corr}\else Proposition~4\fi. By setting $\phi = \pi/2, \omega = \pi$ and $\theta \in (0,\pi/4]$, we find the CHSH value covers the range $(2,1+\sqrt{2}]$. The correlators are given by
    \begin{equation}
    \langle A_{0} B_{0} \rangle = 1, \ \langle A_{0}B_{1} \rangle = 0, \ \langle A_{1}B_{0} \rangle = \cos(\theta), \ \langle A_{1}B_{1} \rangle = -\sin(\theta),
\end{equation}
and the $\arccos$ values are $0,\pi/2,\theta,\pi/2 + \theta$, so the $\arccos$ condition of \cref{thm:wang} is satisfied. We then find the $\arcsin$ condition with $i=1,j=1$ is satisfied, completing the proof using the same reasoning as above.  
\end{proof}

\section{Equivalence between the Bell expression in \ifarxiv\cref{prop:bellExp} \else Proposition~1 \fi and that of Le \textit{et al}.}\label{app:C}

Here we show equivalence between our presentation and the original presented in Proposition~14 of Ref~\cite{Le2023quantumcorrelations}. We state this proposition below.
\begin{proposition}[\cite{Le2023quantumcorrelations}, Proposition 14]
    \textit{Let $\alpha,\beta,\gamma,\delta$ satisfy $\alpha+\beta+\gamma+\delta=0$ and $\Delta = \sin\alpha \sin\beta \sin\gamma \sin \delta < 0$ and let $\bm{c} = [\cos\alpha,\cos\beta,\cos\gamma,\cos\delta]^{\mathrm{T}}$. Define
\begin{equation}
    \bm{f} = \frac{1}{K}\Big[ \frac{1}{\sin \alpha},\frac{1}{\sin \beta} , \frac{1}{\sin \gamma} , \frac{1}{\sin \delta}\Big]^{\mathrm{T}},
\end{equation}
where $K = \cot \alpha + \cot \beta + \cot \gamma + \cot \delta$. Then $\bm{f} \bm{\cdot} \bm{c}' \leq 1$ for all $\bm{c}' \in \mathcal{Q}_{\mathrm{corr}}$, with equality if and only if $\bm{c}' = \bm{c}$.}
\end{proposition}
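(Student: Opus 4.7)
The goal is to exhibit an explicit invertible parameter substitution $(\theta,\phi,\omega)\leftrightarrow(\alpha,\beta,\gamma,\delta)$ (subject to $\alpha+\beta+\gamma+\delta=0$ and $\Delta<0$) under which the two Bell inequalities $\langle B_{\theta,\phi,\omega}\rangle\le\eta^{\mathrm{Q}}_{\theta,\phi,\omega}$ and $\bm{f}\cdot\bm{c}'\le 1$ coincide up to a positive multiplicative factor. The substitution is essentially forced by matching the correlators of the self-testing strategy~\eqref{eq:appstrat}. A short direct calculation on $\ket{\psi}=(\ket{00}+i\ket{11})/\sqrt{2}$ shows that the only nonvanishing correlators among $\{\sigma_X,\sigma_Y\}^{\otimes 2}$ are $\langle\sigma_X\otimes\sigma_Y\rangle=\langle\sigma_Y\otimes\sigma_X\rangle=1$, so the target correlators are $\langle A_xB_y\rangle=\sin(a_x+b_y)$ with $(a_0,a_1,b_0,b_1)=(0,\theta,\phi,\omega)$. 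Rewriting each $\sin$ as a $\cos$ via a $\pi/2$ shift and picking signs so that the four angles sum to zero suggests the natural assignment $\alpha=\pi/2-\phi$, $\beta=\omega-\pi/2$, $\gamma=\theta+\phi-\pi/2$, $\delta=\pi/2-\theta-\omega$.

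With this assignment, I would verify three identities in turn. First, $\sin\alpha\sin\beta\sin\gamma\sin\delta=\cos\phi\cos\omega\cos(\theta+\phi)\cos(\theta+\omega)$, so the sign condition $\Delta<0$ in Le \emph{et al.} coincides with~\eqref{eq:cond1App}. Second, dividing the four coefficients of $\langle A_xB_y\rangle$ in~\eqref{eq:fam} by $\Delta$ recovers, with signs as dictated by the assignment, the four components of $K\bm{f}$, so that $\langle B_{\theta,\phi,\omega}\rangle=\Delta K\,\bm{f}\cdot\bm{c}'$ as linear functionals of the correlators. Third, I would check $\Delta K=\eta^{\mathrm{Q}}_{\theta,\phi,\omega}$. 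This last identity is the only nontrivial trig step: clearing tangents in $K=\tan\phi-\tan\omega-\tan(\theta+\phi)+\tan(\theta+\omega)$ and grouping gives $\Delta K=\sin(\omega-\phi)\bigl[\cos\phi\cos\omega-\cos(\theta+\phi)\cos(\theta+\omega)\bigr]$, and applying a product-to-sum identity to the bracket collapses it to $\sin\theta\sin(\theta+\phi+\omega)$, matching $\eta^{\mathrm{Q}}_{\theta,\phi,\omega}$ from \ifarxiv Proposition~\ref{prop:bellExp}\else Proposition~1\fi.

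The main obstacle is really careful bookkeeping of signs: the assignment above is not unique (sign flips and $\pi$-translations leave each $|\cos(\cdot)|$ invariant), and one must select a representative for which $\Delta K>0$ so that the direction of the inequality $\bm{f}\cdot\bm{c}'\le 1$ matches $\langle B_{\theta,\phi,\omega}\rangle\le\eta^{\mathrm{Q}}_{\theta,\phi,\omega}$ rather than its negative. Once this is done, invertibility of the substitution (for example $\phi=\pi/2-\alpha$, $\omega=\pi/2+\beta$, $\theta=\alpha+\gamma$, with $\delta=-\alpha-\beta-\gamma$ automatic) shows that the two parameter families describe the same set of Bell inequalities. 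Finally, the equality condition $\bm{c}'=\bm{c}$ in Le \emph{et al.} translates, via $\langle A_xB_y\rangle=\sin(a_x+b_y)$, to the unique target correlators of Proposition~1, so the two self-tests coincide.
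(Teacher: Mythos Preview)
Your proposal is correct and follows essentially the same route as the paper: both reduce Le \emph{et al.}'s statement to Proposition~1 by an explicit change of variables and then check that the Bell coefficients, the sign condition~$\Delta<0$, and the quantum bound all match. Your substitution $(\alpha,\beta,\gamma,\delta)=(\pi/2-\phi,\ \omega-\pi/2,\ \theta+\phi-\pi/2,\ \pi/2-\theta-\omega)$ is the global negative of the paper's $(\phi-\pi/2,\ \pi/2-\omega,\ \pi/2-\theta-\phi,\ \theta+\omega-\pi/2)$; since $\cos$ is even and both $K$ and each $1/\sin$ pick up the same sign under $(\alpha,\beta,\gamma,\delta)\mapsto-(\alpha,\beta,\gamma,\delta)$, the resulting $\bm f$ and $\Delta$ are identical, so the two substitutions are equivalent.
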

\noindent Using the relation $\delta = -(\alpha+\beta+\gamma)$ we can rewrite the entries in $\bm{f}$:
\begin{multline}
    \bm{f} = \Bigg[ \frac{\sin\beta \sin \gamma \sin(\alpha+\beta+\gamma)}{\sin(\gamma + \alpha)\sin(\gamma+\beta)\sin(\alpha+\beta)},\frac{\sin\alpha \sin \gamma \sin(\alpha+\beta+\gamma)}{\sin(\gamma + \alpha)\sin(\gamma+\beta)\sin(\alpha+\beta)} , \\ \frac{\sin\alpha \sin \beta \sin(\alpha+\beta+\gamma)}{\sin(\gamma + \alpha)\sin(\gamma+\beta)\sin(\alpha+\beta)} ,\frac{-\sin\alpha \sin \beta \sin\gamma}{\sin(\gamma + \alpha)\sin(\gamma+\beta)\sin(\alpha+\beta)} \Bigg]^{\mathrm{T}}.
\end{multline}
Our expression is given by $\bm{B}_{\theta,\phi,\omega}$, and, once normalized by the maximum quantum value $\eta^{\mathrm{Q}}_{\theta,\phi,\omega}$ so the RHS is equal to one, it reads
\begin{multline}
    \bm{B}_{\theta,\phi,\omega} / \eta^{\mathrm{Q}}_{\theta,\phi,\omega} = \Bigg[ \frac{\cos(\theta+\phi)\cos(\theta+\omega)\cos\omega}{\sin\theta \sin(\omega-\phi)\sin(\theta+\omega+\phi)} , \frac{-\cos(\theta+\phi)\cos(\theta+\omega)\cos \phi}{\sin \theta \sin(\omega-\phi)\sin(\theta+\omega+\phi)}, \\ \frac{-\cos \phi \cos \omega \cos(\theta+\omega)}{\sin \theta \sin(\omega-\phi)\sin(\theta+\omega+\phi)} , \frac{\cos \phi \cos \omega \cos(\theta+\omega)}{\sin \theta \sin(\omega-\phi)\sin(\theta+\omega+\phi)} \Bigg]^{\mathrm{T}}.
\end{multline}
By making the substitution $\theta = -\alpha-\gamma$, $\phi = \pi/2 + \alpha$, $\omega = \pi/2 - \beta$, we find $\bm{B}_{\theta,\phi,\omega} / \eta^{\mathrm{Q}}_{\theta,\phi,\omega} = \bm{f}$ and that $\Delta<0$ is equivalent to \cref{eq:cond1App}.

\section{Recovering known Bell inequalities}\label{app:knownB}
In this appendix, we show how many known Bell inequalities from the literature can be recovered from the Bell expression first introduced in~\cite{Le2023quantumcorrelations}. Consider evaluating the Bell expression in \ifarxiv\cref{prop:bellExp} \else Proposition~1\fi for $\theta = -\pi/2,\phi=3\pi/4,\omega=\pi/4$. Then the expression reads
\begin{equation}
    \frac{1}{2\sqrt{2}}\big( \langle A_{0}(B_{0}+B_{1}) \rangle + \langle A_{1}(B_{0}-B_{1})\rangle \big),
\end{equation}
with $\eta^{\mathrm{Q}}_{\theta,\phi,\omega} = 1$, equivalent to the CHSH expression. 

By setting $\phi = 0$, $\omega = \theta = \delta + \pi/2$ for $\delta \in (0,\pi/6]$, and dividing by $-\sin^{2}(\delta)\cos(2\delta)$, we recover the $I_{\delta}$ family of expressions from~\cite{WBC}:
\begin{equation}
    \langle A_{0}B_{0} \rangle + \frac{1}{\sin \delta} \langle A_{0}B_{1} + A_{1}B_{0} \rangle - \frac{1}{\cos(2\delta)} \langle A_{1}B_{1} \rangle,
\end{equation}
with $\eta^{\mathrm{Q}}_{\theta,\phi,\omega} = 2\cos^{3}\delta/(\sin(\delta)\cos(2\delta))$. We can also recover the $J_{\gamma}$ family of expressions from~\cite{WBC}. By setting $\theta = -2\phi/3$ and $\omega=\phi/3$, we find the expression
\begin{equation}
    \cos^3(\phi / 3)\langle A_{0}B_{0} \rangle - \cos(\phi)\cos^2 (\phi/3)\big( \langle A_{1}B_{0} \rangle + \langle A_{0}B_{1} \rangle - \langle A_{1}B_{1} \rangle \big),
\end{equation}
with a quantum bound $\eta^{\mathrm{Q}}_{\theta,\phi,\omega}= \sin^3 (2\phi/3)$. By dividing through by $\cos^3(\phi / 3)$, and setting $\phi = 3\pi/2 - 3\arcsin\big[ \sin( \pi/6 + \gamma )\big]$ we recover the exact expression and quantum bound in~\cite{WBC}. 

Next we consider the family of tilted CHSH expressions without the marginal term from~\cite{AcinRandomnessNonlocality}, called $I_\alpha^0$ there. By choosing $\theta = \pi/2$ and $\omega = -\phi$, and dividing through by $-\sin(\phi)\cos(\phi)$ we find
\begin{equation}
    \sin(\phi)\langle A_{0}(B_{0} - B_{1}) \rangle + \cos(\phi)\langle A_{1}(B_{0} + B_{1} ) \rangle,
\end{equation}
with a quantum bound equal to $2$, which, after relabeling Alice's inputs and dividing through by $\sin \phi$ is equivalent to the expression in~\cite{AcinRandomnessNonlocality}, with a tilting parameter $\alpha = 1/\tan \phi$. 

Finally, we consider the two parameter family of Bell expressions which are symmetric under party permutation~\cite{WBC2}, that is both Alice and Bob have the same angle between their measurements. By setting $\theta = \omega - \phi$, and changing variables $2\omega - \phi = 2\alpha$, $\phi = -2\beta$, we find
\begin{equation}
    \cos^{2}(\alpha - \beta)\cos 2\alpha \langle A_{0}B_{0}\rangle - \cos(\alpha - \beta) \cos 2 \alpha \cos 2 \beta ( \langle A_{0}B_{1} \rangle + \langle A_{1}B_{0} \rangle ) + \cos^{2}(\alpha - \beta)\cos 2\beta \langle A_{1}B_{1}\rangle, 
\end{equation}
with a quantum bound of $2\sin^{2}(\alpha + \beta)\sin(\alpha-\beta)\cos(\alpha-\beta)$, which is equivalent to the expression $J_{\beta,\alpha}$ in~\cite{WBC2} after dividing through by $\cos(\alpha-\beta)$. 

\section{Additional Property of the family of Bell expressions in \ifarxiv\cref{prop:bellExp}\else Proposition~1\fi}\label{app:E}
In this section, we show that the expressions in \ifarxiv\cref{prop:bellExp} \else Proposition~1 \fi cover the boundary of the quantum set of correlators, $\mathcal{Q}_{\mathrm{corr}}$. This has already been shown by Le \textit{et al.}~\cite{Le2023quantumcorrelations}, and Barizien \textit{et al.}~\cite{barizien2023}. For completeness we present a self-contained proof in this appendix, which builds on the criteria for membership of $\mathcal{Q}_{\mathrm{corr}}$ presented by Masanes~\cite{Masanes1,Masanes2}, and the subsequent parameterization by Wang \textit{et al.}~\cite{Wang_2016}. 
\begin{theorem}[\cite{Masanes1}] The correlations $\langle A_{x}B_{y} \rangle$ belong to $\mathcal{Q}_{\mathrm{corr}}$ if and only if, 
\begin{equation}
    -\pi \leq \sum_{(x,y) \neq (i,j)}\arcsin{\langle A_{x}B_{y} \rangle} - \arcsin{ \langle A_{i}B_{j} \rangle} \leq \pi, \ \forall i,j \in \{0,1\}.  \label{eq:mas}
\end{equation}
\end{theorem}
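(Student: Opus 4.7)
The plan is to prove Masanes' characterization via an angle parameterization that emerges from a canonical qubit realization of the extreme points of $\mathcal{Q}_{\mathrm{corr}}$. For the forward direction, I would first invoke Tsirelson's theorem: any quantum realization of correlators $\langle A_x B_y\rangle$ with binary observables can be written as $\vec{a}_x\cdot\vec{b}_y$ for unit vectors in some real Hilbert space. Combined with Jordan's lemma (as developed in Appendix~B) and the symmetrization arguments of Lemma~7 exploiting uniform marginals, this restricts attention to pure-state two-qubit strategies in which both parties' observables lie in a common real $2$-plane. Writing the vectors as $(\cos\alpha_x,\sin\alpha_x)$ and $(\cos\beta_y,\sin\beta_y)$ yields $\langle A_xB_y\rangle=\cos(\alpha_x-\beta_y)$, and the four angle differences $\theta_{xy}:=\alpha_x-\beta_y$ satisfy the additive relation $\theta_{00}-\theta_{01}-\theta_{10}+\theta_{11}\equiv 0\pmod{2\pi}$. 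Rewriting $\arccos=\pi/2-\arcsin$ and analysing the branch structure translates this single identity into the four Masanes inequalities, one per choice of distinguished index $(i,j)$.

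For the backward direction, given correlators $c_{xy}\in[-1,1]$ satisfying all four Masanes inequalities, I would explicitly construct angles $\alpha_x,\beta_y$ with $c_{xy}=\cos(\alpha_x-\beta_y)$, and from these build an explicit quantum realization. Using gauge freedom, set $\alpha_0=\beta_0=0$, and let $\beta_1=\pm\arccos(c_{01})$ and $\alpha_1=\pm\arccos(c_{10})$ be taken in the principal branch $[0,\pi]$. The Masanes inequality corresponding to $(i,j)=(1,1)$ provides precisely the criterion for choosing the signs so that the fourth identity $c_{11}=\cos(\alpha_1-\beta_1)$ is consistent with the first three. The state $\ket{\Phi_0}$ together with projective measurements $A_x=\cos(\alpha_x)\sigma_Z+\sin(\alpha_x)\sigma_X$ and $B_y=\cos(\beta_y)\sigma_Z+\sin(\beta_y)\sigma_X$ then realizes the given correlators, placing them inside $\mathcal{Q}_{\mathrm{corr}}$.

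The main obstacle is the sign/branch bookkeeping in both directions: each $c_{xy}$ admits two candidate angles $\pm\arccos(c_{xy})\pmod{2\pi}$, and the linear constraint among the four angles must be satisfiable for some globally consistent sign pattern. The four two-sided Masanes inequalities encode exactly the eight admissible sign patterns, and proving equivalence between the existence of a consistent angle assignment and the stated inequalities requires a careful case analysis, using that $\arcsin$ is odd and takes values in $[-\pi/2,\pi/2]$. Secondary subtleties include handling the degenerate cases $|c_{xy}|=1$, for which the branches collapse (these correspond to (anti-)commuting observable pairs and must be treated as limits of the generic argument), and verifying that the pure-state reduction in the forward direction is without loss of generality once uniform marginals have been imposed; the Jordan/symmetrization pipeline of Appendix~B provides precisely this reduction, so the novelty here is the trigonometric translation from the additive angle relation into the four $\arcsin$ inequalities of the theorem.
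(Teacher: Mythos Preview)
The paper does not prove this theorem; it merely states it as a cited result from Masanes~\cite{Masanes1,Masanes2} and then uses it as a black box to establish Lemma~12. There is therefore no proof in the paper to compare your proposal against.

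That said, your sketch has a concrete error in the backward direction. You write ``using gauge freedom, set $\alpha_0=\beta_0=0$'', but with $\langle A_xB_y\rangle=\cos(\alpha_x-\beta_y)$ this forces $c_{00}=1$, which is not a generic correlator. The available gauge is a single global $O(2)$ action on all four planar vectors simultaneously, so you may fix one angle (say $\alpha_0=0$) and a reflection, but $\beta_0$ must then be $\pm\arccos(c_{00})$, not zero. The rest of the branch/sign analysis has to be carried out with three sign choices rather than two, and the four Masanes inequalities are precisely what guarantee a consistent assignment; your reduction to only the $(i,j)=(1,1)$ inequality loses information. In the forward direction, invoking Lemma~7 is also slightly off: that lemma is stated for strategies satisfying $\mathrm{Tr}[\rho P]=0$ (i.e.\ maximal violation of a specific Bell operator), not for arbitrary quantum correlations. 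The underlying $\sigma_Y\otimes\sigma_Y$ symmetrization does preserve all correlators and can be adapted, but you would need to rephrase it rather than cite Lemma~7 as is. A cleaner route for the forward direction is to go directly through Tsirelson's vector representation and argue that four unit vectors realizing the correlators can be taken in a common plane, which yields the additive angle identity without passing through the qubit reduction machinery of Appendix~B.
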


From this we have the following lemma.
    \begin{lemma}
        For every correlation $\bm{c} = [\langle A_{0}B_{0}\rangle,\langle A_{0}B_{1}\rangle,\langle A_{1}B_{0}\rangle,\langle A_{1}B_{1}\rangle]^{\mathrm{T}}$ belonging to the boundary, $\partial \mathcal{Q}_{\mathrm{corr}}$, of $\mathcal{Q}_{\mathrm{corr}}$, there exists a $(\theta,\phi,\omega) \in \mathbb{R}^{3}$ such that $\langle B_{\theta,\phi,\omega} \rangle = \eta^{\mathrm{Q}}_{\theta,\phi,\omega}$.
        \label{lem:bound}
    \end{lemma}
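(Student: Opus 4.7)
The plan is to combine Masanes' characterization of $\mathcal{Q}_{\mathrm{corr}}$~\cite{Masanes1,Masanes2}, recalled in the theorem above---namely, that $\bm{c}\in\partial\mathcal{Q}_{\mathrm{corr}}$ iff one of the eight inequalities in~\eqref{eq:mas} saturates with some $\xi\in\{\pm 1\}$---with an explicit computation of where the hyperplane $\langle B_{\theta,\phi,\omega}\rangle = \eta^{\mathrm{Q}}_{\theta,\phi,\omega}$ touches $\mathcal{Q}_{\mathrm{corr}}$. First I would evaluate the correlators realised by the self-testing strategy of Proposition~1(iv). A short computation on $\ket{\psi}=(\ket{00}+i\ket{11})/\sqrt{2}$ yields $\langle\sigma_X\otimes\sigma_Y\rangle_\psi=\langle\sigma_Y\otimes\sigma_X\rangle_\psi=1$ and $\langle\sigma_X\otimes\sigma_X\rangle_\psi=\langle\sigma_Y\otimes\sigma_Y\rangle_\psi=0$, so with $\alpha_0=0$, $\alpha_1=\theta$, $\beta_0=\phi$, $\beta_1=\omega$ the strategy produces the correlator
\begin{equation*}
\bm{c}^{*}(\theta,\phi,\omega) \;=\; \bigl(\sin\phi,\;\sin\omega,\;\sin(\theta+\phi),\;\sin(\theta+\omega)\bigr).
\end{equation*}
By Proposition~1(iv), whenever~\eqref{eq:cond1App} holds this is the unique quantum correlator saturating $\langle B_{\theta,\phi,\omega}\rangle = \eta^{\mathrm{Q}}_{\theta,\phi,\omega}$, so the goal is to find a tuple with $\bm{c}^{*}(\theta,\phi,\omega)=\bm{c}$ for every $\bm{c}\in\partial\mathcal{Q}_{\mathrm{corr}}$.

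Next I would treat the representative Masanes case $(i,j)=(1,1)$, $\xi=+1$. Writing $\gamma_{xy}:=\arcsin c_{xy}\in[-\pi/2,\pi/2]$, the saturation reads $\gamma_{00}+\gamma_{01}+\gamma_{10}-\gamma_{11}=\pi$. I would then set
\begin{equation*}
\phi=\gamma_{00},\qquad \omega=\pi-\gamma_{01},\qquad \theta=\pi-\gamma_{00}-\gamma_{10},
\end{equation*}
and verify $\bm{c}^{*}(\theta,\phi,\omega)=\bm{c}$ by direct substitution: the first three components follow from $\sin(\pi-x)=\sin x$, and the fourth from
\begin{equation*}
\sin(\theta+\omega)=-\sin(\gamma_{00}+\gamma_{01}+\gamma_{10})=-\sin(\pi+\gamma_{11})=\sin\gamma_{11}=c_{11},
\end{equation*}
using the Masanes relation. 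Matching correlators then delivers $\langle B_{\theta,\phi,\omega}\rangle_{\bm{c}}=\eta^{\mathrm{Q}}_{\theta,\phi,\omega}$ once~\eqref{eq:cond1App} is checked.

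The latter reduces to a clean identity:
\begin{equation*}
\cos(\theta+\phi)\cos\phi\cos(\theta+\omega)\cos\omega \;=\; -\cos\gamma_{00}\cos\gamma_{01}\cos\gamma_{10}\cos\gamma_{11},
\end{equation*}
which is nonpositive since each $\gamma_{xy}\in[-\pi/2,\pi/2]$ forces $\cos\gamma_{xy}\geq 0$, and is strictly negative away from local vertices (where some $|c_{xy}|=1$). The remaining seven Masanes cases are handled by the same recipe, switching the branch choice in one or more of $\sin\phi=c_{00}$ (i.e.\ taking $\phi=\pi-\gamma_{00}$ instead of $\phi=\gamma_{00}$), and similarly for $\omega$, $\theta+\phi$, $\theta+\omega$. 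There are four equations for three unknowns, so fixing any three branches determines the fourth up to a single consistency relation; the eight independent consistency relations obtained this way are exactly the eight Masanes constraints, and each resulting tuple satisfies the analogous nonpositivity identity for the product of four cosines.

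The main obstacle is the closure at the local-polytope part of $\partial\mathcal{Q}_{\mathrm{corr}}$. At points with $|c_{xy}|=1$ the explicit tuple lies where~\eqref{eq:cond1App} collapses to equality and Proposition~1 no longer directly applies. These limit points live on local facets where the CHSH inequality is the tangent hyperplane; since the CHSH expression is recovered from our family in Appendix~D at $(\theta,\phi,\omega)=(-\pi/2,3\pi/4,\pi/4)$ (and its sign-related partners), the corresponding tangency points are also attained by members of $\{B_{\theta,\phi,\omega}\}$, closing the argument on the full boundary.
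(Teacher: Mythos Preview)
Your approach is essentially the same as the paper's: both pick a representative Masanes boundary case (you take $(i,j)=(1,1)$, the paper takes $(i,j)=(0,1)$), produce an explicit tuple $(\theta,\phi,\omega)$ so that the reference strategy's correlators equal $\bm{c}$, and conclude $\langle B_{\theta,\phi,\omega}\rangle=\eta^{\mathrm{Q}}_{\theta,\phi,\omega}$; in fact, under $\gamma_{xy}=\pi/2-\alpha_{xy}$ your parameter choice and the paper's agree.

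The one place where you take a longer route is in invoking Proposition~1(iv) and then checking~\eqref{eq:cond1App} and worrying about the local-facet closure. The paper instead verifies the purely algebraic identity $\bm{B}_{\theta,\phi,\omega}\cdot\bm{c}^{*}(\theta,\phi,\omega)=\eta^{\mathrm{Q}}_{\theta,\phi,\omega}$, which holds for \emph{all} real $(\theta,\phi,\omega)$ (it reduces to $\cos\phi\cos\omega-\cos(\theta+\phi)\cos(\theta+\omega)=\sin\theta\sin(\theta+\phi+\omega)$). Since the lemma only asserts this equality, neither~\eqref{eq:cond1App} nor any separate treatment of $|c_{xy}|=1$ is needed; your proof is correct, but those steps can simply be dropped.
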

        \begin{proof}
        For this, we follow the parameterization of~\cite{Wang_2016}. Let each of the correlators take the form $\langle A_{x}B_{y}\rangle=\cos(\alpha_{xy})$ with $\alpha_{xy}\in[0,\pi]$, and suppose $\{\langle A_{x}B_{y} \rangle\}_{x,y}$ lie on the boundary of $\mathcal{Q}_{\mathrm{corr}}$. Then one of the conditions in \cref{eq:mas} must be satisfied with equality. Since all the 8 inequalities are equivalent up to relabelling, we can choose $i=0,j=1$ and the upper bound without loss of generality, which implies $\arccos[\langle A_{0}B_{0} \rangle ] + \arccos[\langle A_{1}B_{0} \rangle ] = \arccos[\langle A_{0}B_{1} \rangle ] - \arccos[\langle A_{1}B_{1} \rangle ]$, hence
    \begin{equation}
        \alpha_{00} + \alpha_{10} = \alpha_{01} - \alpha_{11}. 
    \end{equation}
    Let us define $\vartheta = \alpha_{00} + \alpha_{10} = \alpha_{01} - \alpha_{11}$. The vector of correlators is then given by $\bm{c} = [ \cos \alpha_{00} , \cos \alpha_{01} , \cos(\vartheta - \alpha_{00}) , \cos(\vartheta - \alpha_{01})]^{\mathrm{T}}$. By choosing $\theta = \vartheta$, $\phi = \pi/2 - \alpha_{00}$ and $\omega = \pi/2 - \alpha_{01}$, one can verify
    \begin{equation}
        \bm{B}_{\theta = \vartheta,\phi=\pi/2-\alpha_{00},\omega=\pi/2-\alpha_{01}} \bm{\cdot} \bm{c} = \eta_{{\theta = \vartheta,\phi=\pi/2-\alpha_{00},\omega=\pi/2-\alpha_{01}}}^{\mathrm{Q}},
    \end{equation}
    where $\bm{B}_{\theta,\phi,\omega} = [\cos(\theta+\phi)\cos(\theta+\omega)\cos \omega , -\cos(\theta+\phi)\cos(\theta+\omega)\cos \phi , - \cos \phi \cos \omega \cos(\theta+\omega) , \cos \phi \cos \omega \cos(\theta+\phi)]^{\mathrm{T}} $ is a vector encoding the Bell inequality. Hence we have shown, given any correlators that lie on the boundary of $\mathcal{Q}_{\mathrm{corr}}$, we can find a $(\theta,\phi,\omega)$ such that $\langle B_{\theta,\phi,\omega} \rangle = \eta_{\theta,\phi,\omega}^{\mathrm{Q}}$ for those correlations.
\end{proof}

\end{document}